\def\BibTeX{{\rm B\kern-.05em{\sc i\kern-.025em b}\kern-.08em
    T\kern-.1667em\lower.7ex\hbox{E}\kern-.125emX}}
\definecolor{Gray}{gray}{0.9}
\definecolor{Red}{rgb}{1,0.7,0.7 }
\definecolor{Blue}{rgb}{0.6,0.9,0.97 }
\definecolor{Green}{rgb}{0.55,0.92,0.55 }
\newcommand\bigDiamond{\mathop{\mathpalette\bigDi@mond\relax}}
\newcommand\bigDi@mond[2]{%
  \vcenter{\hbox{\m@th
    \scalebox{\ifx#1\displaystyle 2\else1.2\fi}{$#1\Diamond$}%
  }}%
}
\newcommand\bigLozenge{\mathop{\mathpalette\bigL@zenge\relax}}
\newcommand\bigL@zenge[2]{%
  \vcenter{\hbox{\m@th
    \scalebox{\ifx#1\displaystyle 2\else1.2\fi}{$#1\blacklozenge$}%
  }}%
}
\newtheorem{theorem}{Theorem}
\newtheorem{corollary}{Corollary}
\newtheorem{definition}{Definition}
\newtheorem{prop}{Proposition}
\newtheorem{remark}{Remark}
\DeclareMathOperator{\R}{\mathbf{R}}
\DeclarePairedDelimiter{\floor}{\lfloor}{\rfloor}
\begin{document}


\title{Privacy-Preserving Database Fingerprinting} 

\author{\IEEEauthorblockN{Tianxi Ji}
\IEEEauthorblockA{\textit{Dept.  of ECSE} \\
\textit{Case Western Reserve Univ.}\\
txj116@case.edu}
\and
\IEEEauthorblockN{Erman Ayday}
\IEEEauthorblockA{\textit{Dept.   of CSDS} \\
\textit{Case Western Reserve Univ.}\\
exa208@case.edu}
\and
\IEEEauthorblockN{Emre Yilmaz}
\IEEEauthorblockA{\textit{Dept.  of CSET} \\
\textit{Univ. of Houston-Downtown}\\
yilmaze@uhd.edu}
\and
\IEEEauthorblockN{Pan Li}
\IEEEauthorblockA{\textit{Dept.  of ECSE} \\
\textit{Case Western Reserve Univ.}\\
lipan@case.edu}
}

\maketitle
\thispagestyle{plain}
\pagestyle{plain}
\begin{abstract}

When sharing sensitive relational databases   with other parties, a database owner aims to (i) have privacy guarantees for the entries in the shared database, (ii) have liability guarantees (e.g., via fingerprinting) in case of unauthorized sharing of its database by the recipients, and (iii) provide a high  quality (utility) database to the recipients. 
We observe that sharing a relational database  with privacy and liability guarantees  
are orthogonal objectives. 
The former can be achieved by injecting noise into the database to  prevent inference of the original data values, whereas, the latter can be achieved by hiding unique marks inside the database to trace malicious parties (data recipients) who redistribute the data without the authorization. 
In this paper, we achieve these two objectives simultaneously by proposing a novel  entry-level differentially-private fingerprinting mechanism for relational databases. 

At a high level, the proposed mechanism fulfills the privacy and liability requirements by   leveraging the randomization nature that is intrinsic to fingerprinting and achieves desired entry-level privacy guarantees. 
To be more specific, we devise a bit-level random response scheme to achieve differential privacy guarantee for arbitrary data entries when sharing the entire database, and then, based on this, we develop an $\epsilon$-entry-level differentially-private fingerprinting mechanism. Next, 
we theoretically analyze the relationships between privacy guarantee, fingerprint robustness, and database utility by  deriving closed form expressions. The outcome of this analysis allows us to  bound the privacy leakage caused by attribute inference attack and characterize  the privacy-utility coupling and privacy-fingerprint robustness coupling. Furthermore, we also propose a sparse vector technique (SVT)-based solution to control the cumulative privacy loss when fingerprinted copies of a database are shared with multiple recipients. 

We experimentally show that our proposed mechanism achieves stronger fingerprint robustness than the state-of-the-art mechanisms (e.g., the fingerprint cannot be compromised even if a malicious recipient changes 80\% of the entries in the fingerprinted database), and higher database utility compared to the simple composition of database perturbation under local  differential privacy followed by fingerprinting (e.g., the statistical utility of the shared database by the proposed scheme is more than $10\times$ higher than  perturbation followed by fingerprinting). 
\end{abstract}

\begin{IEEEkeywords}
database, privacy, liability, fingerprinting
\end{IEEEkeywords}

\section{Introduction}
\label{sec:intro}

Massive data collection and availability of relational databases (collection of data records with the same set of attributes \cite{codd2002relational}) 
are very common in the current big data era. This   results in an increasing demand to share such   databases  with (or among) different service providers (SPs), such as companies, research institutions, or hospitals, for the purpose of ``do-it-yourself'' calculations, like personal advertisements, social recommendations, and customized healthcare.

Most relational databases include personal data, and hence they usually contain sensitive and proprietary information, e.g., medical records collected as part of an agreement which restricts redistribution. This poses three major challenges in the course of database sharing with different SPs: (i) the database owner is obligated to  protect the privacy of data entries in the shared database to comply with the privacy policy and ensure confidentiality, (ii) the database owner needs to prevent illegal redistribution of the shared databases, and eventually should be able to prosecute  the malicious SPs who leak its data, 
and (iii) the shared database needs to maintain high utility to support accurate data mining and data analysis.

\subsection{Current Status} \label{sec:current}
Many works have attempted to address the    challenges on privacy and liability in isolation. To address the privacy challenge, various  data sanitization  metrics are proposed, e.g., $k$-anonymity \cite{sweeney2002k}, $l$-diversity \cite{machanavajjhala2007diversity}, $t$-closeness \cite{li2007t}, and differential privacy \cite{dwork2014algorithmic}. Among them, differential privacy has been developed  as a \textit{de facto} standard for responding to statistical queries from databases with provable privacy guarantees. It can also be used  to share personal data streams or an entire database (i.e., identity query) in a privacy-preserving manner \cite{chanyaswad2018mvg,ji2021differentially}. Differentially-private mechanisms hide the presence or absence of a data record in the database by perturbing the query results with noise calibrated to the query sensitivity. 

To protect copyright and deter illegal redistribution, different database watermarking and fingerprinting mechanisms are devised to prove database ownership (i.e., identification of the database owner from the shared database)~\cite{agrawal2003watermarking,sion2004rights} and database possession (i.e., differentiating between the SPs who have received copies of the database) \cite{li2005fingerprinting,ji2021curse,yilmaz2020collusion}. In particular, when sharing a database with a specific SP, the database owner embeds a unique fingerprint (a binary string customized for the data recipient SP) in the database. The embedded fingerprint is typically hard to be located and removed even if a malicious SP attacks the fingerprinted database (to identify and distort the fingerprint). 
 
In the literature, only a few works have attempted to combine database  sanitization and fingerprinting during database sharing. In particular, \cite{gambs2018entwining,bertino2005privacy,kieseberg2014algorithm}  propose inserting fingerprints into  databases  sanitized using $k$-anonymity and \cite{schrittwieser2011algorithm} proposed embedding fingerprints into a database   sanitized by the $(\alpha,\beta)$-privacy model \cite{rastogi2007boundary}. 
However, these works solve the aforementioned challenges in a two-stage (sequential) manner, where data sanitization is conducted followed by fingerprinting. As a result, they end up changing a significant amount of entries in the database and they significantly compromise the utility of the shared database. We experimentally corroborate this in Section \ref{sec:experiment}.  
These works also do not address the critical problem of controlling cumulative privacy loss if the same database is repeatedly shared with multiple SPs.

\subsection{This Paper}\label{sec:pics}

In this work, we bring together data sanitization and fingerprinting in a unified data sharing algorithm, consider a stronger privacy model compared to previous works, and develop entry-level  differentially-private fingerprinting for relational database sharing. In what follows, we briefly summarize the main contributions and insights of our work,  and discuss its limitation caused by a unique requirement of DBMS  (Database  Management  System)  design.


\noindent\textbf{Contributions.} We observe that database fingerprinting is a   randomized mechanism (which essentially performs bitwise randomization), and thus is naturally endowed with certain level of privacy protection. However, this hidden property (privacy protection) is ignored in the literature. In this paper, we harness this intrinsic randomness  and   transform it into a provable privacy guarantee. In particular, 
\begin{itemize}
\item We propose a bit-level random response scheme, which fingerprints   insignificant bits of data entries using pseudorandomly generated binary mark bits, to achieve $\epsilon$-entry-level differential privacy (Definition \ref{def_dp}) for the entire database. Then, we devise an   $\epsilon$-entry-level differentially-private fingerprinting mechanism (Algorithm \ref{algo:fingerprint_sp_n}) based on the   bit-level random response scheme.

 \item We establish a comprehensive and solid theoretical foundation (Section \ref{sec:connections}) to quantify  the properties of the proposed $\epsilon$-entry-level differentially-private fingerprinting mechanism from 3 dimensions: (i)  the privacy guarantee of it under attribute inference attack, (ii) the fingerprint robustness of the mechanism when it is subject to various attacks that can be launched against a fingerprinting mechanism, and (iii)  the relationship among privacy and utility,  and privacy and fingerprint robustness.
 
  
\item We devise a sparse vector technique (SVT)-based solution (Algorithm \ref{algo:MultipleFingerprintedDatabasesReleasing}) to control the cumulative privacy loss when fingerprinted databases are shared with multiple SPs.
    
\item We evaluate and validate the proposed mechanism using a real-life   database. Our experimental results show that  the proposed 
  mechanism 
  (i) provides higher fingerprint robustness than a state-of-the-art database  fingerprinting mechanism, (ii) provides higher database utility than the na\"ive solution, which first perturbs the database under  local  differential privacy guarantee, and then inserts the fingerprint, and (iii) achieves high database utility  along with high privacy guarantee and fingerprint robustness even when the database is shared for multiple times. 
\end{itemize}

\noindent\textbf{Insights.} This paper is the first to show  the  feasibility of   considering  privacy  and  liability in a unified mechanism to protect data privacy and prevent unauthorized data redistribution simultaneously. 
The proposed mechanisms can be used to guide a database owner to (i) generate    privacy-preserving fingerprinted databases based on customized requirements on utility, privacy level, and fingerprint robustness and (ii) assess the privacy leakage under multiple database sharings and  set the privacy budget accordingly in each sharing. 

By harnessing the intrinsic randomness in fingerprinting to bridge privacy and liability, we  believe that our  work will  draw  attention  to  other  challenges  and  urgent  research problems along this direction including mitigating  both  attribute inference attacks and  fingerprinting distortion  that utilizes the correlations among data entries, addressing the collusion attack lunched by allied malicious SPs, and improving the database utility via Bayesian denoising that leverages the data distributions. In Section \ref{sec:discussion}, we provide potential solutions to each of these open problems.


\noindent\textbf{Limitations.} In this work, we consider the sharing of entire relational databases, where each data record (i.e., row) can be uniquely identified by a immutable pseudo-identifier (i.e., the primary key). 
This is a unique requirement of DBMS  (Database  Management  System)  design, and hence in this work, we do not consider membership inference attacks as they become irrelevant under these settings.  
We further discuss this in detail in Section \ref{sec:system_model}.

\noindent\textbf{Paper organization.} 
In Section~\ref{sec:related_work} we review related works in the literature, which is followed by 
the privacy, system, and threat models 
in Section \ref{sec:system_model}.  In Section \ref{sec:dp_fp}, we present   the proposed entry-level differentially-private fingerprinting mechanism. Then, we theoretically investigate the relationships between database utility, fingerprint robustness, and privacy guarantees in Section~\ref{sec:connections}. We develop the SVT-based mechanism to share multiple fingerprinted databases under entry-level differential privacy in Section~\ref{sec:multiple_sharing}.  We   evaluate the proposed scheme via extensive experiments in Section~\ref{sec:experiment}.  We provide further discussions and point out   open problems with potential solutions in Section \ref{sec:discussion}. Finally, Section~\ref{sec:conclusion} concludes the paper.

\section{Related Work}
\label{sec:related_work}
Quite a few works have studied the problem of protecting data privacy  and ensuring  liability in isolation when sharing databases. 
The works that are closest to ours include~\cite{gambs2018entwining,bertino2005privacy,kieseberg2014algorithm,schrittwieser2011algorithm}. Yet, these works embed watermark or fingerprint into an already sanitized database, instead of considering data sanitization and marking together (as a unified process). Thus,  such sequential  processing of database will result in significant degradation in utility. To be more specific,  Bertino et al. \cite{bertino2005privacy} adopted the binning method \cite{lin2002using} to generalize the database first, and then watermark the binned data to protect copyright. Kieseberg et al. \cite{kieseberg2014algorithm} and Schrittwieser et al. \cite{schrittwieser2011algorithm} proposed fingerprinting a database generalized by $k$-anonymity. Gambs et al. \cite{gambs2018entwining} sanitized the database using the $(\alpha,\beta)$-privacy model \cite{rastogi2007boundary}, which selects a true data record from the domain of the database with probability $\alpha$ and includes a fake data record that is not in the domain of the database with probability $\beta$,  and then they embed personalized fingerprint in the database. These studies usually change a large amount of data entries, which degrades database utility. 
In particular, it has been observed that $k$-anonymity may create data records that leak information due to the 
lack of diversity in some sensitive attributes, and it  does not protect against attacks based on background knowledge \cite{ganta2008composition}. 
Note that we do not include a comparison   with these mechanisms, because their adopted privacy models cannot be transformed into differential privacy. However, in Section \ref{sec:experiment}, we compare the proposed scheme with the approach that first perturbs the database entries under local differential privacy, and then   fingerprints the result using a state-of-the-art database fingerprinting scheme.

Our work is different from the previous mechanisms, because we consider sanitization and fingerprinting as a unified process (to have higher data utility), we aim at achieving provable privacy guarantees during fingerprint insertion, we adopt a privacy model that is customized for relational database systems, i.e., entry-level differential privacy, and we establish a connection among fingerprint robustness, data privacy, and data utility in a coherent way.



\section{Privacy, System, and Threat Models}\label{sec:system_model}

Now, we discuss the considered privacy model, database fingerprinting system, and various threats. First, we review the definition of a relational database and its unique features, which are important for our specific choice of privacy model.

\begin{definition}[Relational database \cite{codd2002relational}] A relational database   denoted as $\R$ is a  collection of $T$-tuples. Each of these tuples represents a data record containing $T$ ordered attributes. Each data record is also associated with a primary key, which is used to uniquely identify that record. We denote the $i$th data record in $\R$ as $\mathbf{r}_i$ and the primary key of $\mathbf{r}_i$ as $\mathbf{r}_i.PmyKey$.
\end{definition}

\noindent\textbf{Unique Features of Relational Database.} In order to support database operations, such as union, intersection, and update, the primary keys should \textbf{not} be changed   if a database is fingerprinted or pirated  \cite{li2005fingerprinting,agrawal2003watermarking,li2003constructing}.\footnote{In Database Management System (DBMS) design, the primary keys are required to be immutable, as updating a primary key can   lead to the update of potentially many other tables or rows in the system. The reason is that in DBMS, a primary key may also serve as a foreign key (a column that creates a relationship between two tables in DBMS). For instance, consider the database in Section \ref{sec:experiment} in which each data record represents a school applicant. Here, the primary key of the data record can be chosen as the applicant's unique identification number. The identification number can then be used to refer to another table keeping the applicant's real name, email, etc. Thus, the update of the primary key causes further changes in other tables.}  
 Due to the uniqueness and immutability of the primary keys in relational databases, the presence of a specific data record is not a private information in general. In other words, it is no secret  whether an individual's data record (a specific $T$-tuple) is present in a database or not. 
Hence, the common definition of neighboring databases (which differ by one row) in the differential privacy literature does not apply in the case of sharing relational databases. Thus, we consider an alternative definition of neighboring relational databases as follows.

\begin{definition}[Neighboring  relational databases]\label{def:neighboring}
 Two   relational databases $\R$ and $\R'$ are called  neighboring databases, 
 if they only differ by   \textbf{one entry}, i.e., an attribute of one individual's data.
\label{def:nei_db}
\end{definition}
Given Definition \ref{def:neighboring}, we have the sensitivity of a relational database defined as follows.  

\begin{definition}
[Sensitivity of  a relational database]
Given a pair of neighboring relational databases $\R$ and $\R'$ that differ   by one entry (e.g., the $t$th attribute of the $i$th row, $\mathbf{r}_{i}[t]$ and $\mathbf{r}_{i}[t]'$), 
the    sensitivity  is defined as $  \Delta = \sup_{\R,\R'}||\R - \R'||_F = \sup_{\mathbf{r}_{i}[t],\mathbf{r}_{i}[t]'}\Big|\mathbf{r}_{i}[t] -  \mathbf{r}_{i}[t]'\Big|$.
\label{def:bit_sensitivity}
\end{definition}

\subsection{Privacy Model}\label{sec:privacy_model}
Next, we provide our privacy model customized specifically for relational databases with immutable primary keys (i.e., data record pseudo-identifiers).

\begin{definition}[$\epsilon$-entry-level differential privacy] 
\label{def_dp}
A randomized mechanism $\mathcal{M}$ with domain $\mathcal{D}$ satisfies $\epsilon$-entry-level differential privacy if for any two neighboring relational databases $\R, \R' \in \mathcal{D}$, and for all $\mathcal{S} \in \rm{Range}(\mathcal{M})$, it holds that $\Pr[\mathcal{M}(\R)= S] \leq e^{\epsilon}\Pr[\mathcal{M}(\R') =  S]$, where $\epsilon>0$. 
\end{definition}

\begin{remark}
Our proposed privacy model (Definition \ref{def_dp}) is adapted from the conventional notation of $\epsilon$-differential privacy \cite{dwork2014algorithmic}, which obfuscates the presence or absence of an entire row in $\R$. Since the database recipient can easily identify if an individual is present in $\R$ by directly checking its primary key, the conventional $\epsilon$-differential privacy is \textbf{not appropriate} in the setting we consider. In contrast, our privacy model, which aims at  obscuring the specific value of an arbitrary entry in $\R$, better suits the requirement of database management system (DBMS) design. As discussed in Section \ref{sec:pics},     destroying       pseudo-identifiers   to   prevent linkability or membership inference attacks becomes an ill-posed problem for our considered case of DBMS. Thus, we will focus on the attribute inference attacks instead of membership inference attacks  in the paper.
\end{remark}

Similar to the conventional differential privacy, a   relaxation of Definition \ref{def_dp} is the $(\epsilon,\delta)$-entry-level differential privacy, i.e.,  $\Pr[\mathcal{M}(\R)= S] \leq e^{\epsilon}\Pr[\mathcal{M}(\R') =  S]+\delta$, where $\delta\in[0,1]$.  In Section \ref{sec:multiple_sharing}, we will analyze the achieved $(\epsilon,\delta)$-entry-level differential privacy guarantee  when the  proposed  privacy-preserving  fingerprint scheme is executed multiple times for repeated database sharing using the sparse vector technique.  Table \ref{table:notations} in  Appendix \ref{sec:notations} lists the frequently used notations.

\subsection{Database Fingerprinting System}\label{sec:system}

We present the  system model  in Figure \ref{fig:model}. We consider a  database owner with a  relational database  denoted as $\mathbf{R}$, who wants to share it with at most $C$ SPs (e.g., to receive services, to help researchers, or for collaborative research). 
To prevent unauthorized redistribution of the database by a malicious SP (e.g., the $c$th SP in Figure \ref{fig:model}), the database owner includes unique fingerprints in all shared copies of the database. 
The fingerprint  essentially changes different entries in $\mathbf{R}$ at different positions (indicated by the yellow dots). 
The fingerprint bit-string customized for the $c$th SP  ($\mathrm{SP}_c$) is denoted as $f_{\mathrm{SP}_c}$, and the database received by $\mathrm{SP}_c$ is represented as $\widetilde{\R}_c$. 
Both $f_{\mathrm{SP}_c}$ and $\widetilde{\mathbf{R}}_c$ are obtained  using the proposed mechanism  discussed in Section \ref{sec:dp_fp}.   We   let $\widetilde{\R}$   represent a general instance of the privacy-preserving fingerprinted database.

\begin{figure}[htb]
  \begin{center}
     \includegraphics[width= 0.98\columnwidth]{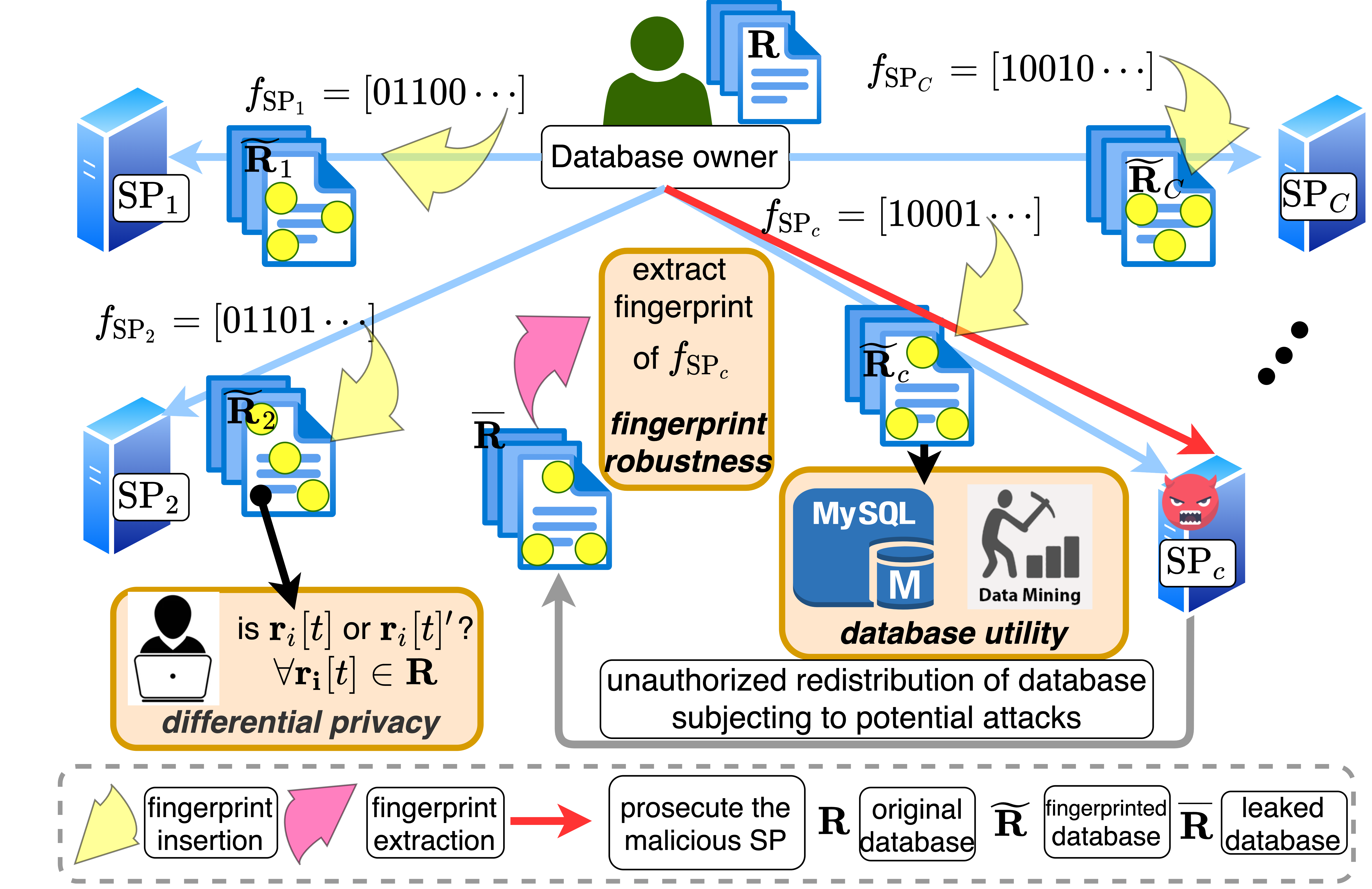}
      \end{center}
  \caption{\label{fig:model}  System model  considered in this paper. All shared copies of the database meet entry-level differential privacy, fingerprint robustness, and database utility requirements.
}
\end{figure}


We aim to achieve three main goals in this system, i.e., 
(i)  Privacy guarantee for each data entry, i.e., a data analyst cannot distinguish between $\mathbf{r}_i[t]$ and $\mathbf{r}_i[t]'$ by inferring its received copy $\widetilde{\R}$. 
  (ii)      High fingerprint robustness in order to successfully extract a malicious SP's fingerprint (even if the malicious SP tries to distort the fingerprint to mitigate detection) if the database is redistributed without authorization.
(iii)  High data utility for the fingerprinted database in order to support accurate database queries and data mining tasks.


\subsection{Potential Threats}\label{sec:threats}


Since we consider developing a mechanism to simultaneously achieve data  privacy and liability guarantees, we also need to address the corresponding threats   from these  two aspects. In particular, the malicious SP can 
\begin{itemize}
    \item  Adopt sophisticated learning     methods to infer the original values of each data entry (in the shared database)  by using its prior knowledge or other revealed data entries. In particular, we consider an adversary who knows all data entries except for one, and it will use advanced learning methods to infer the original value of the unknown data entry. In Section \ref{sec:discussion}, we discuss how to augment our proposed mechanism to defend against attribute inference attacks that leverage the correlations among data entries.

    \item Conduct various attacks to distort the  embedded fingerprint bit-strings, e.g., random bit flipping attack (which changes bits of randomly selected data entries), subset attack (which randomly removes fractions of data records from the database), and correlation attack (which changes the data entries   by taking advantage of data correlations available from other resources). In Section \ref{sec:connections}, we discuss these attacks in detail and derive closed-form robustness  expression achieved by our mechanism. 
\end{itemize}



\section{Privacy-preserving Fingerprinting}
\label{sec:dp_fp}

In this section, we present the proposed privacy-preserving fingerprinting mechanism. First, we discuss the design principles of the mechanism. Next, we develop a general condition for  bit-level random response to achieve entry-level  differentially-private database release (or sharing). Then, we propose a concrete mechanism built upon  the  bit-level random response scheme to achieve provable privacy guarantees during fingerprint insertion.

\subsection{Principles of Mechanism Design}\label{sec:design}

First of all, we notice that all fingerprinting schemes achieve liability guarantees on each shared copy by flipping different collections  of randomly selected   bits of the data entries using a certain probability. The collections of selected bits vary for different SPs and their fingerprinted values are determined by the unique   fingerprint  bit-strings of the SPs. Thus, we observe that database fingerprinting schemes are randomized mechanisms, which essentially 
perform  bitwise-randomization, i.e., change the data values  by introducing noise at the bit-level of the data entries instead of directly perturbing the data entries (i.e., introducing noise at the enter-level). As a consequence, we also establish our entry-level differentially-private  fingerprinting scheme by conducting bitwise-randomization, and to achieve  provable privacy guarantee, we calibrate the flipping probability based on the sensitivities of the data entries.

On the contrary, the entrywise-randomization adopted by the conventional differentially-private output perturbation mechanisms, e.g., \cite{chanyaswad2018mvg,ji2021differentially}, are  infeasible as a building block of a fingerprinting mechanism, because they substantially  change all data entries by adding noises drawn from some probability distributions. Although local differential privacy via randomized response only changes each data entry with a particular probability determined by the privacy budget and the number of possible instance of the entries \cite{bassily2020practical}, connecting this probability with the randomly generated fingerprint bit-string is not straightforward. This is because randomly changing each bit of each data entry (by fingerprinting)  may not lead to the identical random effect required by local differential privacy.  Hence, it is also not suitable for  designing a database fingerprinting mechanism.



Based on the principle of bitwise-randomization, we consider achieving entry-level differential privacy for the release of the entire database by using bit-level random response. In particular, when sharing a database with a specific SP, the values of    selected   insignificant bits of selected  data entries are determined  by XORing   them with  random binary variables, which is different for different data sharing instances (with various SPs). Such modification of  bit positions in the database using different binary values  can also be considered as inserting different fingerprints, which can be used to accuse a malicious SP if there is a data leakage.  Moreover, to achieve high utility for the shared database, we simultaneously achieve entry-level differential privacy and fingerprinting, instead of fingerprinting a differentially-private database (or sanitizing an already fingerprinted database). This will be further  elaborated in Section \ref{sec:connections} and validated in Section \ref{sec:experiment}.




\subsection{Privacy-preserving Database Release via Bit-level Randomization}

Traditional differential privacy guarantees that the released 
statistics computed from a database (such as mean or histogram) are independent of the absence or presence of an individual. However, in this work we consider the release (sharing) of the entire fingerprinted  database and the existence of a particular individual can be easily determined by checking its corresponding primary key in the released copy (as discussed in Section \ref{sec:system_model}).  Therefore, in this work, we consider the privacy of database entries (i.e., attributes of individuals). Many works have also studied the problem of achieving differential privacy for entries in a database, e.g., \cite{chaudhuri2006random,pejo2020sok,ji2021differentially,johnson2018towards}. Our work differs from them as  we can achieve additional liability guarantees by   introducing randomness at bit level and uniting fingerprinting with   privacy.

Now, we   define the   bit-level random response scheme. 

\begin{definition}
[Bit-level random response scheme] A bit-level random response scheme (pseudorandomly) selects some bits of some data entries in the database and changes the bit values of such entries by conducting an XOR operation on them with independently  generated random  binary mark bits, denoted as $B$, where $B\sim\mathrm{Bernoulli}(p)$.
\label{def:fp_mech}
\end{definition}

Existing database  fingerprinting schemes only mark the insignificant bits of the data entries to introduce  tolerable error in the database. In this paper, we assume that  the $k$th to the last bit of an entry is its $k$th  insignificant bit.  If the  $k$th  insignificant bit of attribute $t$ of data record $\mathbf{r}_i$ (represented  as  $\mathbf{r}_i[t,k]$) is selected, then the bit-level random response scheme changes its value as $\mathbf{r}_i[t,k]\oplus B$, where $\oplus$ stands for the XOR operator, and $B$ is a Bernoulli random  variable with parameter $p$.

As discussed in Section \ref{sec:system}, the modified database should have small amount of error, because both database owner and data recipient SPs expect high utility for the shared (and received) relational databases. Thus, to obtain high database utility, we let the bit-level random response scheme only change the last $K$ bits of data entries, i.e., $k\in[1,K]$. As a result, we can develop the following condition for such a scheme to achieve $\epsilon$-entry-level differential privacy on the entire database. The proof is deferred to Appendix \ref{sec:proof_thm_dpfp}.

\begin{theorem}\label{thm:dpfp}
Given a relational database $\R$ with  sensitivity $\Delta$ (Definition \ref{def:bit_sensitivity}), 
a bit-level random response scheme, which only changes the last $K$ bits of data entries, satisfies $\epsilon$-entry-level differential privacy if $K = \floor[]{\log_2\Delta}+1$ and $p\geq \frac{1}{e^{\epsilon/K}+1}$.
\end{theorem}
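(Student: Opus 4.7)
The plan is to check the ratio condition in Definition~\ref{def_dp} directly, exploiting the fact that the Bernoulli masks applied by the mechanism are independent across bit positions and that the pseudorandom selection of which positions to touch does not depend on the data. I would fix any neighboring pair $\R, \R'$; by Definition~\ref{def:neighboring} they agree on every entry except one, say $\mathbf{r}_i[t]$ versus $\mathbf{r}_i[t]'$, so the joint output distribution factorizes across entries and every factor except the one for entry $(i,t)$ is identical in the numerator and denominator of $\Pr[\mathcal{M}(\R)=S]/\Pr[\mathcal{M}(\R')=S]$. The whole problem then reduces to bounding the per-entry ratio on the last $K$ bits of the single differing entry.

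For that entry, I would first count how many of the last $K$ bits can actually differ. Since $K=\lfloor\log_2\Delta\rfloor+1$ gives $\Delta<2^K$, the two values agree on every bit of order higher than $K$ and can disagree in at most all $K$ of the last $K$ positions. At a position where they already agree, the XOR-with-$\mathrm{Bernoulli}(p)$ response contributes ratio $1$ to the product; at a disagreeing position, the same output bit is produced with probability $1-p$ from one input and probability $p$ from the other, so the per-bit ratio is at most $(1-p)/p$ in the regime $p\le 1/2$ (which is what the target threshold will enforce). Independence of the $K$ masks then yields the overall bound $\bigl((1-p)/p\bigr)^K$ on the per-entry ratio, and hence on the full database ratio.

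The last step is purely algebraic: requiring $\bigl((1-p)/p\bigr)^K \le e^{\epsilon}$, taking the $K$-th root and rearranging, gives exactly $p \ge 1/(e^{\epsilon/K}+1)$, which is the hypothesis.

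The main subtlety, which I would state as an explicit assumption before the product-over-bits step, is the bit-alignment claim ``only the last $K$ bits can differ''. This does not follow from $|a-b|<2^K$ alone (consider $a=7$, $b=8$, which differ by $1$ but in four bit-positions), and would require reading the sensitivity as constraining the difference on the last $K$ bits specifically, e.g.\ by assuming attribute values live in a block of $2^K$ consecutive integers whose higher-order bits are fixed across the domain. Without such an alignment assumption, a high-order bit could differ between $\mathbf{r}_i[t]$ and $\mathbf{r}_i[t]'$, and since the mechanism never touches that bit the ratio would be infinite on outputs that agree with neither. I would also verify that the pseudorandom bit-selection pattern is coupled across the two hypothetical databases so that no bit outside the last $K$ positions can enter the ratio at all; once these two points are pinned down, the remainder is a clean independent-Bernoullis calculation.
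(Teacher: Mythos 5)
Your proposal is correct and follows essentially the same route as the paper's proof: cancel the factors for the identical entries, reduce to a product of per-bit ratios over the last $K$ bits of the single differing entry, bound each factor by $(1-p)/p$ in the regime $p\le 1/2$, and solve $\bigl((1-p)/p\bigr)^K\le e^{\epsilon}$ to get $p\ge 1/(e^{\epsilon/K}+1)$. The bit-alignment caveat you flag is genuine and is not resolved in the paper's own argument either --- its first step silently assumes the two values of the differing entry agree outside the last $K$ bit positions, which, as your $7$ versus $8$ example shows, does not follow from $\big|\mathbf{r}_i[t]-\mathbf{r}_i[t]'\big|\le\Delta<2^K$ alone --- so making that an explicit domain assumption is the right call.
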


\subsection{An  $\epsilon$-Entry-level Differentially-Private Fingerprinting Mechanism}\label{sec:an_instant}


Due to the randomness involved in the bit-level random response scheme, for any given $p$ and $\R$, the output databases will vary for each different run. However, in order to   detect the guilty SP who is responsible for the leakage of a database, it is required that the fingerprinted database shared with a specific SP must be unique and it can be reproduced by the database owner  even if the mark bits, i.e., $B$'s, are generated  randomly. 
In this section, we discuss how to develop an instantiation of an $\epsilon$-entry-level differentially-private fingerprinting mechanism based on the bit-level random response scheme, i.e., a mechanism that satisfies Theorem \ref{thm:dpfp} and at the same time, is reproducible when sharing a fingerprinted copy with  any  specific SP  using  a given Bernoulli distribution parameter  $p$.






First, we collect all fingerprintable bits in $\R$, i.e., all   insignificant bits (the last $K$ bits) of   all entries, in a set $\mathcal{P}$:\\
$\mathcal{P} = \Big\{\mathbf{r}_i[t,k]\Big|i\in[1,N],t\in[1,T],k\in[1,\min\{K,K_t\}]\Big\}$, 
where $N$ is the number of data record in $\R$, and $K_t$ represents the number of bits to encode the $t$th attribute in $\R$. 
When the database owner wants to share a fingerprinted copy of $\R$ with an SP with a publicly known external ID denoted as  $ID_{\mathrm{external}}$, it first generates an internal ID for this SP denoted as  $ID_{\mathrm{internal}}$. We will elaborate the generation of $ID_{\mathrm{internal}}$ in Section \ref{sec:inter_step}. Then, the database owner generates the unique fingerprint for this SP using a cryptographic hash function, i.e., $\mathbf{f} = Hash(\mathcal{Y}|ID_{\mathrm{internal}})$, where $\mathcal{Y}$ is the secret key of the database owner and $|$ represents the concatenation operator. We use $L$ to denote the length of the generated fingerprint.\footnote{We use MD5 to generate a 128-bits fingerprint string, because if the database owner shares $C$ copies of its database, then as long as $L\geq \ln C$, the fingerprinting mechanism can thwart exhaustive 
search and various types of attacks. Usually, a 64-bits
fingerprint   can provide high robustness \cite{li2005fingerprinting}. }

The database owner also has  a cryptographic pseudorandom sequence generator $\mathcal{U}$, which   selects the data entries and their insignificant bits, and determines the mask bit $x$ and  fingerprint bit $f$ (which is an element of the fingerprint bit-string $\mathbf{f}$) to obtain the Bernoulli random   variable (i.e., $B = x\oplus f$). To be more specific, for each $\mathbf{r}_i[t,k]$ in $\mathcal{P}$,  the database owner sets the  initial seed  as $s = \{\mathcal{Y}|\mathbf{r}_i.PmyKey|t|k\}$. If $\mathcal{U}_1(s)\ \mathrm{mod}\ \floor{\frac{1}{2p}} =0$ ($p= \frac{1}{e^{\epsilon/K}+1}$), then $\mathbf{r}_i[t,k]$ is fingerprinted. Next, the database owner decides the value of mask bit $x$ by checking if  $\mathcal{U}_2(s)$ is even or odd, and sets the fingerprint index $l = \mathcal{U}_3(s)\ \mathrm{mod}\ L$. It obtains the mark bit $B = x \oplus \mathbf{f}(l)$, and finally changes the bit value of $\mathbf{r}_i[t,k]$ with $\mathbf{r}_i[t,k]\oplus B$. We summarize the   steps to generate a fingerprinted database $\mathcal{M}(\R)$ for SP $ID_{\mathrm{external}}$ in Algorithm \ref{algo:fingerprint_sp_n}.

\begin{algorithm}
\small
\SetKwInOut{Input}{Input}
\SetKwInOut{Output}{Output}
\Input{The original   database $\mathbf{R}$, the privacy budget $\epsilon$, number of changeable bits $K$, the Bernoulli distribution parameter  $p= \frac{1}{e^{\epsilon/K}+1}$, database owner's secret key $\mathcal{Y}$, and pseudorandom number sequence generator $\mathcal{U}$. 
}
\Output{\resizebox{0.95\hsize}{!}{$\epsilon$-differentially-private fingerprinted database $\mathcal{M}(\R)$.}}

Construct the fingerprintable set $\mathcal{P}$.

Generate the internal ID, i.e.,  $ID_{\mathrm{internal}}$ for this SP (will be elaborated in Section \ref{sec:inter_step}).\label{line:set_ID}

\resizebox{1\hsize}{!}{Generate the fingerprint   string, 
i.e., $\mathbf{f} = Hash(\mathcal{Y}|ID_{\mathrm{internal}})$.}\label{line:generate_fp}


\ForAll{$\mathbf{r}_i[t,k]\in\mathcal{P}$}{

Set pseudorandom seed $s = \{\mathcal{Y}|\mathbf{r}_i.PmyKey|t|k\}$, 


\If{$\mathcal{U}_1(s)\ \mathrm{mod}\  \floor{\frac{1}{2p}} = 0$}{

\resizebox{1\hsize}{!}{Set mask bit $x = 0$, if $\mathcal{U}_2(s)$ is even; otherwise    $x=1$.}

Set fingerprint index $l=\mathcal{U}_3(s)\ \mathrm{mod}\ L$.

Let fingerprint bit $f = \mathbf{f}(l)$.

Obtain mark bit $B = x\oplus f$.

Set $\mathbf{r}_i[t,k]= \mathbf{r}_i[t,k]\oplus B$. \algorithmiccomment{insert fingerprint}


}

}
 Return the fingerprinted database $\mathcal{M}(\R)$.

		\caption{Generate $\mathcal{M}(\R)$ for SP $ID_{\mathrm{external}}$.}
\label{algo:fingerprint_sp_n}
\end{algorithm}


\begin{theorem}\label{corollary:dpfp_share}
Algorithm \ref{algo:fingerprint_sp_n} is $\epsilon$-entry-level differentially-private.
\end{theorem}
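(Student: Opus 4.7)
The plan is to reduce the statement to Theorem \ref{thm:dpfp} by showing that Algorithm \ref{algo:fingerprint_sp_n} is a concrete instantiation of the bit-level random response scheme of Definition \ref{def:fp_mech} with the Bernoulli parameter $p = 1/(e^{\epsilon/K}+1)$ and with only the last $K = \floor{\log_2\Delta}+1$ bits of any entry being modified. Once this is established, Theorem \ref{thm:dpfp} directly yields $\epsilon$-entry-level differential privacy of $\mathcal{M}(\R)$.

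First I would fix an arbitrary fingerprintable bit $\mathbf{r}_i[t,k]\in\mathcal{P}$ and track the distribution of the net flip applied to it. The algorithm modifies this bit only if (i) the selection test $\mathcal{U}_1(s)\ \mathrm{mod}\ \floor{1/(2p)} = 0$ succeeds, and (ii) the resulting mark bit $B = x\oplus f$ equals $1$. Treating $\mathcal{U}$ as a cryptographic pseudorandom source (so its outputs are, for the analysis, independent and uniform), the selection test succeeds with probability $2p$. Conditioned on selection, $x$ is $0$ or $1$ each with probability $1/2$, because $\mathcal{U}_2(s)$ is independently pseudorandom, and since $f=\mathbf{f}(l)$ is a bit that is fixed given the SP and the seed, $B = x\oplus f$ is uniform on $\{0,1\}$ regardless of $f$. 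Hence the bit is flipped with overall probability $2p\cdot \tfrac12 = p$, so the effective randomization of $\mathbf{r}_i[t,k]$ is exactly $\mathrm{Bernoulli}(p)$, matching Definition \ref{def:fp_mech}.

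Next I would argue that the flips at different positions in $\mathcal{P}$ are independent. This follows because the seed $s = \{\mathcal{Y}|\mathbf{r}_i.PmyKey|t|k\}$ is distinct across $(i,t,k)$ tuples (the primary key is unique and the $(t,k)$ coordinates distinguish bits within a record), so the pseudorandom calls $\mathcal{U}_1(s),\mathcal{U}_2(s),\mathcal{U}_3(s)$ used at different bit positions are independent. Moreover, by construction Algorithm \ref{algo:fingerprint_sp_n} only accesses bits in $\mathcal{P}$, i.e., the last $\min\{K,K_t\}\le K$ bits of each attribute, so the support of the modification is confined to the last $K$ bits as required. Combining these facts shows the algorithm is precisely a bit-level random response with parameter $p$ acting on the last $K$ bits.

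Finally, I would verify the hypotheses of Theorem \ref{thm:dpfp} numerically: the number of alterable bits is $K = \floor{\log_2\Delta}+1$ by construction, and Algorithm \ref{algo:fingerprint_sp_n} sets $p = 1/(e^{\epsilon/K}+1)$, which meets the bound $p\ge 1/(e^{\epsilon/K}+1)$ with equality. Applying Theorem \ref{thm:dpfp} then gives the claim. The main subtlety will be the pseudorandomness-to-randomness step: strictly speaking, $\mathcal{U}$ only approximates a uniform source, so the $\epsilon$-guarantee should be stated modulo the standard cryptographic assumption that $\mathcal{U}$ is indistinguishable from a true uniform random function on the seed space. Aside from that, the argument is a direct chain from Definition \ref{def:fp_mech} through Theorem \ref{thm:dpfp}.
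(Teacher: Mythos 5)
Your proposal is correct and follows essentially the same route as the paper's proof: show that each bit in $\mathcal{P}$ is selected with probability $\Pr\big(\mathcal{U}_1(s)\ \mathrm{mod}\ \floor{\tfrac{1}{2p}}=0\big)=1/\floor{\tfrac{1}{2p}}$ and, since the mask bit $x$ is uniform, flipped with probability at least $\tfrac{1}{2}\cdot 2p=p$, which satisfies the hypothesis of Theorem~\ref{thm:dpfp}. The only small imprecision is that the selection probability is $1/\floor{\tfrac{1}{2p}}\geq 2p$ rather than exactly $2p$ (because of the floor), so the effective flip probability is at least $p$ rather than exactly $\mathrm{Bernoulli}(p)$ — which is still fine, since a larger flip probability (up to $\tfrac12$) only strengthens the privacy guarantee.
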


\begin{remark}Please refer to Appendix \ref{sec:dpfp_share_proof} for the proof. 
The proposed database fingerprinting scheme is different from the existing ones discussed in Section \ref{sec:related_work}, as all existing schemes fingerprint each selected bit by replacing it with a new value obtained from the XOR of pseudorandomly generated mask bit $x$ and fingerprint bit $f$. As a result, the new value is independent of the original bit value in the relational database.  This is why the privacy guarantees of existing fingerprinting schemes cannot be explicitly analyzed. On the contrary,   we fingerprint each selected bit by XORing it with a Bernoulli random variable $B$ to make the fingerprinted entries dependent on the original bit value in the relational database. This enables us to derive a tight upper bound on the ratio of the probabilities of a pair of neighboring databases returning identical fingerprinted  outcomes, which is the key step to further connect this bound to a provable privacy guarantee.
\end{remark}

Note that $\mathcal{U}$ produces a sequence of random numbers using an initial seed, and it is computationally prohibitive to compute the  next random number in the sequence without knowing the seed.  As a consequence, from an SP's point of view, the results of $\mathcal{M}(\R)$'s are random. However,  $\mathcal{M}(\R)$ can be reproduced by the database owner who has access to the private key of itself and the external and internal IDs of  SPs.



\subsection{Extracting Fingerprint from a Leaked Database}\label{sec:extract_fp_leaked}

When the database owner observes a leaked (or pirated) database denoted as $\overline{\R}$, it will try to identify the traitor (malicious SP) by extracting the fingerprint from $\overline{\R}$ and comparing it with the fingerprints of SPs who have received its database before.

We present the fingerprint extraction procedure  in Algorithm \ref{algo:extract} in Appendix \ref{sec:appendix_extraction}. Specifically, the database owner first initiates a fingerprint template $(f_1,f_2,\cdots,f_L) = (?,?,\cdots,?)$.  Here, ``$?$'' means that the fingerprint bit at that position remains to be determined.\footnote{Similar symbol has also been used in other works   \cite{boneh1995collusion,li2005fingerprinting,agrawal2003watermarking,ji2021curse}.} Then, the database owner locates the positions of the fingerprinted bits exactly as in Algorithm \ref{algo:fingerprint_sp_n}, and fills in each ``?'' using majority voting.  To be more precise, it first constructs the  fingerprintable sets $\overline{\mathcal{P}}$ from   $\overline{\R}$, i.e., $\overline{\mathcal{P}} = \Big\{\overline{\mathbf{r}_i}[t,k]\Big|i\in[1,\overline{N}],t\in[1,T],k\in[1,\min\{K,K_t\}]\Big\}$, where $\overline{\mathbf{r}_i}[t,k]$ is the $k$th  insignificant bit  of attribute $t$ of the $i$th data record in $\overline{\R}$, and  $\overline{N}$ is the number of records in $\overline{\R}$. 
Note that $\overline{N}$ may not be equal to $N$, because a malicious SP may conduct the subset attack (as will be discussed in Section \ref{sec:subset}) to remove some data records from the received database before leaking it.  Second, the database owner selects the same bit positions, mask bit $x$,  and fingerprint index $l$ using the pseudorandom seed $s = \{\mathcal{Y}|\mathbf{r}_i.PmyKey|t|k\}$. Third, it recovers the mark bit as $B = \overline{\mathbf{r}_i}[t,k] \oplus \mathbf{r}_i[t,k]$ and fingerprint bit at index $l$ as $f_l = x\oplus B$. Since the value of $f_l$ may be changed by the malicious SP, the database owner maintains and updates two counting arrays $\mathbf{c}_0$ and $\mathbf{c}_1$, where $\mathbf{c}_0(l)$ and $\mathbf{c}_1(l)$ record the number of times $f_l$ is recovered as 0 and 1, respectively. Finally, the database owner sets $\mathbf{f}(l) = 1$, if $\mathbf{c}_1(l)>\mathbf{c}_0(l)$, otherwise $\mathbf{f}(l) = 0$.  The database owner compares the constructed fingerprint bit-string with the fingerprint customized for each SP who has received the database, and one of these SPs will be considered as guilty if there is a large overlap between its fingerprint and the constructed one. Note that the database owner can accuse at least one SP certainly \cite{li2005fingerprinting}. It has been shown that the database owner can correctly identify the malicious SP as long as the overlapping between fingerprints is above 50\% \cite{ji2021curse}.

\section{Associating Privacy, Fingerprint Robustness, and Database Utility}\label{sec:connections}
In the previous section, we have presented a mechanism that achieves  provable  privacy guarantee when fingerprinting a   database. Here, we investigate its impact on the database utility and fingerprint robustness, and also establish the connection between $p$ (the Bernoulli distribution parameter, which represents the probability of  changing one insignificant bit of a data entry),\footnote{Note that $p$ is the probability of the mark bit $B$ taking   value 1. The probability of the mark bit taking the value 0 is also $p$ by design (see Line 10 in Algorithm \ref{algo:fingerprint_sp_n}). Thus, the probability of a specific bit position being fingerprinted is $2p$ (see Line 6 in Algorithm \ref{algo:fingerprint_sp_n}). We would like to remind that $2p<1$, since the probability of a specific bit position is not selected to be marked is $1-2p$. Please also see Proof of Theorem \ref{corollary:dpfp_share} for more details.}  
entry-level differential privacy guarantee ($\epsilon$),  fingerprint robustness, and utility of   shared databases. 
 We show a graphical relationships between these in Figure \ref{fig:relationships}, where the arrow means ``leads to". Clearly, we can obtain the high-level conclusion that differential privacy and fingerprint robustness are not conflicting objectives that can be achieved at the same time, whereas, at the cost of database utility.

\begin{figure}[htb]
  \begin{center}
     \includegraphics[width= 0.9\columnwidth]{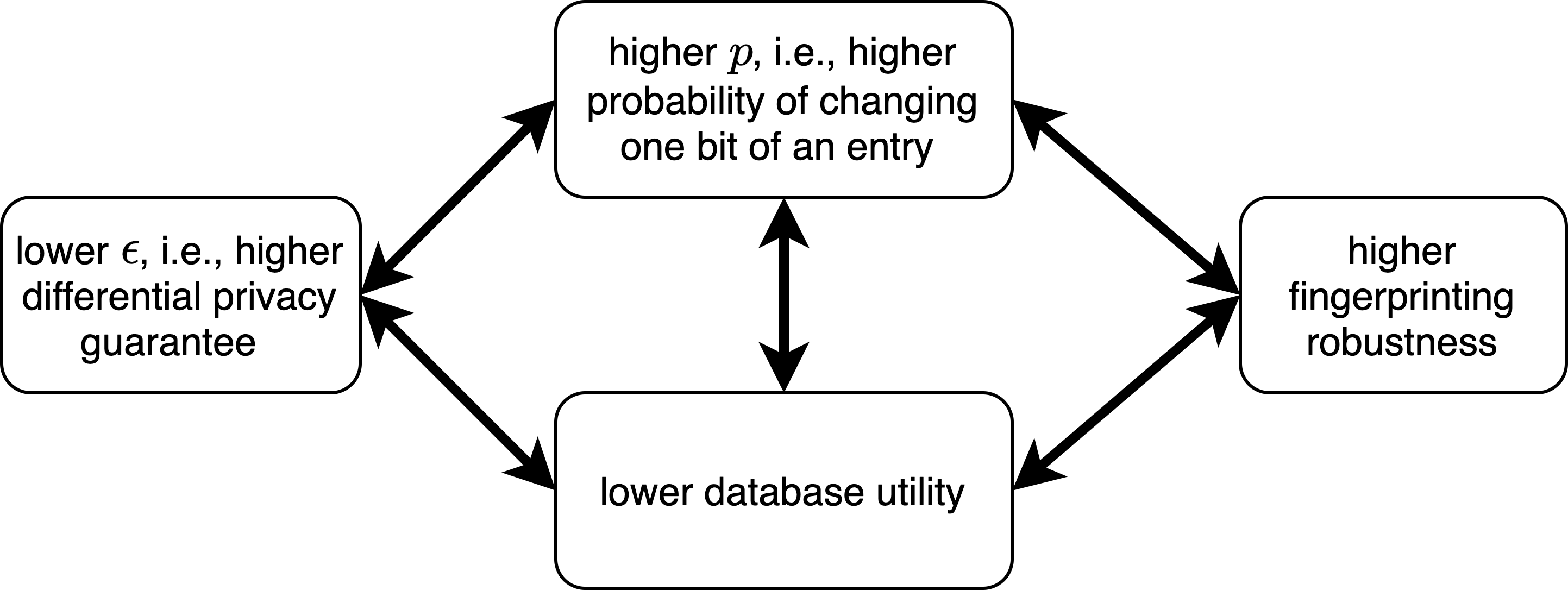}
      \end{center}
  \caption{\label{fig:relationships} 
  Relationship among $p$ (the probability of changing one insignificant bit of a data entry), differential privacy guarantee ($\epsilon$),   fingerprint robustness, and database utility. 
}
\end{figure}

\subsection{Privacy against General Attribute Inference Attacks}\label{sec:privacy_against_att_inf}

After receiving the fingerprinted database $\mathcal{M}(\R)$, a malicious SP can leverage sophisticated  learning methods to infer the original value of each data entry. In this section we show that under our proposed privacy model (i.e., entry-level differential privacy), the malicious SP's inference capability can never exceed a certain threshold.

In particular, we consider a   malicious SP who has access to $\R_{/\boldsymbol{r}_i[t]}$ (i.e., the original values of all data entries except the $t$th attribute of the $i$th data record, $\boldsymbol{r}_i[t]$), and its inference capability (denoted as $\mathrm{InfCap}$) is defined as 
\begin{equation}
    \mathrm{InfCap} = 
    \Pr(\boldsymbol{r}_i[t] = \zeta_1|\mathcal{M}(\R),\R_{/\boldsymbol{r}_i[t]}),
    \label{eq:InfCap}
\end{equation}
which is the posterior probability of the unknown entry $\boldsymbol{r}_i[t]$ takes a specific value $\zeta_1$. $\mathrm{InfCap}$   covers a wide-range of inference attacks using   learning-based techniques, because most of the learning frameworks give outputs in terms  of posterior probabilities, e.g.,  Bayesian inference and deep learning.  In particular, we can reach to the following proposition  about the inference capability of a malicious SP (the proof is shown in Appendix \ref{sec:proof_att_inf_bound}).

\begin{prop}\label{prop:att_inf_bound}
No matter what learning-based inference attack the malicious SP conducts, its inference capability can never be higher than $\frac{ \psi e^\epsilon}{ \psi e^\epsilon +1}$, i.e., $\mathrm{InfCap}\leq  \frac{ \psi e^\epsilon}{ \psi e^\epsilon +1}$, where $\psi = \frac{\Pr(\boldsymbol{r}_i[t]=\zeta_1  |\R_{/\boldsymbol{r}_i[t]})}{\Pr(\boldsymbol{r}_i[t]=\zeta_2|    \R_{/\boldsymbol{r}_i[t]})}$ is the   malicious SP's prior knowledge on the ratio between the probabilities of the unknown entry $\boldsymbol{r}_i[t]$ taking different values (i.e., $\zeta_1$ and $\zeta_2$) given all other entries are known. 
\end{prop}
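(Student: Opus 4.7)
The plan is to derive the bound by combining Bayes' rule for the posterior in \eqref{eq:InfCap} with the likelihood-ratio bound guaranteed by $\epsilon$-entry-level differential privacy (Theorem \ref{corollary:dpfp_share}).

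First, I would apply Bayes' rule to rewrite
\[
\mathrm{InfCap}=\frac{\Pr(\mathcal{M}(\R)\mid\boldsymbol{r}_i[t]=\zeta_1,\R_{/\boldsymbol{r}_i[t]})\,\Pr(\boldsymbol{r}_i[t]=\zeta_1\mid\R_{/\boldsymbol{r}_i[t]})}{\Pr(\mathcal{M}(\R)\mid\R_{/\boldsymbol{r}_i[t]})},
\]
and expand the marginal denominator as a sum over all possible values $\zeta$ of the unknown attribute. To reduce to the relevant binary comparison, I would drop every summand except those corresponding to $\zeta_1$ and $\zeta_2$; since each summand is a nonnegative product of a likelihood and a prior and appears only in the denominator, removing terms can only shrink the denominator and hence preserves the inequality direction. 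Writing $\Pr(\mathcal{M}(\R)\mid\zeta)$ as shorthand for the likelihood conditioned on both $\boldsymbol{r}_i[t]=\zeta$ and $\R_{/\boldsymbol{r}_i[t]}$, this yields
\[
\mathrm{InfCap}\leq\frac{\Pr(\mathcal{M}(\R)\mid\zeta_1)\Pr(\zeta_1)}{\Pr(\mathcal{M}(\R)\mid\zeta_1)\Pr(\zeta_1)+\Pr(\mathcal{M}(\R)\mid\zeta_2)\Pr(\zeta_2)}.
\]

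Next, I would invoke the privacy guarantee. The two databases obtained by setting $\boldsymbol{r}_i[t]=\zeta_1$ versus $\zeta_2$, keeping $\R_{/\boldsymbol{r}_i[t]}$ fixed, are neighboring in the sense of Definition \ref{def:neighboring}, so Theorem \ref{corollary:dpfp_share} gives $\Pr(\mathcal{M}(\R)\mid\zeta_1)\leq e^{\epsilon}\Pr(\mathcal{M}(\R)\mid\zeta_2)$. Dividing numerator and denominator of the previous display by $\Pr(\mathcal{M}(\R)\mid\zeta_1)\Pr(\zeta_1)$, then substituting this likelihood-ratio bound together with $\Pr(\zeta_2)/\Pr(\zeta_1)=1/\psi$, collapses the expression to
\[
\mathrm{InfCap}\leq\frac{1}{1+e^{-\epsilon}/\psi}=\frac{\psi e^{\epsilon}}{\psi e^{\epsilon}+1},
\]
which is exactly the claimed bound.

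The main point worth emphasizing---and essentially the only real subtlety---is that the argument never inspects the adversary's inference procedure: any learning-based attack the SP employs must ultimately manifest as a posterior (or a quantity majorized by one) over $\boldsymbol{r}_i[t]$, and any such posterior is capped by the Bayes-plus-DP reasoning above, with the cap depending only on the mechanism's privacy budget $\epsilon$ and the adversary's prior ratio $\psi$. This is what allows the bound to be stated in an attack-agnostic form. I would also flag the ``drop summands'' step for a careful sentence: in the multi-valued attribute case it is valid precisely because every omitted contribution to the marginal is nonnegative, so deleting it can only tighten the fraction in the adversary's favor, which is the direction we need for an upper bound.
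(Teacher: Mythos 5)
Your proof is correct and follows essentially the same route as the paper's: Bayes' rule on the posterior, the $\epsilon$-entry-level DP guarantee applied to the two neighboring databases obtained by setting $\boldsymbol{r}_i[t]$ to $\zeta_1$ versus $\zeta_2$, and a normalization argument. The only cosmetic difference is that the paper bounds $\Pr(\boldsymbol{r}_i[t]=\zeta_2\mid\mathcal{M}(\R),\R_{/\boldsymbol{r}_i[t]})\le 1-\mathrm{InfCap}$ and solves the resulting inequality $\mathrm{InfCap}\le e^{\epsilon}\psi(1-\mathrm{InfCap})$, whereas you drop the non-$\zeta_1,\zeta_2$ summands from the expanded marginal and simplify directly---these are algebraically the same step.
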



For a given $\psi$, $\frac{ \psi e^\epsilon}{ \psi e^\epsilon +1}$ decreases as $\epsilon$ decreases, it means that the higher the entry-level differential privacy guarantee (smaller $\epsilon$) the lower the inference capability of  malicious SPs.  In Appendix \ref{sec:app_att_inf_atk}, we consider a similar  adversary proposed in \cite{liu2016dependence}, empirically investigate its inference capability, and compare it with our derived upper bound.    
The above considered   adversary who  knows the entire database except one data entry is a standard threat model in the differential privacy literature. 
In some of the real-world attacks, an adversary can also utilize some publicly known auxiliary information (e.g., correlation among data entries \cite{liu2016dependence,yilmaz2021genomic,chanyaswad2018mvg,yang2015bayesian} or social connections \cite{kifer2011no,ji2021differentially}) to improve its inference capability. 

Some works have attempted to augment the conventional differentially-private mechanisms to also make them robust against inference attacks utilizing the auxiliary information. For example, Liu et al. \cite{liu2016dependence} propose to augment the Laplace
mechanism with a dependence coefficient, which computes the
query sensitivity of correlated data. Chanyaswad et al. \cite{chanyaswad2018mvg} reinforce the Gaussian mechanism by considering the row- and column-wise covariance matrix in database. 
Ji et  al. \cite{ji2021differentially} enhance random response of binary data by modeling the correlation between binary data using log-associations. 

Our adopted privacy model can also be easily augmented to achieve robustness against attribute inference attacks that use additional information about data correlations. In particular, if a malicious SP knows the pairwise data entry correlations (measured in terms of pairwise joint distributions  between columns), it can improve $\mathrm{InfCap}$  
by factoring $\psi$, 
i.e.,  $\psi = \frac{\Pr(\boldsymbol{r}_i[t]=\zeta_1  |\R_{/\boldsymbol{r}_i[t]})}{\Pr(\boldsymbol{r}_i[t]=\zeta_2|    \R_{/\boldsymbol{r}_i[t]})} = \frac{\Pr(\boldsymbol{r}_i[t]=\zeta_1, \R_{/\boldsymbol{r}_i[t]})}{\Pr(\boldsymbol{r}_i[t]=\zeta_2,    \R_{/\boldsymbol{r}_i[t]})} = \frac{\prod_{k\neq t}\Pr(\boldsymbol{r}_i[t]=\zeta_1, \boldsymbol{r}_i[k])}{\prod_{k\neq t} \Pr(\boldsymbol{r}_i[t]=\zeta_2,    \boldsymbol{r}_i[k])}$. Then, to achieve a provable privacy guarantee against attribute inference attacks using data correlations, 
the database owner can further augment the proposed entry-level differentially private mechanism by involving  
the auxiliary information of $\frac{\Pr(\boldsymbol{r}_i[t]=\zeta_1, \boldsymbol{r}_i[k])}{\Pr(\boldsymbol{r}_i[t]=\zeta_2,    \boldsymbol{r}_i[k])}$ in the mechanism design, i.e., adjusting $p$ accordingly based on the value of $\frac{\Pr(\boldsymbol{r}_i[t]=\zeta_1, \boldsymbol{r}_i[k])}{\Pr(\boldsymbol{r}_i[t]=\zeta_2,    \boldsymbol{r}_i[k])}$ and $\epsilon$. In Section \ref{sec:discussion}, we provide a mathematically viable approach to achieve such augmentation.



The   goal of 
attribute inference attack using data correlations is to compromise data privacy. 
In Section \ref{sec:robs_corr_atk}, we discuss a counterpart of it to compromise the fingerprint robustness; the malicious SP can also leverage the discrepancy between data correlations before and after fingerprint insertion to distort the embedded fingerprint bits. Also, in Section \ref{sec:discussion}, we provide a thorough discussion on how the database owner can also utilize data correlations to mitigate both attribute inference attack and correlation-based attacks target on fingerprints.

\subsection{Database Utility}

Fingerprinting naturally changes the content of the database, and
thus degrades the utility. Here, we evaluate the utility of  a fingerprinted
database from both data accuracy and data correlation perspectives: we quantify the impact of Algorithm \ref{algo:fingerprint_sp_n}  on the accuracy of each fingerprinted data entry and the joint   probability distribution of any pair of data entries.  The theoretical analyses are summarized in the following propositions.  These considered utility metrics are application independent, and in general, the higher the accuracy of data entries and  pairwise   joint distributions, the better the task-specific application utilities, e.g., classification accuracy and mean square error. We empirically validate this statement  by evaluating the proposed scheme by focusing on task-specific application utilities in Section \ref{sec:experiment}.
\begin{prop}\label{thm:cell_accuracy}
Let $\mathbf{r}_i[t]$ and $\widetilde{\mathbf{r}_i[t]}$ be the original and the fingerprinted values of the $t$th attribute of the $i$th row. Then, the expected error caused by fingerprinting, i.e.,  $\mathbb{E}_{B\sim\mathrm{Bernoulli}(p)}\Big[\Big|\mathbf{r}_i[t]-\widetilde{\mathbf{r}_i[t]}\Big|\Big]$, falls in the range of $[0, \Delta p]$,  where $\Delta$ is the sensitivity of a pair of  neighboring relations (see Definition \ref{def:bit_sensitivity}), and $p$ is the probability of a mark bit $B$ taking value 1, i.e., the probability of changing one insignificant bit of a data entry.
\end{prop}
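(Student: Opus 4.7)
The plan is to decompose the expected error as a sum over independent per-bit flip events, apply the triangle inequality, and then use the choice $K=\lfloor\log_2\Delta\rfloor+1$ from Theorem \ref{thm:dpfp} to convert the resulting geometric sum back into $\Delta$. Since Algorithm \ref{algo:fingerprint_sp_n} only modifies the last $K$ insignificant bits of $\mathbf{r}_i[t]$, I would write those bits as $b_1,\dots,b_K\in\{0,1\}$, introduce an indicator $X_k\in\{0,1\}$ that bit $k$ is ultimately flipped, and observe that flipping bit $k$ shifts the value by $(1-2b_k)\cdot 2^{k-1}$. This gives the clean representation
\[
\widetilde{\mathbf{r}_i[t]}-\mathbf{r}_i[t]\;=\;\sum_{k=1}^{K}(1-2b_k)\,2^{k-1}\,X_k,
\]
so that the absolute error is a weighted sum of signed flips driven entirely by the $X_k$'s.

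Next I would verify that $\Pr[X_k=1]=p$ and that the $X_k$ are independent. From Line 6 of Algorithm \ref{algo:fingerprint_sp_n}, position $(i,t,k)$ is selected with probability $2p$; conditional on selection the mask bit $x$ is uniform on $\{0,1\}$ (parity of $\mathcal{U}_2(s)$) and independent of the deterministic fingerprint bit $f$, so the mark bit $B=x\oplus f$ is uniform on $\{0,1\}$, and hence an actual flip occurs with probability $2p\cdot\tfrac12 = p$, consistent with the statement's reading of $p$. Independence across $k$ follows from the distinct pseudorandom seeds $s=\{\mathcal{Y}|\mathbf{r}_i.PmyKey|t|k\}$. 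The lower bound of $0$ is then immediate, because with probability $(1-p)^K>0$ every $X_k=0$ and the error vanishes, while the absolute value is otherwise non-negative.

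For the upper bound I would combine the triangle inequality with $|1-2b_k|=1$ to obtain
\[
\mathbb{E}\bigl[|\widetilde{\mathbf{r}_i[t]}-\mathbf{r}_i[t]|\bigr]\;\le\;\sum_{k=1}^{K}2^{k-1}\,\Pr[X_k=1]\;=\;p\,(2^K-1),
\]
and then convert $2^K-1$ back into $\Delta$. The main obstacle will be this last step: the defining inequality $2^{K-1}\le\Delta<2^K$ only gives $2^K-1\le 2\Delta-1$, so the triangle estimate by itself yields $(2\Delta-1)p$ rather than the advertised $\Delta p$. To close this gap I would bring in the pointwise bound $|\widetilde{\mathbf{r}_i[t]}-\mathbf{r}_i[t]|\le\Delta$, which holds because both values lie in an attribute domain of width $\Delta$, and exploit sign cancellation among the $(1-2b_k)2^{k-1}X_k$ terms rather than discarding it through the triangle inequality; this refinement, together with the trivial non-negativity, is what ultimately sandwiches the expected error in $[0,\Delta p]$.
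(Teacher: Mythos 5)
Your setup (per-bit indicators $X_k$, the representation $\widetilde{\mathbf{r}_i[t]}-\mathbf{r}_i[t]=\sum_{k=1}^{K}(1-2b_k)2^{k-1}X_k$, and the computation that each position is actually flipped with probability $p$) is a faithful reading of Algorithm \ref{algo:fingerprint_sp_n}, and you correctly diagnose that the triangle inequality only yields $p(2^K-1)\le(2\Delta-1)p$. But the repair you sketch does not exist. There need not be any sign cancellation: if the last $K$ bits of $\mathbf{r}_i[t]$ are all $0$, every term $(1-2b_k)2^{k-1}X_k$ is non-negative, the triangle inequality is an equality, and $\mathbb{E}\bigl[|\widetilde{\mathbf{r}_i[t]}-\mathbf{r}_i[t]|\bigr]=p(2^K-1)$ exactly. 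Taking $\Delta=2$ (so $K=\lfloor\log_2\Delta\rfloor+1=2$) this equals $3p>2p=\Delta p$, so under the independent-per-bit model the claimed upper bound is simply false and no refinement can recover it. The pointwise bound $|\widetilde{\mathbf{r}_i[t]}-\mathbf{r}_i[t]|\le\Delta$ you want to invoke fails for the same configuration (the realized change is $2^K-1$, which can exceed $\Delta$): the sensitivity in Definition \ref{def:bit_sensitivity} constrains pairs of legitimate neighboring databases, not the output of arbitrary bit flips, and out-of-domain outputs are only removed later by post-processing.

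The paper's proof takes a different, coarser route that matches the literal subscript $\mathbb{E}_{B\sim\mathrm{Bernoulli}(p)}$: it uses a \emph{single} mark bit $B$ shared by all $K$ positions, i.e.\ $\widetilde{\mathbf{r}_i[t,k]}=\mathbf{r}_i[t,k]\oplus B$ for every $k$ with the same $B$. Then with probability $1-p$ nothing changes and with probability $p$ the last $K$ bits are all complemented, so the expectation collapses to $p\,\bigl|\sum_{k=1}^{K}\bigl(\overline{\mathbf{r}_i[t,k]}-\mathbf{r}_i[t,k]\bigr)2^{k-1}\bigr|$, after which the paper asserts (by appeal to Definition \ref{def:bit_sensitivity}) that the complemented value differs from the original by at most $\Delta$, giving the range $[0,\Delta p]$. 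Whether that all-or-nothing coupling is consistent with Algorithm \ref{algo:fingerprint_sp_n}, which draws an independent decision per bit, is a fair criticism of the proposition itself; but as a review of your argument, the independent-flip decomposition cannot reach the stated bound, and your concluding paragraph is a statement of intent rather than a proof step.
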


The proof is in Appendix~\ref{sec:proof_cell_accuracy}. Clearly, the higher the value of $p$, the larger the expected absolute difference between a fingerprinted data entry and the original value. This suggests that  
the database owner can set the value of $p$ based on its  requirement of data entry accuracy when generating a fingerprinted database, which achieves a certain level of entry-level differential privacy, 
and vice versa.  This leads us to the following corollary.

\begin{corollary}\label{corollary:density}
Define \textbf{fingerprint density} as   $||\mathcal{M}(\R)-\R||_{1,1}$, where $||\cdot||_{1,1}$ is the matrix $(1,1)$-norm which sums over the absolute value of each entry in the matrix. Then, we have $\mathbb{E}_{B\sim\mathrm{Bernoulli}(p)}\Big[||\mathcal{M}(\R)-\R||_{1,1}\Big]\in [0,\Delta p NT]$.
\end{corollary}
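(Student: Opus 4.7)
The plan is to reduce Corollary~\ref{corollary:density} directly to the per-entry bound already established in Proposition~\ref{thm:cell_accuracy}, using only linearity of expectation and the definition of the matrix $(1,1)$-norm.

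First, I would unfold the $(1,1)$-norm as a double sum over the $N$ rows and $T$ columns of the database, writing
\begin{equation*}
\|\mathcal{M}(\R)-\R\|_{1,1} \;=\; \sum_{i=1}^{N}\sum_{t=1}^{T}\bigl|\mathbf{r}_i[t]-\widetilde{\mathbf{r}_i[t]}\bigr|.
\end{equation*}
This is the natural entry-wise decomposition, since each term on the right is exactly the quantity whose expectation Proposition~\ref{thm:cell_accuracy} controls.

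Next, I would take the expectation of both sides with respect to the collection of Bernoulli mark bits used to produce $\mathcal{M}(\R)$, and pull the expectation inside the finite double sum by linearity of expectation. Each summand then becomes $\mathbb{E}_{B\sim\mathrm{Bernoulli}(p)}\bigl[|\mathbf{r}_i[t]-\widetilde{\mathbf{r}_i[t]}|\bigr]$, which by Proposition~\ref{thm:cell_accuracy} lies in $[0,\Delta p]$. Since each of the $NT$ terms is nonnegative and bounded above by $\Delta p$, summing gives the two-sided bound $[0, \Delta p NT]$ claimed in the corollary. The lower bound $0$ is trivial (it is achieved when no entry is selected for fingerprinting), and the upper bound $\Delta p NT$ is the worst case when every entry attains the maximal expected deviation.

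The only subtlety I would flag is that different columns may have different bit-widths $K_t$, so the per-entry expected error may not saturate $\Delta p$ uniformly; however, the corollary only claims membership in the interval $[0, \Delta p NT]$, so the non-uniformity does not interfere with the bound. Hence there is no real obstacle here: the statement is a clean application of linearity of expectation to Proposition~\ref{thm:cell_accuracy}, and the proof should fit in a couple of lines.
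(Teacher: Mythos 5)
Your proposal is correct and matches the paper's (implicit) derivation: the corollary is presented as an immediate consequence of Proposition~\ref{thm:cell_accuracy}, obtained exactly as you describe by decomposing the $(1,1)$-norm into the $NT$ per-entry absolute differences, applying linearity of expectation, and summing the per-entry bound $[0,\Delta p]$. Your remark about columns with smaller bit-widths not saturating the upper bound is a sensible observation and does not affect the claimed interval.
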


In Section \ref{sec:multiple_shares}, we will exploit the concept of  fingerprint density to develop a SVT-based solution to share  fingerprinted databases with multiple  SPs.

\begin{prop}\label{thm:joint}
Let $\Pr(   \R[t]=\pi, \R[z] = \omega  )$ and $\Pr(  \widetilde{ \R[t]}=\pi, \widetilde{\R[z]} = \omega  )$ 
be the joint probability of the $t$th attribute   taking value $\pi$ and the $z$th attribute taking value $\omega$ before and after fingerprinting, respectively. Then,  
$\Pr(  \widetilde{ \R[t]}=\pi, \widetilde{\R[z]} = \omega  )$ falls in the range of 
$\resizebox{0.95\hsize}{!}{$\Big[  \Pr\Big(\R[t] = \pi,\R[z] = \omega\Big) (1-p)^{2K}+\Pr_{\min}(\R[t],\R[z])\Big(1-(1-p)^K    \Big)^2$}, \\
\resizebox{1\hsize}{!}{$\Pr\Big(\R[t] = \pi,\R[z] = \omega\Big) (1-p)^{2K}+\Pr_{\max}(\R[t],\R[z])\Big(1-(1-p)^K    \Big)^2     \Big]
$}$,\\
where $\Pr_{\min}(\R[t],\R[z])$ (or $\Pr_{\max}(\R[t],\R[z])$) denotes  the minimum (or maximum) joint probability of the $t$th and $z$th attributes in the original database.
\end{prop}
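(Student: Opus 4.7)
The plan is to decompose $\Pr(\widetilde{\R[t]}=\pi, \widetilde{\R[z]}=\omega)$ by conditioning on which of the two attributes has been altered by Algorithm~\ref{algo:fingerprint_sp_n}, and then bound each piece using the extrema of the original joint distribution. The decomposition is natural because the bit-level random response in Algorithm~\ref{algo:fingerprint_sp_n} is applied independently to each fingerprintable bit: for attributes $t$ and $z$, which occupy disjoint sets of bit positions, the per-bit mark decisions are drawn from independent pseudorandom streams, so the randomizations of $\widetilde{\R[t]}$ and $\widetilde{\R[z]}$ are conditionally independent given $\R$.

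I would begin by introducing the event $U_\ast := \{\widetilde{\R[\ast]}=\R[\ast]\}$, i.e., none of the last $K$ bits of attribute $\ast$ is flipped, with complement $A_\ast$. Since each of the $K$ bits is left unchanged with probability $1-p$ (see the footnote to Section~\ref{sec:connections}) independently of the data values, $\Pr(U_t)=\Pr(U_z)=(1-p)^K$, and $U_t \perp U_z$ by the disjoint-bit argument above. The law of total probability then expresses the target as the sum of four pieces indexed by $\{U_t,A_t\}\times\{U_z,A_z\}$. The piece $(U_t,U_z)$ is exact: on this event $\widetilde{\R[t]}=\R[t]=\pi$ and $\widetilde{\R[z]}=\R[z]=\omega$, so it contributes precisely $\Pr(\R[t]=\pi,\R[z]=\omega)(1-p)^{2K}$, matching the first summand of the advertised interval.

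For the piece $(A_t,A_z)$ I would write the mixture
\begin{equation}
\Pr(\widetilde{\R[t]}=\pi, \widetilde{\R[z]}=\omega, A_t, A_z) = \sum_{\pi',\omega'} \Pr(\R[t]=\pi',\R[z]=\omega')\, q_t(\pi\mid\pi')\, q_z(\omega\mid\omega'),
\end{equation}
where $q_\ast(\cdot\mid\cdot)$ is the transition kernel restricted to the event of at least one bit flip in the relevant attribute, so $\sum_{\pi'} q_t(\pi\mid\pi') = 1-(1-p)^K$ and similarly for $q_z$. Replacing every $\Pr(\R[t]=\pi',\R[z]=\omega')$ by $\Pr_{\min}(\R[t],\R[z])$ or $\Pr_{\max}(\R[t],\R[z])$ and pulling the marginals out of the sum produces the second summand of the interval, $\Pr_{\min}(\R[t],\R[z])(1-(1-p)^K)^2$ from below and $\Pr_{\max}(\R[t],\R[z])(1-(1-p)^K)^2$ from above. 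The mixed pieces $(U_t,A_z)$ and $(A_t,U_z)$ admit analogous mixture expressions, with one transition kernel degenerating to a delta at the unchanged coordinate.

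The main obstacle I anticipate is precisely the bookkeeping of those two mixed pieces: they carry a combined prefactor of $2(1-p)^K(1-(1-p)^K)$, which is not visible in the stated interval. I would resolve this either by (i) folding the unchanged-coordinate factor back into the joint probability $\Pr(\R[t]=\pi,\R[z]=\omega')$ or $\Pr(\R[t]=\pi',\R[z]=\omega)$ and arguing that the residual mass is controlled by the slack between $\Pr_{\min}$ and $\Pr_{\max}$ after marginalization over the unconstrained coordinate, or (ii) bounding each mixed piece against the extremal products and showing the combined bound is no looser than the one already arising from the $(A_t,A_z)$ term. The remaining steps---collecting the four contributions and reading off the interval---are routine. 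Verifying this absorption cleanly is the delicate part of the argument, while the independence structure and the mixture manipulations are standard.
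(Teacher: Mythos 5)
Your decomposition is exactly the one the paper uses: condition on whether each of the two attributes has any of its last $K$ bits flipped, which splits the sum over mark-bit vectors $(\tau_t,\tau_z)$ into the $(\mathbf{0},\mathbf{0})$ term, the two mixed groups, and the both-nonzero group. The obstacle you flag --- the mixed pieces with combined weight $2(1-p)^K\bigl(1-(1-p)^K\bigr)$ --- is not something you can absorb away, and neither of your proposed resolutions can succeed, because the stated interval is simply not wide enough to contain them: the weights appearing in the two endpoints sum to $(1-p)^{2K}+\bigl(1-(1-p)^K\bigr)^2 = 1-2(1-p)^K\bigl(1-(1-p)^K\bigr)<1$. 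A concrete check: if the joint distribution of $(\R[t],\R[z])$ is uniform over all $K$-bit pairs, bit-level randomization leaves it uniform, so $\Pr(\widetilde{\R[t]}=\pi,\widetilde{\R[z]}=\omega)=\Pr(\R[t]=\pi,\R[z]=\omega)=c$, yet the claimed interval degenerates to the single point $c\bigl(1-2(1-p)^K(1-(1-p)^K)\bigr)<c$, so the upper endpoint is violated. The lower bound does survive (you only discard nonnegative mass), but the upper bound genuinely needs the mixed terms; the natural repair is to bound everything outside the $(\mathbf{0},\mathbf{0})$ term by $\Pr_{\max}$ times the residual weight, giving $\Pr(\R[t]=\pi,\R[z]=\omega)(1-p)^{2K}+\Pr_{\max}(\R[t],\R[z])\bigl(1-(1-p)^{2K}\bigr)$ above and the symmetric expression with $\Pr_{\min}$ below.

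For what it is worth, the paper's own proof has precisely the gap you anticipated: after peeling off the $(\mathbf{0},\mathbf{0})$ term it writes the remainder as a double sum over $\tau_t\in\{0,1\}^K\setminus\{\mathbf{0}\}$ and $\tau_z\in\{0,1\}^K\setminus\{\mathbf{0}\}$, silently dropping the terms where exactly one of $\tau_t,\tau_z$ is zero. Your instinct that ``verifying this absorption cleanly is the delicate part'' was correct --- it is exactly the point at which the argument, and the proposition as stated, break down.
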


The proof is in Appendix~\ref{sec:proof_joint}. 
By marginalizing over $\R[t]$ and $\widetilde{\R[z]}$, we can have the following corollary about the impact of fingerprinting on the marginal distributions.

\begin{corollary}\label{corollary:marginal}
Let $\Pr(   \R[t]=\pi)$ and $\Pr(  \widetilde{ \R[t]}=\pi )$ be the marginal 
probability of the $t$th attribute   taking value $\pi$  before and after fingerprinting, respectively. Then, 
$\Pr(  \widetilde{ \R[t]}=\pi )$ belongs to 
$\resizebox{0.85\hsize}{!}{$\Big[  \Pr\Big(\R[t] = \pi\Big) (1-p)^{2K}+\Pr_{\min}(\R[t])\Big(1-(1-p)^K    \Big)^2$}, \\
\resizebox{0.85\hsize}{!}{$\Pr\Big(\R[t] = \pi \Big) (1-p)^{2K}+\Pr_{\max}(\R[t] )\Big(1-(1-p)^K    \Big)^2     \Big]
$}$, 
where $\Pr_{\min}(\R[t] )$ (or $\Pr_{\max}(\R[t] )$) denotes  the minimum (or maximum) marginal probability of the $t$th   attribute in the original database.
\end{corollary}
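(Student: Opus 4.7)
The plan is to derive Corollary~\ref{corollary:marginal} from Proposition~\ref{thm:joint} by marginalizing out the auxiliary attribute $z$. First, I would apply the law of total probability to write $\Pr(\widetilde{\R[t]} = \pi) = \sum_{\omega} \Pr(\widetilde{\R[t]} = \pi, \widetilde{\R[z]} = \omega)$, where the sum is over all values $\omega$ in the domain of $\R[z]$. Second, I would substitute the two-sided bound from Proposition~\ref{thm:joint} into each summand, producing a ``no-flip'' contribution weighted by $(1-p)^{2K}$ and a ``flip'' contribution weighted by $(1-(1-p)^K)^2$.

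For the no-flip piece, the collapse is immediate: $\sum_{\omega} \Pr(\R[t] = \pi, \R[z] = \omega)(1-p)^{2K} = \Pr(\R[t] = \pi)(1-p)^{2K}$, which matches the first term in both endpoints of the claimed interval. For the flip piece, one has to relate $\sum_{\omega} \Pr_{\min/\max}(\R[t], \R[z])$ to $\Pr_{\min/\max}(\R[t])$. The natural identification is that the marginal extremum $\Pr_{\min/\max}(\R[t])$ arises by summing joint entries over $\omega$ at a fixed $\pi$, so the same bit-level counting used in the proof of Proposition~\ref{thm:joint} can be replayed at the marginal level without going through the intermediate joint.

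The main obstacle I anticipate is precisely this last step: a literal substitution of the joint bound produces an unwanted factor of $|\mathrm{dom}(\R[z])|$ multiplying $\Pr_{\min/\max}(\R[t], \R[z])$, which is not identically $\Pr_{\min/\max}(\R[t])$ in general. The cleanest workaround is not to invoke Proposition~\ref{thm:joint} as a black box, but to re-run its proof in the univariate setting: decompose $\Pr(\widetilde{\R[t]} = \pi)$ according to the pattern of the $K$ mark bits applied to the $(i,t)$-th entry, use the Bernoulli-$p$ independence of the mark bits to recover the $(1-p)^{2K}$ and $(1-(1-p)^K)^2$ weights exactly as in the joint argument, and bound the contribution from any non-trivial bit-flipped configuration by the marginal extrema $\Pr_{\min}(\R[t])$ and $\Pr_{\max}(\R[t])$ in place of their joint counterparts. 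This reproduces the stated range directly and avoids the spurious domain-size factor introduced by the naive marginalization route.
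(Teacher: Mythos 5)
Your diagnosis of why naive marginalization fails is correct: summing the bound of Proposition~\ref{thm:joint} over $\omega$ turns $\Pr_{\min/\max}(\R[t],\R[z])$ into $|\mathrm{dom}(\R[z])|\cdot\Pr_{\min/\max}(\R[t],\R[z])$, which is not $\Pr_{\min/\max}(\R[t])$ in general, so the joint interval cannot be summed term by term into the claimed marginal interval. (The paper itself offers only the one-line justification ``by marginalizing over $\R[t]$ and $\widetilde{\R[z]}$,'' so on this point you are being more careful than the source.) The gap is in your proposed repair. Re-running the Proposition~\ref{thm:joint} argument in the univariate setting gives
\begin{align*}
\Pr(\widetilde{\R[t]}=\pi) &= \sum_{\tau_t\in\{0,1\}^K}\Pr\bigl(\R[t]=\pi_{[2]}\oplus\tau_t\bigr)\,p^{||\tau_t||_0}(1-p)^{K-||\tau_t||_0}\\
&= \Pr(\R[t]=\pi)\,(1-p)^{K} \;+\; \sum_{\tau_t\neq\mathbf{0}}\Pr\bigl(\R[t]=\pi_{[2]}\oplus\tau_t\bigr)\,p^{||\tau_t||_0}(1-p)^{K-||\tau_t||_0},
\end{align*}
and bounding the residual sum by the marginal extrema yields an interval with weights $(1-p)^{K}$ and $1-(1-p)^{K}$ --- \emph{first} powers, because the marginal law of $\widetilde{\R[t]}$ depends only on the $K$ mark bits applied to attribute $t$, not on the $2K$ bits that appear in the joint analysis. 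This does not ``reproduce the stated range directly''; it produces a different range, and the discrepancy is not a harmless weakening.

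Concretely, take $K=1$ and suppose $\R[t]$ puts probability $1/2$ on $\pi$ and $1/2$ on the value $\pi'$ obtained by flipping the last bit of $\pi$. Then $\Pr(\widetilde{\R[t]}=\pi)=\tfrac12(1-p)+\tfrac12 p=\tfrac12$, whereas the corollary's upper endpoint is $\tfrac12(1-p)^2+\tfrac12 p^2<\tfrac12$ for $0<p<1$; the true marginal lies outside the stated interval. So your univariate computation is the right tool, but it proves the corollary only after the exponents are corrected to $(1-p)^K$ and $1-(1-p)^K$; as written, the squared weights appear to have been carried over verbatim from the joint case, and no argument along the lines you describe (or that the paper gestures at) can establish them.
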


Clearly, when $p$ is small, both joint distributions and marginal distributions will be close to that of the original databases. This means that the fingerprinted database will have higher statistical utility for smaller values of $p$.

\subsection{Fingerprint Robustness}\label{sec:fp_robustness_connection}

Although, Li et al. \cite{li2005fingerprinting} have attempted to analyze fingerprint robustness by studying the false negative rate (i.e., the probability that the database owner   fails to extract the exact fingerprint from a pirated database), they do not establish the direct connection between the robustness and the tuning parameter (the fingerprinting ratio, which can be interpreted as  a counterpart of $p$ in our work) in their mechanism.

In this paper, we investigate the robustness of the proposed fingerprinting mechanism against three attacks, i.e., the random bit flipping attack \cite{agrawal2003watermarking,li2005fingerprinting,craver1998resolving}, subset attack \cite{li2005fingerprinting,bui2013robust,yilmaz2020collusion,craver1998resolving}, and correlation attack \cite{yilmaz2020collusion,ji2021curse}. 
 These attacks are all well-established attacks for database fingerprinting, and have been widely studied to investigate the robustness of a database fingerprinting mechanism in the literature. 
In the following, we quantitatively analyze the relationship between $p$ (the probability of changing one insignificant bit of a data entry) and fingerprint robustness against these three attacks.

\subsubsection{Robustness Against Random Flipping Attack}\label{sec:robust_rand_flip}
In random flipping attack, a malicious SP flips each of the $K$ last bits of data entries in $\widetilde{\R}$   with probability $\gamma_{\mathrm{rnd}}$ with the goal of distorting  the data in the fingerprinted   positions. In \cite{ji2021curse}, the authors have empirically shown that the malicious SP ends up being uniquely accusable as long as the extracted fingerprint   from the leaked database has more than $50\%$ matches with the malicious SP's fingerprint. 

However, as the database owner shares more fingerprinted copies of its database with different SPs, to uniquely hold the correct malicious SP responsible, it requires more bit matches between the extracted fingerprint and the malicious SP's fingerprint. Thus, the number of bit matches (denoted as $D$,   $D\leq L$) should be set based on the number of fingerprinted sharings of the same database. Please refer to Appendix \ref{sec:analysis_rnd} for the discussion of how to determine $D$ given the number of times a database is shared (with different SPs) and the fingerprint length $L$.

Given $D$, we evaluate the robustness of the proposed fingerprinting mechanism against random bit flipping  attack in terms of the probability (denoted as $P_{\mathrm{rbst\_rnd}}$) that the database owner successfully extracts   any $D$ fingerprint bits of the malicious SP. 
Let the $l$th bit of the fingerprint string be embedded $w_l$ times in $\widetilde{\R}$ (with the  probability  $(\frac{1}{L})^{w_l}$). Thus,   to extract this fingerprint bit correctly from a copy of $\widetilde{\R}$ that is compromised by the random bit flipping attack, the database owner needs to make sure that  at most $\floor{\frac{w_l}{2}}$ bits in $\widetilde{\R}$ that are marked by the $l$th bit of the fingerprint string are flipped by the malicious SP, which happens with probability $p_l = \sum_{q=0}^{\floor{\frac{w_l}{2}}}{w_l \choose q}\gamma_{\mathrm{rnd}}^q(1-\gamma_{\mathrm{rnd}})^{w_l-q}$. 

Let $m$ be the number of fingerprinted bit positions in the database ($m\leq NKT$) received by the malicious SP, and define set $\mathcal{W}$ as \resizebox{0.7\hsize}{!}{$\mathcal{W} = \{w_1,w_2,\cdots,w_L>0|\sum_{l=1}^Lw_l=m\}$}. Let also $\mathcal{L}_D$ be the collection of any $D$ bits of the malicious SP's fingerprint ($|\mathcal{L}_D|=D$). Then,  by marginalizing all possible instances of malicious SP's fingerprint,  all   collections of  $D$ fingerprint bits of it, and the  number of fingerprinted bits $m$,  we have the closed form expression of $P_{\mathrm{rbst\_rnd}}$ in terms of $p$  as  
$P_{\mathrm{rbst\_rnd}} = \sum_{m=1}^{NKT}\Big(\sum_{w_l\in\mathcal{W},l\in[1,L]}\sum_{\mathcal{L}_D}\prod_{l\in\mathcal{L}_D} p_l(\frac{1}{L})^{w_l}\Big){NKT \choose m}(2p)^m\\(1-2p)^{NKT-m}$. 
To show that higher $p$ leads to more robustness against the random bit flipping attack, it is equivalent to show that $P_{\mathrm{rbst\_rnd}}$ is monotonically increasing with $p$ ($0<p<0.5$)\footnote{A rational database owner will not change more than 50\% of the bit positions in a database, because it will significantly compromise the database utility, and a malicious SP can   flip the bits back and then launch an attack.}.  Detailed analysis is shown in  Appendix \ref{sec:analysis_rnd}.




\subsubsection{Robustness Against Subset Attack}
\label{sec:subset}
In subset attack, the malicious  SP generates a pirated database by selecting each data record in $\widetilde{\R}$ for inclusion (in the pirated database) with probability $\gamma_{\mathrm{sub}}$. This attack is shown to be much weaker than the random bit flipping attack \cite{li2005fingerprinting,ji2021curse,yilmaz2020collusion}. According to \cite{li2005fingerprinting} (page 40), the subset attack cannot succeed (i.e., distorting even one fingerprint bit) unless the malicious SP excludes all the rows fingerprinted by at least one fingerprint bit. 
Thus, we measure the robustness of the proposed fingerprinting mechanism against subset attack using the probability (denoted as $P_{\mathrm{rbst\_sub}}$) that the malicious SP fails to exclude all fingerprinted rows involving a particular fingerprint bit (note that our analysis can be easily generalized for excluding a fraction of fingerprinted rows). Since the probability that a specific row is fingerprinted by a specific fingerprint bit is $1-(1-p/L)^{KT}$,  we have the closed form expression of $P_{\mathrm{rbst\_sub}}$ in terms of $p$ (the probability of changing an insignificant bit of an entry) as 
\resizebox{1\hsize}{!}{$P_{\mathrm{rbst\_sub}} = 1- \sum_{n=1}^{N}{N\choose n}(\gamma_{\mathrm{sub}})^n\Big(1-(1-p/L)^{KT}\Big)^{n}(1-p/L)^{KT(N-n)}$}
\resizebox{0.85\hsize}{!}{$= 1+ \Big(1-(1-p/L)^{KT}\Big)^N - \Big(1-(1-p/L)^{KT}+\gamma_{\mathrm{sub}}(1-p/L)^{KT}\Big)^N$}.  
Clearly, the larger the value of $p$, the less the difference between 
\resizebox{1\hsize}{!}{$1-(1-p/L)^{KT}$ and  $1-(1-p/L)^{KT}+\gamma_{\mathrm{sub}}(1-p/L)^{KT}$, which} 
 suggests that $P_{\mathrm{rbst\_sub}}$ also monotonically increases with $p$. 


\subsubsection{Robustness Against   Correlation Attack}\label{sec:robs_corr_atk}

In \cite{ji2021curse}, the authors identify a correlation attack against database fingerprinting mechanisms, which takes  advantage of the intrinsic correlation between data entries in the database to infer and compromise  the potentially fingerprinted bit positions. In particular, the  malicious SP changes the insignificant bits of entries in $\widetilde{\R}$ if the data entries satisfy \resizebox{1\hsize}{!}{$\Big|\Pr\Big(  \widetilde{ \R[t]}=\pi, \widetilde{\R[z]} = \omega  \Big)-\Pr\Big(\R[t] = \pi,\R[z] = \omega\Big)\Big|\geq \tau,\forall z\in[1,T],\forall \omega$}, where $\tau$ is a predetermined parameter for this attack.

Similar to \cite{ji2021curse}, we adopt the confidence gain of the malicious SP (denoted as $G$)   to analyze the robustness of the proposed fingerprinting mechanism against the correlation attack. The confidence gain measures the   knowledge of a potentially fingerprinted data entry under correlation attack over random guess.  To be more specific, $G$ is defined as the ratio between the probability that a specific entry (whose original $t$th attribute  takes value $\pi$) will be selected to be compromised in the correlation attack and the probability that such entry will be selected to be compromised in the random bit flipping attack. Mathematically, this can be shown as \resizebox{1\hsize}{!}{$G = \frac{1-\prod_{z\in[1,t],z\neq t}\prod_{\omega}\Pr\Big(\Big|\Pr\big(  \widetilde{ \R[t]}=\pi, \widetilde{\R[z]} = \omega  \big)-\Pr\big(\R[t] = \pi,\R[z] = \omega\big)\Big|\leq \tau\Big)}{(1-(1-p)^K)\Pr(\R[t]=\pi)}$}.

In \cite{ji2021curse}, the authors have   shown that $G$ decreases as the percentage of fingerprinted entries increases (when considering the fingerprinting mechanism developed in \cite{li2005fingerprinting}). In Appendix \ref{sec:G_vs_p}, we present the similar analysis and show that $G$ also decreases as $p$ increases when our proposed entry-level differentially-private fingerprinting mechanism is used. As a result, this implies that the robustness of our proposed fingerprinting mechanism also increases with $p$. In Section \ref{sec:discussion}, we will discuss how to augment our proposed   mechanism to mitigate the correlation attacks.

\section{Sharing Multiple Fingerprinted Databases}\label{sec:multiple_sharing}


A major challenge in practical use of differential privacy is that data privacy  degrades if the same statistics are repeatedly calculated and released using the same differentially-private mechanism. The same is true for sharing   a database with multiple SPs. 
If  different fingerprinted copies of the same database are shared multiple times , the average of them may   converge to the original database, which implies that    privacy guarantee of Algorithm \ref{algo:fingerprint_sp_n} degrades linearly with   number of sharings, as it is used to share the same database repeatedly. 

However, in practice, the database owner releases its database only to a limited number of SPs, and for each released copy, it will have certain data privacy and fingerprint robustness  requirements.  According to Figure \ref{fig:relationships},  these   requirements can both be fulfilled 
if the utility of the shared database is compromised to a certain extent (but not significantly as will be corroborated in Section \ref{sec:experiment}). Based on Section \ref{sec:connections}, we know that the database utility   is readily be  characterized by the fingerprint density, as defined   in Corollary \ref{corollary:density}, because the higher the fingerprint density, the lower the database utility, and the database owner can control the utility of repeatedly shared databases using fingerprint density. As a result, we let   the database owner   only share a fingerprinted database, $\mathcal{M}(\R)$, if its fingerprint density, $||\mathcal{M}(\R)-\R||_{1,1}$, is beyond a predetermined publicly known threshold, $\Gamma$, in order to meet the database owner's requirement of high data privacy guarantee and fingerprint robustness. 

As discussed in Section \ref{sec:an_instant}, the fingerprinted database $\mathcal{M}(\R)$ customized for a particular SP depends on an internal ID assigned by the database owner to the corresponding SP. Since the internal ID of the SP is an input for inserting the  fingerprint (Algorithm \ref{algo:fingerprint_sp_n}), whether $||\mathcal{M}(\R)-\R||_{1,1}$ is higher than $\Gamma$ also depends on the assigned internal ID. 
As a consequence, when an SP queries the database, the database owner needs to keep generating a new internal ID for it until the resulting fingerprint density is above $\Gamma$. 
Moreover, this process (i.e., internal ID generation and fingerprint density comparison with the threshold) also needs to be performed in a privacy-preserving manner. The reason is that according to Section \ref{sec:connections} (specifically, Proposition \ref{thm:cell_accuracy} and Corollary \ref{corollary:density}),  fingerprint density    provides  additional knowledge about the   fingerprint robustness and differential privacy guarantee. If a malicious SP accurately knows that its received database has fingerprint density  higher than a   threshold $\Gamma$, it can estimate the percentage of changed entries due to fingerprint, and it can further distort the fingerprint via a correlation attack \cite{ji2021curse}.  



The above discussion inspires us to resort to the sparse vector technique (SVT) \cite{dwork2014algorithmic,lyu2017understanding}, that only releases a noisy  query result when it is beyond a noisy threshold, to design a mechanism for sharing   multiple entry-level differentially-private fingerprinted databases and at the same time controlling the cumulative privacy loss. 
The unique benefit of SVT is that it can answer multiple queries  while paying the cost of privacy only for the ones satisfying a certain condition, e.g.,   when the result is beyond a given threshold. 
In Section \ref{sec:inter_step}, we present an intermediate step which considers only one SP,  determines its internal ID, and conducts the comparison between the resulting fingerprint density and threshold  under differential privacy guarantee. In Section \ref{sec:multiple_shares}, we show how to compose this intermediate step for $C$ times to determine the internal IDs for $C$ SPs and share different fingerprinted databases with them.





\subsection{Intermediate Step: Determining   Internal ID for One SP}\label{sec:inter_step}


As elaborated earlier, the database owner needs to assign an internal ID to a SP in order to achieve  $||\mathcal{M}(\R)-\R||_{1,1}>\Gamma$ for the purpose of simultaneously meeting  data privacy and fingerprint robustness requirements. To achieve differential privacy for this intermediate step, we perturb both $||\mathcal{M}(\R)-\R||_{1,1}$ and $\Gamma$, and consider the noisy comparison $||\mathcal{M}(\R)-\R||_{1,1}+\mu>\Gamma+\rho$, where $\mu$ and $\rho$ are Laplace noises. 
Establishing  the noisy comparison is a standard approach in SVT (see \cite{dwork2014algorithmic} page 57, and \cite{lyu2017understanding} page 639).

Next, we formally present the intermediate step. 
When the database owner receives a query from a new SP (suppose that this SP is the $c$th SP and $c\in[1,C]$), it  generates an instance of internal ID for the $c$th SP via $ID_{\mathrm{internal}}^c = Hash(\mathcal{K}|c|i)$, where $i\in\{1,2,3\cdots\}$ denotes the sequence number of this trial to generate $ID_{\mathrm{internal}}^c$. Then, the database owner generates a copy of fingerprinted database by calling Algorithm \ref{algo:fingerprint_sp_n} with the internal ID set as $ID_{\mathrm{internal}}^c$ in Line \ref{line:set_ID}. Similarly, we denote the fingerprinted database generated for the $c$th SP at the $i$th trial as $\mathcal{M}_i^c(\R)$. Next, the database owner conducts the noisy comparison $||\mathcal{M}_i^c(\R)-\R||_{1,1}+\mu_i>\Gamma+\rho_i$, 
where $\mu_i\sim\mathrm{Lap}(\frac{\Delta}{\epsilon_2})$ and $\rho_i\sim\mathrm{Lap}(\frac{\Delta}{\epsilon_3})$.  Here, $\epsilon_2$ and $\epsilon_3$ are the privacy budgets used to control the accuracy of the noisy comparison. If $||\mathcal{M}_i^c(\R)-\R||_{1,1}+\mu_i>\Gamma+\rho_i$ holds, then the database owner returns a symbol $\top$ and immediately terminates the intermediate step. This means that  $ID_{\mathrm{internal}}^c$ generated at the $i$th trial for the $c$th SP can lead to a fingerprinted database satisfying the data privacy and fingerprint robustness requirements. Otherwise, the database owner returns a symbol $\bot$, increases $i$ by 1, and continues the process. We summarize this intermediate step in Algorithm \ref{algo:DeterminingQualifiedSP}.  
This entire process is differentially-private, and is proved in the following theorem. It is noteworthy that the identified $ID_{\mathrm{internal}}^c$ is not released to the SP. As we will show in Section \ref{sec:exp_multiple_shares}, an instance of $ID_{\mathrm{internal}}^c$ that leads to $||\mathcal{M}_i^c(\R)-\R||_{1,1}+\mu_i>\Gamma+\rho_i$ can usually be generated in 1 or 2 trials depending on the ratio between $\epsilon_2$ and $\epsilon_3$.

\begin{algorithm}
\small
\SetKwInOut{Input}{Input}
\SetKwInOut{Output}{Output}
\caption{Determine the Internal ID for One SP}\label{algo:DeterminingQualifiedSP}
\Input{Original database $\R$, fingerprinting scheme $\mathcal{M}$,   sequence number of a new SP, i.e., $c$,   threshold $\Gamma$, and privacy budget $\epsilon$, $\epsilon_2$,  and $\epsilon_3$.} 
\Output{$\{\bot,\bot,\cdots,\bot,\top\}$.}

\ForAll{$i\in\{1,2,3,\cdots\}$}{

Generate   an instance of internal ID for the $c$th SP via $ID_{\mathrm{internal}}^c = Hash(\mathcal{K}|c|i)$.

Generate   $\mathcal{M}_i^c(\R)$ by calling  Algorithm \ref{algo:fingerprint_sp_n} with $ID_{\mathrm{internal}}^c$ and 
privacy budget $\epsilon$.\label{line:generate_MR}

Sample $\mu_i\sim\mathrm{Lap}(\frac{\Delta}{\epsilon_2})$ and $\rho_i\sim\mathrm{Lap}(\frac{\Delta}{\epsilon_3})$.


\eIf{$||\mathcal{M}_i^c(\R)-\R||_{1,1} +\mu_i \geq \Gamma+ \rho_i$\label{line:generate_rho}}{Output $a_i = \top$.\algorithmiccomment{\textbf{the $i$th trial meets the requirements}}

{Terminate the algorithm}.
}{Output $a_i = \bot$.\algorithmiccomment{\textbf{the $i$th trial does not meet the requirements}}}
}
\end{algorithm}

\begin{theorem}\label{thm:proof_qualified}
Algorithm \ref{algo:DeterminingQualifiedSP} is ($\epsilon_2+\epsilon_3$)-entry-level differentially private.
\end{theorem}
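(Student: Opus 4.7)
The plan is to adapt the Sparse Vector Technique (SVT) analysis (the ``AboveThreshold'' mechanism of Dwork and Roth) to our setting. Algorithm \ref{algo:DeterminingQualifiedSP} terminates at the first $\top$, so its output is fully determined by an integer $k \geq 1$ and has the form $\mathbf{a} = (\bot,\ldots,\bot,\top)$; my goal is to show $\Pr[\mathcal{A}(\R)=\mathbf{a}] \leq e^{\epsilon_2+\epsilon_3}\Pr[\mathcal{A}(\R')=\mathbf{a}]$ for every such $\mathbf{a}$ and every pair of neighboring databases $\R,\R'$. A preliminary observation is that the internal ID $ID_{\mathrm{internal}}^c$ and the pseudorandom seeds inside Algorithm \ref{algo:fingerprint_sp_n} depend only on the secret key $\mathcal{Y}$, the counters $c$ and $i$, and immutable primary keys and positions, not on the data values. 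Hence, once $c$ and $i$ are fixed, the query $q_i(\R) := ||\mathcal{M}_i^c(\R)-\R||_{1,1}$ is a deterministic function of $\R$, and all privacy-relevant randomness lives in the Laplace variables $\mu_i$ and $\rho$ (interpreted, as in the standard SVT, as a single threshold noise shared across iterations).

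The first technical step is a sensitivity bound: I claim $|q_i(\R) - q_i(\R')| \leq \Delta$ for any neighboring $\R,\R'$ differing at a single entry $\mathbf{r}_j[t]$. Because the seeds $\{\mathcal{Y}|\mathbf{r}_j.PmyKey|t|k\}$ do not involve the data values, the set of bit positions chosen to be marked and the XOR patterns applied to them are identical under $\R$ and $\R'$; hence $\mathcal{M}_i^c(\R)$ and $\mathcal{M}_i^c(\R')$ coincide everywhere except at the single differing entry. The two $(1,1)$-norms therefore differ in at most one summand, and the reverse triangle inequality together with Definition \ref{def:bit_sensitivity} bounds this difference by $\Delta$.

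With the sensitivity in hand, I apply the standard SVT integration. Conditioning on the threshold noise $\rho$, I write $\Pr[\mathcal{A}(\R) = \mathbf{a}]$ as the integral of $\mathrm{Lap}_{\Delta/\epsilon_3}(\rho)$ against the product of the $k-1$ rejection probabilities at iterations $i<k$ and the acceptance probability at iteration $k$. Applying the change of variables $\hat{\rho} = \rho + (q_k(\R') - q_k(\R))$ and $\hat{\mu}_k = \mu_k + (q_k(\R') - q_k(\R))$ inside the accepting integrand, the rejection probabilities only grow under this shift (the effective threshold moves in a direction that makes rejection easier), while the Laplace densities for $\rho$ and $\mu_k$ each pick up a multiplicative factor bounded by $e^{\epsilon_3}$ and $e^{\epsilon_2}$ respectively by the $\Delta$-sensitivity bound. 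Multiplying through yields the claimed $(\epsilon_2+\epsilon_3)$-entry-level differential privacy guarantee.

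The main obstacle I anticipate is really the sensitivity step rather than the SVT integration. Without the deterministic, data-independent seed construction of Algorithm \ref{algo:fingerprint_sp_n}, a single-entry change in $\R$ could in principle cascade through the fingerprint selection logic, marking a completely different set of bits and making $|q_i(\R)-q_i(\R')|$ arbitrarily large. The observation that the seed uses only immutable primary keys and attribute/bit indices is what localizes the change and delivers the $\Delta$ bound; once this is settled, the SVT change of variables is routine.
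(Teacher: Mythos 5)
You take essentially the same route as the paper's proof in Appendix \ref{sec:proof_qualified_section}: both arguments hinge on the observation that the pseudorandom seeds in Algorithm \ref{algo:fingerprint_sp_n} are built only from the secret key, the immutable primary keys, and the attribute/bit indices, so a neighboring pair $\R,\R'$ is marked at identical positions with identical mark bits, the two fingerprint densities differ in a single summand, and $|q_i(\R)-q_i(\R')|\le\Delta$; the remainder is the standard SVT change of variables, collecting $e^{\epsilon_3}$ from the threshold noise and $e^{\epsilon_2}$ from the query noise. (Like the paper, you also quietly treat the threshold noise as drawn once per run rather than per iteration, which is what makes a single change of variables legitimate.)

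The one step that does not work as written is your choice of shift. Substituting $\hat{\rho}=\rho+(q_k(\R')-q_k(\R))$ only moves the rejection terms at iterations $i<k$ in the right direction if $q_i(\R')-q_i(\R)\le q_k(\R')-q_k(\R)$ for every $i<k$, which nothing guarantees --- the per-iteration differences can have either sign within $[-\Delta,\Delta]$. Moreover, shifting $\rho$ and $\mu_k$ by the same amount leaves the accepting inequality $q_k(\R)+\mu_k\ge\Gamma+\rho$ invariant rather than converting $q_k(\R)$ into $q_k(\R')$. The paper avoids both problems by shifting the threshold noise uniformly by the worst case $\Delta$, so that every rejection term satisfies $f_i(\R,z-\Delta)\le f_i(\R',z)$ directly from the sensitivity bound, and then paying the Laplace ratio on the single accepting term. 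Your sensitivity analysis --- the genuinely nontrivial, problem-specific part --- is correct and identical to the paper's; with the uniform $\Delta$ shift in place of your iteration-specific one, the rest of your argument goes through.
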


We show the proof of Theorem \ref{thm:proof_qualified} in Appendix \ref{sec:proof_qualified_section}. Note that although we allocate the privacy budget $\epsilon$ in the course of  generating the fingerprinted database for the SP at Line \ref{line:generate_MR} in Algorithm \ref{algo:DeterminingQualifiedSP}, it does not contribute to the total privacy loss. This is because here, we  only determine $ID_{\mathrm{internal}}^c$ that is used for fingerprint insertion, but the numerical fingerprinted database has not been shared yet.


\subsection{Composition of Intermediate Steps: Releasing Multiple Numerical Fingerprinted Databases}\label{sec:multiple_shares}


In the previous section, we have presented an intermediate step, in which, to guarantee that an SP receives a copy of fingerprinted database, the database owner keeps generating an instance of internal ID for it until the noisy comparison result is ``TRUE''. 
Here, we compose the intermediate steps for $C$ times to determine the internal IDs for $C$ SPs, and at the same time, share the corresponding fingerprinted databases (generated using their final internal IDs) with them. 

We first restate the advanced composition theorem, which provides a tight cumulative privacy loss for adaptive composition of  differentially-private mechanisms (see \cite{dwork2014algorithmic} page 49), in context of our adopted privacy model, i.e.,  entry-level differentially privacy.

\begin{theorem}[Advanced Composition]\label{thm:comp}
For all $\epsilon,\delta,\delta'\geq 0$, the $C$-fold   composition of $(\epsilon,\delta)$-entry-level differentially private mechanisms satisfy \resizebox{0.65\hsize}{!}{$\Big(\sqrt{2C\ln(1/\delta')}\epsilon+C\epsilon(e^\epsilon-1),C\delta+\delta'\Big)$}-entry-level differential privacy.
\end{theorem}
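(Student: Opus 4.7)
The plan is to observe that entry-level differential privacy (Definition \ref{def_dp}) has the \emph{same structural form} as ordinary differential privacy---the only change is the notion of neighboring databases (Definition \ref{def:neighboring})---so the classical advanced composition theorem of Dwork, Rothblum, and Vadhan transfers verbatim once a pair of entry-neighboring databases $\R,\R'$ is fixed. Accordingly, I would proceed by adapting the standard privacy-loss-random-variable argument.

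Fix an arbitrary pair of entry-neighboring databases $\R,\R'$ and any joint outcome $(s_1,\ldots,s_C)$ of the $C$-fold adaptive composition $\mathcal{M}=(\mathcal{M}_1,\ldots,\mathcal{M}_C)$. The first step is to reduce from $(\epsilon,\delta)$ to pure $(\epsilon,0)$ on a high-probability event: by a standard decomposition (see \cite{dwork2014algorithmic}), each $(\epsilon,\delta)$-entry-level DP mechanism can be coupled with an $(\epsilon,0)$-entry-level DP mechanism that agrees with it outside a bad event of probability at most $\delta$. A union bound over the $C$ rounds absorbs a $C\delta$ term into the final slack parameter, so from here on I may treat each $\mathcal{M}_i$ as $(\epsilon,0)$-entry-level DP while adding $C\delta$ to the final $\delta'$.

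Next, define the privacy-loss increments $Z_i = \ln\frac{\Pr[\mathcal{M}_i(\R)=s_i\mid s_{<i}]}{\Pr[\mathcal{M}_i(\R')=s_i\mid s_{<i}]}$. The pure-DP condition gives the pointwise bound $|Z_i|\leq \epsilon$. The second key step is the conditional-mean bound $\mathbb{E}[Z_i\mid s_{<i}]\leq \epsilon(e^\epsilon-1)$, which is a KL-divergence estimate for pure DP: using $|Z_i|\leq\epsilon$ together with the identity $\mathbb{E}_{s_i\sim\mathcal{M}_i(\R')}[e^{Z_i}]=1$, a short algebraic manipulation yields the stated bound. This makes $\{Z_i-\mathbb{E}[Z_i\mid s_{<i}]\}$ a martingale difference sequence with increments bounded by $2\epsilon$, and Azuma--Hoeffding gives $\sum_{i=1}^C Z_i \leq C\epsilon(e^\epsilon-1) + \epsilon\sqrt{2C\ln(1/\delta')}$ with probability at least $1-\delta'$ over the outputs drawn from $\mathcal{M}(\R)$.

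The final step is to convert this tail bound on the cumulative privacy loss into the desired $(\epsilon',\,C\delta+\delta')$-entry-level DP statement by the standard equivalence between high-probability privacy-loss bounds and approximate DP: for every measurable $S$, partitioning on whether the privacy-loss exceeds $\epsilon' = \sqrt{2C\ln(1/\delta')}\epsilon + C\epsilon(e^\epsilon-1)$ yields $\Pr[\mathcal{M}(\R)\in S]\leq e^{\epsilon'}\Pr[\mathcal{M}(\R')\in S] + C\delta + \delta'$. I expect the main obstacle to be the conditional-mean estimate $\mathbb{E}[Z_i\mid s_{<i}]\leq \epsilon(e^\epsilon-1)$---it is tighter than what the naive pointwise bound $|Z_i|\leq\epsilon$ gives, and is precisely what yields the improvement over basic ($C\epsilon$) composition; everything else is a direct reduction to the standard proof, modulo verifying that the entry-level neighbor relation plays exactly the same role as the row-level one throughout the argument.
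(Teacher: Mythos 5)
Your proposal is correct and takes essentially the same route as the paper, which simply observes that Theorem~\ref{thm:comp} follows by ``the exact procedures of proving the advanced composition theorem for conventional differential privacy with a simple switch of the privacy model''; you have merely unpacked that standard Dwork--Rothblum--Vadhan argument (the $(\epsilon,\delta)\to(\epsilon,0)$ decomposition, the KL bound $\mathbb{E}[Z_i\mid s_{<i}]\leq\epsilon(e^\epsilon-1)$, and Azuma--Hoeffding) in more detail than the paper does. The one point worth double-checking in a full write-up is that the conditional mean bound is taken with respect to $s_i\sim\mathcal{M}_i(\R)$ while the normalization $\mathbb{E}[e^{Z_i}]=1$ is with respect to $\mathcal{M}_i(\R')$, but this is exactly as in the classical proof and the entry-level neighbor relation changes nothing.
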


Theorem \ref{thm:comp} can be proved by following the exact procedures of proving the advanced composition theorem for conventional differential privacy with a simple switch of the privacy model.  In Theorem \ref{thm:comp}, $\delta'$ is determined by the database owner to ensure the desired $\delta'$-approximate max divergence between two random variables (\cite{dwork2014algorithmic} page 43). Using this, we propose Algorithm \ref{algo:MultipleFingerprintedDatabasesReleasing} (shown in Appendix \ref{sec:appendix_multiple_shares}),  which outputs the numeric fingerprinted databases to  $C$ SPs, and all shared database copies meet the database owner's requirements for data privacy and fingerprint robustness. 
The workflow of Algorithm \ref{algo:MultipleFingerprintedDatabasesReleasing} is similar to that of Algorithm \ref{algo:DeterminingQualifiedSP}, the only difference is that Algorithm \ref{algo:MultipleFingerprintedDatabasesReleasing} outputs the numerical fingerprinted database generated by the final internal ID of each SP instead of the symbol $\top$. We include a box around Line \ref{line:output_fpdp} in Algorithm \ref{algo:MultipleFingerprintedDatabasesReleasing} to highlight this.  
Whereas, Algorithm \ref{algo:DeterminingQualifiedSP} just identifies one proper internal ID for each SP. Note that if the database owner wants to share its database with more than $C$ different SPs, it can reduce the value of the fingerprint density threshold $\Gamma$, which, however, compromises the privacy and fingerprint robustness of the shared databases, because reducing $\Gamma$ increases utility of the shared database.

\begin{theorem}\label{thm:multiple_share_is_dp}
Algorithm \ref{algo:MultipleFingerprintedDatabasesReleasing} is ($\epsilon_0,\delta_0$)-entry-level differentially private (defined in Definition \ref{def_dp}) with  $\epsilon_0 = \sqrt{2C\ln(\frac{1}{\delta'})}(\epsilon+\epsilon_2+\epsilon_3)+C\Big(\epsilon(e^{\epsilon}-1)+(\epsilon_2+\epsilon_3)(e^{\epsilon_2+\epsilon_3}-1)\Big)$, and $\delta_0 = 2\delta'$.
\end{theorem}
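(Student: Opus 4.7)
The plan is to view Algorithm \ref{algo:MultipleFingerprintedDatabasesReleasing} as the adaptive composition of $2C$ sub-mechanisms and apply Theorem \ref{thm:comp} in two groups so that the $e^\epsilon-1$ factor is not inflated unnecessarily. First, I would separate the outputs into two streams: (i) for each $c\in[1,C]$, the transcript produced by the embedded call to Algorithm \ref{algo:DeterminingQualifiedSP}, which is $(\epsilon_2+\epsilon_3)$-entry-level DP by Theorem \ref{thm:proof_qualified}; and (ii) for each $c\in[1,C]$, the numerical fingerprinted database $\mathcal{M}(\R)$ released at the boxed line (Line \ref{line:output_fpdp}), which is $\epsilon$-entry-level DP by Theorem \ref{corollary:dpfp_share}. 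Note that the final internal ID $ID_{\mathrm{internal}}^c$ fed into Algorithm \ref{algo:fingerprint_sp_n} is determined by the transcript from group (i) but is treated as an auxiliary deterministic input from the viewpoint of the privacy analysis of group (ii) (post-processing/adaptivity is allowed).

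Next, I would apply Theorem \ref{thm:comp} separately to each stream with the same auxiliary parameter $\delta'$. Applied to the $C$ instances in group (i), each $(\epsilon_2+\epsilon_3,0)$-entry-level DP, the $C$-fold adaptive composition yields
\begin{equation*}
\Bigl(\sqrt{2C\ln(1/\delta')}\,(\epsilon_2+\epsilon_3)+C(\epsilon_2+\epsilon_3)\bigl(e^{\epsilon_2+\epsilon_3}-1\bigr),\;\delta'\Bigr).
\end{equation*}
Applied to the $C$ instances in group (ii), each $(\epsilon,0)$-entry-level DP, it yields
\begin{equation*}
\Bigl(\sqrt{2C\ln(1/\delta')}\,\epsilon+C\epsilon\bigl(e^{\epsilon}-1\bigr),\;\delta'\Bigr).
\end{equation*}

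Finally, I would combine the two streams by basic (sequential) composition of $(\epsilon,\delta)$-DP mechanisms, whose $\epsilon$'s and $\delta$'s simply add. Summing the two pairs above gives exactly
\begin{equation*}
\epsilon_0=\sqrt{2C\ln(1/\delta')}(\epsilon+\epsilon_2+\epsilon_3)+C\bigl(\epsilon(e^{\epsilon}-1)+(\epsilon_2+\epsilon_3)(e^{\epsilon_2+\epsilon_3}-1)\bigr),\qquad \delta_0=2\delta',
\end{equation*}
as claimed. The main obstacle, and the reason I split the outputs into two groups rather than composing $C$ copies of a single $(\epsilon+\epsilon_2+\epsilon_3)$-DP per-SP mechanism, is that lumping them would yield the looser second-order term $C(\epsilon+\epsilon_2+\epsilon_3)(e^{\epsilon+\epsilon_2+\epsilon_3}-1)$, which exceeds the stated expression by convexity of $x\mapsto x(e^x-1)$; careful bookkeeping of adaptivity across the two streams (in particular that releasing the noisy flag $\top$ and then the database generated from the same internal ID is legitimate adaptive composition) is what makes the tighter split-then-combine argument valid.
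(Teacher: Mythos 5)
Your proposal is correct and follows essentially the same route as the paper: decompose Algorithm \ref{algo:MultipleFingerprintedDatabasesReleasing} into $C$ rounds of Algorithm \ref{algo:DeterminingQualifiedSP} and $C$ rounds of Algorithm \ref{algo:fingerprint_sp_n}, apply Theorem \ref{thm:comp} to each group separately, and then add the resulting $(\epsilon,\delta)$ pairs by basic composition. Your added remarks on adaptivity and on why the split-then-combine ordering yields the tighter second-order term are sound elaborations of the same argument.
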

We show the proof of Theorem \ref{thm:multiple_share_is_dp} in Appendix \ref{sec:multiple_share_is_dp_proof}.

\noindent\textbf{Privacy budget allocation.} In practice, given the cumulative privacy budget $\epsilon_0$ and $\delta'$ specified by the database owner, we need to decide how to assign the values of $\epsilon$, $\epsilon_2$, and $\epsilon_3$. Since $\epsilon$ is used to obtain the fingerprinted database, its value should be determined based on the specific database of interest. Specifically, as we have shown in Figure \ref{fig:relationships}, a lower $\epsilon$ leads to lower database utility and a higher $p$ value (the probability of changing one insignificant bit of a data entry), which, in turn, increases fingerprint robustness. As a result, the database owner can decide $\epsilon$ based on its requirements about database utility and fingerprint robustness. 

Furthermore, we note that $(\epsilon_2+\epsilon_3)$ is used to obtain the internal IDs of SPs. Once $\epsilon$ is decided, the database owner can solve for $(\epsilon_2+\epsilon_3)$ numerically, i.e.,
\begin{equation}\label{eq:numerical_eq}
\resizebox{1\hsize}{!}{$\Big(\sqrt{2C\ln(\frac{1}{\delta'})}-C\Big) (\epsilon_2+\epsilon_3)+ C(\epsilon_2+\epsilon_3)e^{\epsilon_2+\epsilon_3} = \epsilon_0-\Big(\sqrt{2C\ln(\frac{1}{\delta'})}-C\Big)\epsilon-C\epsilon e^{\epsilon}$}
\end{equation}

Suppose the numerical solution is $(\epsilon_2+\epsilon_3) = \epsilon^*$. Then, we need to allocate $\epsilon^*$ to $\epsilon_2$ and $\epsilon_3$. We observe that $\epsilon_2$ and $\epsilon_3$ control the accuracy of noisy comparison, i.e., the comparison between the perturbed fingerprint density and the perturbed threshold, $||\mathcal{M}_i(\R)-\R||_{1,1} +\mu_i \geq \Gamma+\rho_i$ (or equivalently,   $||\mathcal{M}_i(\R)-\R||_{1,1} -\Gamma\geq \rho_i-\mu_i$) in both Algorithms \ref{algo:DeterminingQualifiedSP} and \ref{algo:MultipleFingerprintedDatabasesReleasing}. 
To boost the accuracy of the noisy comparison, we minimize the variance of the difference between $\rho_i$ and $\mu_i$. Since they are both Laplace random variables, the variance of their difference is $2(\frac{\Delta}{\epsilon_2})^2+2(\frac{\Delta}{\epsilon_3})^2$. Clearly, given $\epsilon^*$, the variance is minimized when $\epsilon_2=\epsilon_3=\epsilon^*/2$.  Note that in classical SVT, the database owner does not respond to all the queries, i.e., it merely reports ``$\bot$'' if the considered noisy comparison is ``FALSE'' (see \cite{dwork2014algorithmic} page 55), however, this is not user-friendly in database sharing, especially, when the database owner still has more privacy budget remained. In our proposed SVT-based solution, we make sure all SPs get their fingerprinted databases as long as they are among the top $C$ SPs sending the query request. This is achieved by letting the database owner keep generating new internal IDs for the SPs, until the noisy comparison turns out to be ``TRUE'', and this approach does not violate the design principle of SVT.

\section{Experiments}
\label{sec:experiment}

Here, we evaluate the developed differentially-private relational database fingerprinting mechanism under both single and multiple database sharing scenarios.  Note that some experiment results are deferred to Appendix \ref{sec:miss_exp}.

\subsection{Experiment Setup}\label{sec:setup}
The database we used for the evaluations is the nursery school application database \cite{asuncion2007uci}, which contains data records of  12960 applicants. Each applicant has 8 categorical attributes, e.g., ``form of the family'' (complete, completed, incomplete, or foster),  and ``number of children in the family'' (1, 2, 3, or more). 
Each data record belongs to one of the five classes, i.e.,    ``not\_recom'',  ``recommend'',      ``very\_recom'',   ``priority'', and     ``spec\_prior'', indicting the decisions for the applicants.

\noindent\textbf{Database encoding.} To fingerprint the categorical data, we first represent all possible values of each attribute as an integer starting from 0.  In the considered database, the maximum integer representation of a data entry is 4 (we do not fingerprint the labels, which will be used in a classification task to evaluate the utility of the fingerprinted database). We also perform zero padding to make sure that the binary representations of different data entries  have the same length. 
Note that our proposed mechanism is not limited to discrete or categorical attributes only. If a relation has continuous attributes, we can first quantize the entry  values into non-overlapping ranges, and then represent each range interval using an integer \cite{ji2021curse}.

\noindent\textbf{Sensitivity Control.} Since the integer representations of data entries vary from 0 to 4,  the sensitivity is $\Delta=4$. Thus, the proposed mechanism needs to fingerprint $K = \log_24+1 =3$ (see Theorem \ref{thm:dpfp}) least significant bits of each data entry. This, however, may significantly compromise the utility of the fingerprinted database.  To control the sensitivity (and hence improve the utility), we make the following observation.  
We calculate the fraction of pairwise absolute differences taking a specific value  (between the attributes) and show the results in Table \ref{table:pairwise_diff} in Appendix \ref{sec:fraction_diff_result}.  Clearly, in each class, a large portion of the absolute differences are 0 and 1, and only a small fraction of them have difference larger than 1. 
Thus, in the  experiments, we consider sensitivity $\Delta=1$ with the assumption that the different entries in a pair of neighboring nursery databases can change by at most by 1, otherwise, it introduces a rare event (i.e., data record that occurs with very low probability) in the database. Our approach to control the sensitivity is similar to the restricted sensitivity   \cite{blocki2013differentially} (which  calculates sensitivity on a restricted subset of the database, instead of quantifying over all possible data records) and smooth sensitivity \cite{kellaris2013practical} (which smooths the data records after partitioning them into non-overlapping groups).  
Note that it has been widely recognized that rare events or outliers consume extra privacy budget, and this is a common problem in differentially-private database queries  \cite{chaudhuri2006random,lui2015outlier,dwork2009differential,li2012sampling}. Controlling local and global sensitivity in differential privacy is a separate topic, and it is beyond the scope of this paper. 



\noindent\textbf{Post-processing.} After fingerprinting a chosen database ($\R$),  some entries may have integer representations that are outside the domain of the considered database.  
Thus, we also conduct a post-processing on the resulting database ($\mathcal{M}(\R)$) to eliminate the   data entries that are not in the original domain. Otherwise, the SP which receives the database can understand that these entries are changed due to fingerprinting. 




\subsection{Evaluations for One-time Sharing}

We first consider the scenario, where the database owner only releases the fingerprinted database to one SP. Thus, only Algorithm \ref{algo:fingerprint_sp_n} is invoked.

\subsubsection{Fingerprint Robustness}  

Among  common attacks against database fingerprinting mechanisms (i.e., random flipping attack, subset attack, and superset attack~\cite{li2005fingerprinting}), random flipping attack is shown to be the most powerful one \cite{li2005fingerprinting,yilmaz2020collusion},  thus,  we investigate the  robustness of the proposed mechanism against this attack.  In Section \ref{sec:discussion}, we will discuss how to make the proposed mechanism robust against more sophisticated attacks, e.g., the collusion attack \cite{yilmaz2020collusion} and correlation attack \cite{ji2021curse}. In particular, in favor of the malicious SP (and in order to show the robustness of our mechanism), we let the malicious SP randomly flip 80\% of the bit positions in its received copy of fingerprinted database $\mathcal{M}(\R)$, i.e., $\overline{\R}$ (the compromised database) is obtained by flipping $80\%$ bits in $\mathcal{M}(\R)$. Then, we measure the fingerprint robustness using  the number of bit matches between the malicious SP's fingerprint   and the one extracted from $\overline{\R}$. 

We  compare the fingerprint robustness with a state-of-the-art database fingerprinting mechanism proposed in \cite{li2005fingerprinting}.  The reason we choose this mechanism  is because  it also has high robustness, e.g., the probability of detecting no valid fingerprint due to  random bit flipping attack   is upper bounded by ${(|SP|-1)}/{2^L}$ (here $|SP|$ is the number of SPs who have received the fingerprinted copies).  This mechanism controls the fingerprint density using fingerprint fraction (i.e., the fraction of fingerprinted data entries in $\R$) denoted as $\lambda$. 



\begin{wraptable}{r}{.7\linewidth}
    \begin{minipage}{1\linewidth}
\begin{adjustbox}{width=1\textwidth}
\begin{tabular}{| *{5}{c|} }
    \hline
 \multirow{2}{*}{$\epsilon$}&    $p$ set   & $\lambda$ set  &  \cellcolor{Red}bit matches  &  bit matches\\
& by us & in \cite{li2005fingerprinting} & \cellcolor{Red}obtd. by us & obtd. by \cite{li2005fingerprinting}\\
    \hline
1 &0.2689   &  21.38\%  &\cellcolor{Red}128 & 128\\  
2 &0.1192    & 9.75\% &\cellcolor{Red}127   &122\\ 
3 &0.0474    & 3.90\%&\cellcolor{Red}120  &96 \\ 
4 &0.0180    & 1.64\%&\cellcolor{Red}106   &82\\ 
5 &0.0067    & 0.65\%&\cellcolor{Red}84  &72\\ 
6 &0.0025    & 0.24\%&\cellcolor{Red}71  &50\\ 
7 &0.0009    & 0.07\%&\cellcolor{Red}67   &30\\ 
    \hline
\end{tabular}
  \end{adjustbox}
   \end{minipage}
   \captionsetup{font=footnotesize}
\caption{Comparison of fingerprint robustness achieved by us and \cite{li2005fingerprinting} when  random flipping attack  changes 80\%    bit positions in $\mathcal{M}(\R)$. Each row in the table corresponds to the same database utility for both mechanisms.} 
\label{table:dpfp_vs_vannila}
\end{wraptable} 
We present the comparison results in Table \ref{table:dpfp_vs_vannila}. In particular, we vary the privacy budget of our entry-level  differentially-private fingerprinting mechanism, i.e., $\epsilon$ in Algorithm \ref{algo:fingerprint_sp_n}, from 1 to 7, set the corresponding probability of changing an insignificant bit as $p = 1/(e^{{\epsilon}/{K}}+1)$ (according to Theorem \ref{thm:dpfp}), and  set the fingerprint fraction ($\lambda$) used in \cite{li2005fingerprinting} to be the percentage of changed data entries caused by our proposed mechanism under varying $\epsilon$ followed by post-processing. 
Clearly, our proposed mechanism (the highlighted column in Table \ref{table:dpfp_vs_vannila}) achieves higher fingerprint robustness. Note that the database utility of the two mechanisms are identical in  each row in Table \ref{table:dpfp_vs_vannila}. As $p$ (or $\lambda$) decreases (i.e., less bit positions are marked by either fingerprint mechanisms), the number of matched fingerprint bits between the extracted fingerprint from $\overline{\R}$ and the malicious SP's fingerprint also decreases. However, the robustness of the proposed scheme remains significantly high compared to~\cite{li2005fingerprinting}. 
Especially, when $p\leq 0.0025$, even if the malicious SP changes $80\%$ of the bits in $\mathcal{M}(\R)$, it is still not able to compromise more than half of the fingerprint bits (i.e., 64 out of 128) generated by the proposed mechanism. It has been empirically shown that as long as the malicious SP does not distort  more than half of the fingerprint bits, it will end up being uniquely identifiable (by the fingerprint detection algorithm of the database owner)~\cite{li2005fingerprinting}. In contrast, when $\lambda\leq 0.24\%$, the malicious SP can compromise more than half of the fingerprint bits generated by \cite{li2005fingerprinting}, which means that it will not be accused although it has illegally redistributed the data.  In addition to higher fingerprint robustness, our proposed mechanism also achieves   privacy guarantee at the same time.


In Section \ref{sec:robust_rand_flip}, we derive the 
 robustness (denoted as $P_{\mathrm{rbst\_rnd}}$) of our mechanism against random bit flipping attack in terms of flipping probability $\gamma_{\mathrm{rnd}}$. In    Appendix \ref{sec:analysis_rnd}, we show that $P_{\mathrm{rbst\_rnd}}$ is monotonically increasing with $p$ ($p\in(0,0.5)$), i.e., we showed how the robustness of the proposed mechanism against   random bit flipping attack increases with the increasing number data entries are changed due to fingerprinting. 
 By letting $D=64$ (threshold of number of matched fingerprint  bits, see   Section \ref{sec:robust_rand_flip}), $\gamma_{\mathrm{rnd}}=0.8$, and setting $p$ as the values reported in Table \ref{table:dpfp_vs_vannila}, we plot the numerical value of $P_{\mathrm{rbst\_rnd}}$  in Figure \ref{fig:robust_rnd} (blue line) in Appendix \ref{sec:app_robust_rnd}. 
 In the same figure, the green line is the percentage  of bit matches obtained by our mechanism as a result of our empirical evaluation (i.e., values in the highlighted columns in Table \ref{table:dpfp_vs_vannila} divided by 128). We observe that both lines in the figure almost overlap, which validates our theoretical  findings. Besides, we have also evaluated the  averaged error  of   entries in the fingerprinted database, and we observe that empirical value is close to $\Delta p$ as proved in Proposition \ref{thm:cell_accuracy}.
 
In addition to fingerprint robustness, in Appendix \ref{sec:app_att_inf_atk}, we also evaluate the privacy guarantee of the proposed mechanism by empirically investigating the inference capability of an adversary that is formulated in \cite{liu2016dependence}. Our experimental results show that attacker's inference capability is always bounded by our theoretical findings in Proposition \ref{prop:att_inf_bound} under varying $\epsilon$, which suggests the robustness of our mechanism against attribute inference attack.


\subsubsection{Utility of the Shared Database}\label{sec:two-stage-approach}

Here, to experimentally show the utility guarantees of the proposed entry-level  differentially-private fingerprint mechanism, we compare the utility of it with that achieved by a two-stage approach, i.e., first perturbing the entire database under  local  differential privacy with the identical $\epsilon$ and then fingerprinting the perturbed database using \cite{li2005fingerprinting} as before. 
We conduct the comparison from two perspectives: (i) application task-independent comparison, which considers the change of variance of each attribute caused  by  our   mechanism and the two-stage approach and the accuracy of specific SQL queries after fingerprinting, and (ii) application task-specific  comparison, where we use   fingerprinted databases (ours and the two-stage approach) to do   classification   and principal component analysis (PCA).

\noindent\textbf{Application task-independent comparison.} Please refer to Appendix \ref{sec:exp_task_indpend} for the setup and detailed results of these experiments. Here, we only highlight the empirical findings. Our proposed mechanism always achieve higher utility, i.e., the  changes in invariance of each attribute caused  by  our  mechanism  are  generally  10  times smaller  than  that  caused  by  the  two-stage approach, and our approach also achieves higher SQL query accuracy given different choices of $\epsilon$. These imply that our proposed mechanism can also achieve higher utility for various task-specific applications, because in data mining or machine learning, the results usually depend on the intrinsic data distributions, and more accurate variance (or covariance matrices) lead to better performance. We will validate this experimentally  in the following.

\noindent\textbf{Task-specific comparison.} As discussed in Section \ref{sec:setup}, data records belong to 5 classes. To perform   classification, we adopt a multi-class support vector machine (SVM) classifier  and use 65\% of data records for training and the rest for testing. We evaluate the utility of various fingerprinted databases by comparing the \textbf{fingerprinted testing accuracy} (i.e., SVM classifier trained on fingerprinted training data and then tested on the original testing data) with the \textbf{original testing accuracy} (i.e., SVM classifier trained on the original training data and then tested on the original testing data). Thus, the less the difference between fingerprinted testing accuracy and original testing accuracy, the higher the database utility.

The database utility for PCA is defined as the total deviation,   $\mathrm{TTL_{DEV}} =   \textstyle\sum_{i=1}^{8} |\lambda_i-\widetilde{\mathbf{v}_i}^T  \mathbf{C}  \widetilde{\mathbf{v}_i}|$, where   $\mathbf{C}$ is the  empirical  covariance matrix obtained from the original (non-fingerprinted) database, $\lambda_i$ values are the eigenvalues of $\mathbf{C}$, and $\widetilde{\mathbf{v}_i}$ vectors are the eigenvectors of the empirical  covariance matrix obtained from    fingerprinted database.   Specifically,  $\mathrm{TTL_{DEV}}$ quantifies the deviation of the variance (of the fingerprinted database) from $\lambda_i$ in the direction of the $i$th component of $\mathbf{C}$. The smaller the deviation ($\mathrm{TTL_{DEV}}$) is, the higher the utility.

In Figure \ref{fig:task-specific}(a) and (b) (see Appendix \ref{sec:app_exp_task_specific}) by varying $\epsilon$ from 0.25 to 2, we compare the task-specific database utilities for   classification and PCA achieved by our mechanism and the two-stage approach. Clearly, our proposed mechanism achieves higher database utilities in both considered applications. As future work, we will also investigate database utility in more sophisticated tasks, such deep learning and federated learning, where stochastic gradient descent will be iteratively calculated using the fingerprinted databases.

\subsection{Evaluations for Multiple Sharing}\label{sec:exp_multiple_shares}
Next, we consider the scenario where at most 100 SPs query the entire database over time in a sequential order (i.e., $C=100$  in order to control the cumulative privacy loss). As discussed in Section \ref{sec:multiple_sharing}, the database owner performs the noisy comparison ($||\mathcal{M}_i^c(\R)-\R||_{1,1}+\mu_i\geq \Gamma +\rho_i$, where $\Gamma$ is the fingerprint density, $\mu_i$ and $\rho_i$ are Laplace noises) to   determine the proper internal IDs for the SPs  to generate the  fingerprinted databases. In the experiment, we set $\Gamma = (\frac{1}{2}+\frac{1}{\sqrt{12}})\Delta p NK$. The rationality  is that according to Corollary \ref{corollary:density}, the expected value of $||\mathcal{M}_i^c(\R)-\R||_{1,1}$ falls in the range of $[0,\Delta pNT]$.  Since we do not have any assumption on the database, and the pseudorandom number generator $\mathcal{U}$ generates each random number with equal probability, we approximately model   $||\mathcal{M}_i^c(\R)-\R||_{1,1}$  as  a uniformly distributed  random variable   in the range of $[0,\Delta p NT]$. Then, its  mean  and standard deviation are  $\frac{1}{2}\Delta p NT$ and   $\frac{1}{\sqrt{12}}\Delta p NK$. 


Moreover, we consider the cumulative privacy loss as $\epsilon_0 = 40$ and $\delta_0 = 2*10^{-3}$.  If $\epsilon_0<40$ and the database owner   still wants to generate fingerprinted databases with the identical privacy and fingerprint robustness guarantees as when $\epsilon_0=40$, it will end up sharing its database with fewer number of SPs. To achieve a decent  database utility, 
we set the privacy budget to generate the entry-level differentially-private fingerprinted database as $\epsilon = 0.5$. Then, by solving (\ref{eq:numerical_eq}) numerically, we have $\epsilon_2+\epsilon_3=0.002$ approximately.

\begin{wraptable}{r}{.75\linewidth}
    \begin{minipage}{1\linewidth}
\begin{adjustbox}{width=1\textwidth}
\begin{tabular}{| *{6}{c|} }
    \hline
$\epsilon_2:\epsilon_3$&    $9:1$    & $7:1$   & $5:1$  & $3:1$  &  $1:1$\\
    \hline
Total No. of trials& 181 & 177 & 173 & 165 & 156 \\
    \hline
\end{tabular}
  \end{adjustbox}
   \end{minipage}
    \captionsetup{font=footnotesize}
\caption{Impact of the ratio between $\epsilon_2$ and $\epsilon_3$ on the total number of internal ID generation trials for 100 SPs.}
\label{table:ratio_impact}
\end{wraptable} 
We first investigate the impact of privacy allocation between  $\epsilon_2$  and $\epsilon_3$  on the total number of trials to determine the proper internal IDs for all 100 SPs. In particular, we vary the ratio between $\epsilon_2$ and $\epsilon_3$ from  $9:1$ to $1:1$, and show the results in Table \ref{table:ratio_impact}. We observe that as the difference between $\epsilon_2$ and $\epsilon_3$ decreases (their ratio decreases), Algorithm \ref{algo:MultipleFingerprintedDatabasesReleasing} terminates with fewer internal ID generation trials. Especially, when $\epsilon_2:\epsilon_3 = 9:1$, 81 (181 instead of 100) additional trials are made, whereas, when $\epsilon_2 : \epsilon_3 = 1:1$, only 56 (156 instead of 100) additional  trials are made. 
Since the cumulative privacy loss is identical for the different total  number of trials reported in Table~\ref{table:ratio_impact}, this suggests that when $\epsilon_2 = \epsilon_3$, the internal ID generation efficiency is higher from the perspective of the data recipients (i.e., the probability of the database owner can generate a proper internal ID for an SP in 1 or 2 trials increases). This finding validates our suggestion of equally dividing  the privacy budget between $\epsilon_2$ and $\epsilon_3$  to reduce the number of trials to generate proper internal IDs. Besides, we also would like to highlight that by adopting SVT, we  significantly reduce the cumulative privacy loss, because otherwise, the privacy loss will be $(\epsilon+\epsilon_2+\epsilon_3)\times (\mathrm{Total\ No.\ of\ trials})$. For instance, when  $\mathrm{Total\ No.\ of\ trials} = 181$, the privacy would be $\epsilon = (0.5+0.02)\times 181 = 90.862$ without SVT (compared to a cumulative privacy loss of $40$ with SVT).


Since the fingerprinted databases shared with different SPs   are all obtained under $\epsilon = 0.5$ (but with different fingerprints,  decided by their final internal  IDs), their privacy guarantees and fingerprint robustness   are   identical. In Figure \ref{fig:density_plot} (shown in Appendix \ref{sec:app_multiple_sahres}), we plot the fingerprint densities of the shared databases when $\epsilon_2 = \epsilon_3$, i.e., $||\mathcal{M}(\R)-\R||_{1,1}$ as a result of sharing with 100 SPs. In particular, the red line represents the result achieved by us and the blue line represents the result achieved by the two-stage approach (discussed in Section \ref{sec:two-stage-approach}), which is executed for 100 times in order to share the database with 100 SPs. For each execution of the two-stage approach, we let it have the same privacy and fingerprint robustness guarantees as ours. Clearly, our mechanism not only achieves lower fingerprint density, but it also makes the fingerprint density have smaller variance (the variance of results of ours and the two-stage approach are evaluated as 317 and 871, respectively). It means that the   fingerprinted databases generated by us have more stable utilities, and are promising for real-life applications.


\section{Discussion}\label{sec:discussion}

Our work is a first step in   uniting differential privacy and   database fingerprinting. We believe that it will draw attention to other challenges and urgent research problems, which we plan to investigate in the future. 
We list some of the potential extensions of this work in the following. 

\subsection{Augmenting the    entry-level differentially-private mechanism against attribute inference attack using data correlation}\label{sec:diss_att_corr}
As discussed in Section \ref{sec:privacy_against_att_inf}, the database owner can further augment the proposed   mechanism to achieve robustness against correlation based attribute inference attack by involving  
the auxiliary information of $\frac{\Pr(\boldsymbol{r}_i[t]=\zeta_1, \boldsymbol{r}_i[k])}{\Pr(\boldsymbol{r}_i[t]=\zeta_2,    \boldsymbol{r}_i[k])}$ in the mechanism design. Such augmentation can be obtained by taking a closer observation at the proof of Theorem \ref{thm:dpfp} (shown in Appendix \ref{sec:proof_thm_dpfp}). In particular, one intermediate step in the proof of Theorem \ref{thm:dpfp} considers deriving the upper bound of $\prod_{k=1}^{K} \frac{\Pr({\mathbf{r}_{i}}[t,k]\oplus B_{i,t,k} = \widetilde{\mathbf{r}_{i}}[t,k])} {\Pr(\mathbf{r}_{i}'[t,k]\oplus B_{i,t,k}' = \widetilde{\mathbf{r}_{i}}[t,k])}$. By using the correlations among attributes and plugging in the Bernoulli parameter $p$ (the probability of changing an insignificantbit in an entry in the database), this specific term can be upper bounded by $\prod_{k=1}^{K} \frac{\Pr({\mathbf{r}_{i}}[t,k]\oplus B_{i,t,k} = \widetilde{\mathbf{r}_{i}}[t,k])\times \sum_{\boldsymbol{r}_i[k]}\max_{\boldsymbol{r}_i[k]}\Pr(\boldsymbol{r}_i[t], \boldsymbol{r}_i[k])} {\Pr(\mathbf{r}_{i}'[t,k]\oplus B_{i,t,k}' = \widetilde{\mathbf{r}_{i}}[t,k])\times \sum_{\boldsymbol{r}_i[k]}\min_{\boldsymbol{r}_i[k]}\Pr(\boldsymbol{r}_i'[t],    \boldsymbol{r}_i[k])}\leq \prod_{k=1}^{K} \frac{(1-p)\times \sum_{\boldsymbol{r}_i[k]}\max_{\boldsymbol{r}_i[k]}\Pr(\boldsymbol{r}_i[t], \boldsymbol{r}_i[k])} {p \times \sum_{\boldsymbol{r}_i[k]}\min_{\boldsymbol{r}_i[k]}\Pr(\boldsymbol{r}_i'[t],    \boldsymbol{r}_i[k])}$, where the inequality can be achieved by following the similar steps presented in the proof of Theorem \ref{thm:dpfp}. Letting the new bound less than $e^{\epsilon}$, and then set the value of $p$ accordingly,  the proposed mechanism can achieve provable privacy guarantees against attribute  inference  attacks  that use  data  correlation. We will also work along this direction in future work.



\subsection{Mitigation of correlation attacks}
Ji et al. \cite{ji2021curse} have developed a mitigation technique to alleviate the correlation attacks against database fingerprinting.  
Their technique modifies a fingerprinted database to make sure that it has similar column- and row-wise joint distributions with the original database. 
Since their technique only changes the non-fingerprinted  data entries and it can be applied as a post-processing step after any  fingerprinting mechanism, it  can also be utilized following our 
mechanism to defend against the correlation attacks. In case of such an integration, our differential privacy guarantee will still hold because of the immunity property of differential privacy for post-processing~\cite{dwork2014algorithmic}.

\subsection{Incorporation with collusion-resistant fingerprinting}
In this work, we do not consider the collusion attacks (where multiple malicious SPs ally together to generate a pirated database from their unique fingerprinted
copies) on the fingerprinted databases. Several works have  proposed collusion-resistant fingerprinting mechanisms in the literature  \cite{boneh1998collusion,boneh1995collusion,yacobi2001improved,pfitzmann1997asymmetric}. To develop a  differentially-private and collusion-resistant fingerprinting mechanism, one potential solution is to replace the fingerprint generation step (i.e., Line \ref{line:generate_fp} of Algorithm \ref{algo:fingerprint_sp_n}) with the Boneh-Shaw (BS) codes \cite{boneh1995collusion} and decide $p$ (the probability of changing one insignificant bit of an entry) based on $\epsilon_1$ and the number of 1's in the BS codeword. We will explore this extension in future work.

\subsection{Improving database utility by utilizing data distribution}
In many real world applications, especially in machine learning and data mining, data records are usually assumed to follow a prior distribution, i.e., they are drawn i.i.d. from an underlying (multivariate) distribution denoted as $d_{\mathrm{prior}}$. In this case, we can  further improve the utility (e.g., accuracy) of the fingerprinted database by formulating a statistical estimation problem \cite{balle2018improving} as $\mathbf{R}' = \arg \min_{\mathbf{R}'}\mathbb{E}_{d_{\mathrm{prior}}}[\mathbb{E}[\mathcal{M}(\mathbf{R})-\mathbf{R}']]$, where $\mathbf{R}'$ is interpreted as the Bayesian denoised fingerprinted database, which is closer to the original database. Note that $\mathbf{R}'$ has the same privacy guarantee as $\mathcal{M}(\mathbf{R})$ because of the post-processing immunity property of differential privacy.



\section{Conclusions}
\label{sec:conclusion}

In this paper, we have proposed a novel mechanism that unites provable privacy and database  fingerprinting for sharing   relational databases. 
To this end, we have first devised a bit-level random response scheme to achieve $\epsilon$-entry-level differential privacy guarantee for the entire database, and then developed a concrete entry-level  differentially-private database fingerprinting mechanism on top of it. 
We have also provided the closed form expressions to characterize the connections between database utility, privacy protection, and fingerprint robustness.  
Finally, we have developed a SVT-based solution to share entry-level differentially-private fingerprinted databases with multiple recipients, and at the same time, control the cumulative privacy loss due to   repeated sharing. 
Experimental results on a real  relational database show that under the same database utility requirement, our mechanism provides stronger fingerprint robustness (along with   privacy guarantee) than other  
mechanisms. Under the same   privacy and fingerprint robustness guarantee,   our   mechanism achieves higher utility than   database perturbation followed by fingerprinting.

\typeout{}
\bibliographystyle{plain}
\bibliography{sample-base.bib}

 \clearpage

\appendices

\section{Frequently Used Notations} \label{sec:notations}
In Table \ref{table:notations}, we list   frequently used notations in the paper.
\begin{table}[htb]
\resizebox{0.8\textwidth}{!}{\begin{minipage}{\textwidth}
\begin{center}
\begin{tabular}{| *{2}{c|} }
    \hline
notations & descriptions\\
\hline
$\R$ & original database\\
\hline
$\R'$ & a neighboring database of $\R$\\
\hline
$\mathcal{M}(\R)$ & fingerprinted database\\
\hline
$\overline{\R}$ & leaked (pirated) database\\
\hline
$\mathbf{r}_i$& $i$th row of $\R$\\
\hline
$\mathbf{r}_i[t,k]$& the $k$th insignificant bit of the $t$th attribute of $\mathbf{r}_i$\\
\hline
 \multirow{2}{*}{$p$} & the probability of changing an insignificant\\
& bit in an entry in the database\\
\hline
 \multirow{2}{*}{$B$} & mark bit to fingerprint a bit position\\
&$B\sim\mathrm{Bernoulli}(p)$\\
\hline
$\epsilon $,$\epsilon_2$,$\epsilon_3$&privacy budgets\\
\hline
$\Delta$ & database sensitivity\\
    \hline
\end{tabular}
\end{center}
\end{minipage}}
    \captionsetup{font=footnotesize}
\caption{Frequently used notations in the paper.} 
\label{table:notations}
\end{table}

\section{Proof of Theorem \ref{thm:dpfp}}\label{sec:proof_thm_dpfp}
\begin{proof}[Proof]  Since we consider  neighboring databases   that have only a pair of different data entries which differ by at most $\Delta$, it requires $K = \floor[\big]{\log_2 \Delta}+1$ bits to encode the difference. Then, by applying Definition \ref{def_dp}, we have
\begin{gather*}
  \begin{aligned}
        & \frac{\Pr\Big(\mathcal{M}(\R) =\widetilde{\R} \Big)}{\Pr\Big(\mathcal{M}(\R') =\widetilde{\R} \Big)}\\
 \stackrel{(a)} = &     \prod_{k=1}^{K} \frac{\Pr\Big({\mathbf{r}_{i}}[t,k]\oplus B_{i,t,k} = \widetilde{\mathbf{r}_{i}}[t,k]\Big)} {\Pr\Big(\mathbf{r}_{i}'[t,k]\oplus B_{i,t,k}' = \widetilde{\mathbf{r}_{i}}[t,k]\Big)}\\
 =    & \prod_{k=1}^{K} \frac{\Pr\Big( B_{i,t,k} ={\mathbf{r}_{i}}[t,k]\oplus \widetilde{\mathbf{r}_{i}}[t,k]\Big)} {\Pr\Big(B_{i,t,k}' = \mathbf{r}_{i}'[t,k]\oplus  \widetilde{\mathbf{r}_{i}}[t,k]\Big)}   \\
 \stackrel{(b)}=  &   \prod_{k=1}^{K}  \frac{p^{\Big({\mathbf{r}_{i}}[t,k]\oplus \widetilde{\mathbf{r}_{i}}[t,k]\Big)}      (1-p)^{\Big(1-{\mathbf{r}_{i}}[t,k]\oplus \widetilde{\mathbf{r}_{i}}[t,k]\Big)} }{p^{\Big(\mathbf{r}_{i}'[t,k]\oplus \widetilde{\mathbf{r}_{i}}[t,k]\Big)}      (1-p)^{\Big(1-\mathbf{r}_{i}'[t,k]\oplus \widetilde{\mathbf{r}_{i}}[t,k]\Big)}}     \\
 \stackrel{(c)} =  &  \prod_{k=1}^{K}  \Big(\frac{1-p}{p}\Big)^{\Big(   ( {\mathbf{r}_{i}}[t,k]- \mathbf{r}_{i}'[t,k] ) (2\widetilde{\mathbf{r}_{i}}[t,k]-1)   \Big)}  \\
\leq  &   \prod_{k=1}^{K}  \Big(\frac{1-p}{p}\Big)^{\Big(   \big| {\mathbf{r}_{i}}[t,k]- \mathbf{r}_{i}'[t,k] \big| (2\widetilde{\mathbf{r}_{i}}[t,k]-1)   \Big)}  \\
\leq   &   \prod_{k=1}^{K}  \frac{1-p}{p},
  \end{aligned}
\end{gather*}
where $(a)$ can be obtained by assuming (without loss of generality) that $\R$ and $\R'$ differ at the $t$th attribute of the $i$th row, and thus the probability ratio at other entries cancel out.  $\mathbf{r}_{i}[t,k]$ (or $\mathbf{r}_{i}'[t,k]$) represents the $k$th least significant bit of the $t$th attribute of $\mathbf{r}_i$ (or $\mathbf{r}_i'$), $B_{i,t,k}$ (or $B_{i,t,k}'$) is the random mark bit fingerprinted on $\mathbf{r}_{i}[t,k]$ (or $\mathbf{r}_{i}'[t,k]$), and $\widetilde{\mathbf{r}_{i}}[t,k]$ is the identical result of the bit-level random response at this bit position. 
$(b)$ is because each of the last $K$ bits of entry $\mathbf{r}_{i}[t]$ (or $\mathbf{r}_{i}[t]'$) are changed   independently with probability $p$, and $(c)$ can be obtained by applying   $u\oplus v = (1-u)v+u(1-v)$ for any binary variable $u$ and $v$. Then, by making $\prod_{k=1}^{K}  \frac{1-p}{p}\leq \epsilon $, we complete the proof.
\end{proof}

\section{Proof of Theorem \ref{corollary:dpfp_share}}
\label{sec:dpfp_share_proof}
\begin{proof}
Since the value of $\mathcal{U}_j(s)$ (the $j$th random value generated by $\mathcal{U}$) is uniformly distributed for a seed $s$ \cite{boneh2017graduate}, we have $\Pr\Big(\mathcal{U}_1(s)\ \mathrm{mod}\ \floor{\frac{1}{2p}} =0\Big) = 1/\floor{\frac{1}{2p}} >2p$. Similarly,  $\Pr(x=0)=\frac{1}{2}$, thus, for any given fingerprint bit $f$, we also have $\Pr\Big(B=1, \mathcal{U}_1(s)\ \mathrm{mod}\ \floor{\frac{1}{2p}} =0\Big) \geq \frac{1}{2}2p = p$, which suggests that each $\mathbf{r}_i[t,k]$ will be changed (i.e., XORed by 1) with probability higher than $p$, and this satisfies the condition in Theorem \ref{thm:dpfp}.
\end{proof}

\section{The Fingerprint Extraction Algorithm}\label{sec:appendix_extraction}

Algorithm \ref{algo:extract} summarizes the main steps of extracting the fingerprint bit-strings from a leaked fingerprinted database (discussed in Section \ref{sec:extract_fp_leaked}). 

\begin{algorithm}[htb]
\small
\SetKwInOut{Input}{Input}
\SetKwInOut{Output}{Output}
\Input{The original   database $\mathbf{R}$, the leaked   database $\overline{\mathbf{R}}$, the Bernoulli distribution parameter  $p$, database owner's secret key $\mathcal{Y}$, pseudorandom number sequence generator $\mathcal{U}$, and a fingerprint template.}
\Output{The extracted fingerprint from the leaked database.}

Initialize $\mathbf{c}_0(l) = \mathbf{c}_1(l)=0, \forall l\in[1,L]$.



Construct the fingerprintable set $\overline{\mathcal{P}}$.

\ForAll{$\overline{\mathbf{r}_i}\in\overline{\mathcal{P}}$}{

Set pseudorandom seed $s = \{\mathcal{Y}|\mathbf{r}_i.PmyKey|t|k\}$, 


\If{$\mathcal{U}_1(s)\ \mathrm{mod}\  \floor{\frac{1}{2p}} = 0$}{

\resizebox{1\hsize}{!}{Set mask bit $x = 0$, if $\mathcal{U}_2(s)$ is even; otherwise    $x=1$.}

Set fingerprint index $l=\mathcal{U}_3(s)\ \mathrm{mod}\ L$.

Recover mark bit $B = \overline{\mathbf{r}_i}[t,k]\oplus \mathbf{r}_i[t,k]$. 

Recover fingerprint bit $f_l = x\oplus B$.

$\mathbf{c}_1(l)++$, if $f_l=1$; otherwise $\mathbf{c}_0(l)++$.


}

}

\ForAll{$l\in[1,L]$}{

$\mathbf{f}(l)=1$, if $\mathbf{c}_1(l) > \mathbf{c}_0(l)$; otherwise, $\mathbf{f}(l)=0$.\label{line:majority} 
}

 Return the extracted fingerprint bit string $\mathbf{f}$.

		\caption{Fingerprint extraction procedure}
\label{algo:extract}
\end{algorithm}

\section{Proof of Proposition \ref{thm:cell_accuracy}}\label{sec:proof_cell_accuracy}
\begin{proof} The fingerprinting mechanism only changes the last $K$ bits of selected data entries, thus, we have

$\mathbb{E}\Big[\Big| \widetilde{\mathbf{r}_i[t]}-\mathbf{r}_i[t] \Big|\Big] \\
    =   \mathbb{E}\Big[\Big| \sum_{k=1}^K\widetilde{\mathbf{r}_i[t,k]}2^{\widetilde{\mathbf{r}_i[t,k]}-1}-\mathbf{r}_i[t,k]2^{\mathbf{r}_i[t,k]-1}\Big| \Big] \\ 
      =   \mathbb{E}\Big[\Big| \sum_{k=1}^K   (\mathbf{r}_i[t,k]\oplus B)2^{\mathbf{r}_i[t,k]\oplus B-1}-\mathbf{r}_i[t,k]2^{\mathbf{r}_i[t,k]-1} \Big|\Big] \\
    =  \resizebox{1\hsize}{!}{$  \mathbb{E}\Big[ \Big|\sum_{k=1}^K   (\mathbf{r}_i[t,k] +  B-  2\mathbf{r}_i[t,k] B)2^{\mathbf{r}_i[t,k] +  B-  2\mathbf{r}_i[t,k] B-1}-\mathbf{r}_i[t,k]2^{\mathbf{r}_i[t,k]-1} \Big|\Big]$} \\
     = \Big| \sum_{k=1}^K  \Big( (  1-  \mathbf{r}_i[t,k] )2^{ -  \mathbf{r}_i[t,k] }-\mathbf{r}_i[t,k]2^{\mathbf{r}_i[t,k]-1} \Big)\Big| p \\
     = \Big| \sum_{k=1}^K  \Big( \overline{ \mathbf{r}_i[t,k]}2^{ \overline{\mathbf{r}_i[t,k]}-1 }-\mathbf{r}_i[t,k]2^{\mathbf{r}_i[t,k]-1} \Big) \Big| p.\\$
    Since  $\sum_{k=1}^K   \overline{ \mathbf{r}_i[t,k]}2^{ \overline{\mathbf{r}_i[t,k]}-1 }$ is the decimal representation of the complement of the last $K$ bits of $\mathbf{r}_i$, and according to Definition \ref{def:bit_sensitivity}, $\sum_{k=1}^K  \Big( \overline{ \mathbf{r}_i[t,k]}2^{ \overline{\mathbf{r}_i[t,k]}-1 }-\mathbf{r}_i[t,k]2^{\mathbf{r}_i[t,k]-1} \Big)$ falls in the range of $[-\Delta, \Delta]$, so its absolute value falls in $[0,\Delta]$, which completes the proof. 
\end{proof}

\section{Proof of Proposition \ref{prop:att_inf_bound}}\label{sec:proof_att_inf_bound}
\begin{proof}[Proof of Proposition \ref{prop:att_inf_bound}]
Mathematically, we have

$\mathrm{InfCap}= \Pr\left(\boldsymbol{r}_i[t]=\zeta_1\Big|\mathcal{M}(\R),\R_{/\boldsymbol{r}_i[t]}\right)\\
 =\frac{  \Pr(\mathcal{M}(\R) | \boldsymbol{r}_i[t]=\zeta_1  ,\R_{/\boldsymbol{r}_i[t]})  \Pr(\boldsymbol{r}_i[t]=\zeta_1  ,\R_{/\boldsymbol{r}_i[t]}) }{\Pr(\mathcal{M}(\R),\R_{/\boldsymbol{r}_i[t]})}  \\
 =   \frac{\Pr(\mathcal{M}(\R) | \boldsymbol{r}_i[t]=\zeta_1  ,\R_{/\boldsymbol{r}_i[t]}) }{\Pr(\mathcal{M}(\R)|\R_{/\boldsymbol{r}_i[t]})}   \Pr(\boldsymbol{r}_i[t]=\zeta_1  |\R_{/\boldsymbol{r}_i[t]}) \\
 =   \boxed{\frac{\Pr(\mathcal{M}(\R) | \boldsymbol{r}_i[t]=\zeta_1  ,\R_{/\boldsymbol{r}_i[t]}) }{\Pr(\mathcal{M}(\R)|     \boldsymbol{r}_i[t]=\zeta_2,    \R_{/\boldsymbol{r}_i[t]})}   }   
\frac{\Pr(\mathcal{M}(\R)|     \boldsymbol{r}_i[t]=\zeta_2,    \R_{/\boldsymbol{r}_i[t]})}{\Pr(\mathcal{M}(\R)|     \R_{/\boldsymbol{r}_i[t]})} \\
\times\Pr(\boldsymbol{r}_i[t]=\zeta_1  |\R_{/\boldsymbol{r}_i[t]})\\
 \stackrel{(a)}=     \boxed{e^\epsilon}     \frac{  \Pr(\mathcal{M}(\R),     \boldsymbol{r}_i[t]=\zeta_2,    \R_{/\boldsymbol{r}_i[t]})    \Pr(\R_{/\boldsymbol{r}_i[t]})     }{ \Pr(\boldsymbol{r}_i[t]=\zeta_2,    \R_{/\boldsymbol{r}_i[t]}) \Pr(\mathcal{M}(\R),     \R_{/\boldsymbol{r}_i[t]})      } \\
 \times \Pr(\boldsymbol{r}_i[t]=\zeta_1  |\R_{/\boldsymbol{r}_i[t]})\\
 =     e^\epsilon \boxed{\Pr(\boldsymbol{r}_i[t]=\zeta_2|\mathcal{M}(\R),\R_{/\boldsymbol{r}_i[t]})} \frac{\Pr(\boldsymbol{r}_i[t]=\zeta_1  |\R_{/\boldsymbol{r}_i[t]})}{\Pr(\boldsymbol{r}_i[t]=\zeta_2|    \R_{/\boldsymbol{r}_i[t]})}\\
 \stackrel{(b)}  \leq  e^\epsilon \boxed{(1-\mathrm{ InfCap})} \psi,$
which can be further simplified as  $\mathrm{ InfCap} \leq  \frac{ \psi e^\epsilon}{ \psi e^\epsilon +1}$.  Note that at line $(a)$, we apply the definition of $\epsilon$-entry-level differential privacy. At line $(b)$,  $ \psi = \frac{\Pr(\boldsymbol{r}_i[t]  |\R_{/\boldsymbol{r}_i[t]})}{\Pr(\boldsymbol{r}_i[t]'|    \R_{/\boldsymbol{r}_i[t]})}$ is the ratio between prior probabilities. 
\end{proof}

\section{Proof of Proposition \ref{thm:joint}}
\begin{proof}\label{sec:proof_joint}

We let binary vector $\mathbf{\tau}_t\in\{0,1\}^K$  (or $\mathbf{\tau}_z\in\{0,1\}^K$) represent the $K$ mark bits embedded to entries of the $t$th (or $z$th) attribute. 
$||\cdot||_0$ denotes the $l_0$ norm, which counts the number of nonzero entries in a vector. $\pi_{[2]}$ and $\omega_{[2]}$ are the binary representations of $\pi$ and $\omega$, respectively. Then, we have
\begin{gather*}
    \begin{aligned}
   &     \Pr(\widetilde{\R[t]}=\pi,\widetilde{\R[z]}=\omega)\\
= & \sum_{\mathbf{\tau}_t\in\{0,1\}^K} \sum_{\mathbf{\tau}_z\in\{0,1\}^K}  \Pr\Big(\R[t] = \pi_{[2]}\oplus \mathbf{\tau}_t,\R[z] = \omega_{[2]}\oplus \mathbf{\tau}_z\Big)\\
& \times p^{||\mathbf{\tau}_t||_0}(1-p)^{K-||\mathbf{\tau}_t||_0}p^{||\mathbf{\tau}_z||_0}(1-p)^{K-||\mathbf{\tau}_z||_0}\\
= & \Pr\Big(\R[t] = \pi_{[2]},\R[z] = \omega_{[2]}\Big) (1-p)^{2K}\\
& + \resizebox{0.95\hsize}{!}{$\sum_{\mathbf{\tau}_t\in\{0,1\}^K/\mathbf{0}} \sum_{\mathbf{\tau}_z\in\{0,1\}^K/\mathbf{0}}  \Pr\Big(\R[t] = \pi_{[2]}\oplus \mathbf{\tau}_t,\R[z] = \omega_{[2]}\oplus \mathbf{\tau}_z\Big)$}\\
& \times p^{||\mathbf{\tau}_t||_0}(1-p)^{K-||\mathbf{\tau}_t||_0}p^{||\mathbf{\tau}_z||_0}(1-p)^{K-||\mathbf{\tau}_z||_0}.
    \end{aligned}
\end{gather*}

By denoting the second summand in the above equation as $\diamondsuit$, we have\\
$\diamondsuit \leq  \Pr_{\max}(\R[t],\R[z])\times \Big( \sum_{\mathbf{\tau}_t\in\{0,1\}^K/\mathbf{0}} \sum_{\mathbf{\tau}_z\in\{0,1\}^K/\mathbf{0}}\\ p^{||\mathbf{\tau}_t||_0}(1-p)^{K-||\mathbf{\tau}_t||_0}p^{||\mathbf{\tau}_z||_0}(1-p)^{K-||\mathbf{\tau}_z||_0}\Big)\\
 =\resizebox{0.95\hsize}{!}{$\Pr_{\max}(\R[t],\R[z])\times \Big( \sum_{\mathbf{\tau}_t\in\{0,1\}^K/\mathbf{0}}p^{||\mathbf{\tau}_t||_0}(1-p)^{K-||\mathbf{\tau}_t||_0}\Big)$} \\ \times  \Big(\sum_{\mathbf{\tau}_z\in\{0,1\}^K/\mathbf{0}}p^{||\mathbf{\tau}_z||_0}(1-p)^{K-||\mathbf{\tau}_z||_0}\Big) \\
  = \Pr_{\max}(\R[t],\R[z])\Big(1-(1-p)^K    \Big)^2$.\\
Similarly, we also have \resizebox{0.6\hsize}{!}{$\diamondsuit\geq \Pr_{\min}(\R[t],\R[z])\Big(1-(1-p)^K    \Big)^2$}, thus, the proof is completed.
\end{proof}

\section{Analysis of $P_\mathrm{rbst\_rnd}$ in Random Bit Flipping Attack}\label{sec:analysis_rnd}

First, we show how to determine $D$ (number of bit matches with the malicious SP's fingerprint) given $C$ (number of times a database can be shared), and $L$ (length of fingerprinting string). If the database owner   shares its database with $C$ different SPs. To make the extracted fingerprint have the most bit matches with the  malicious SP, it requires that the probability of having more than $D$ bit matches is higher than $1/C$, i.e., ${L \choose D}(\frac{1}{2})^D(\frac{1}{2})^{L-D}\geq \frac{1}{C}$, which can be solved analytically.

In order to show  $P_\mathrm{rbst\_rnd}$ is monotonically increasing with $p$, we  define function \resizebox{0.65\hsize}{!}{$f(m) = \sum_{w_l\in\mathcal{W},l\in[1,L]}\sum_{\mathcal{L}_D}\prod_{l\in\mathcal{L}_D} p_l(\frac{1}{L})^{w_l}$} ($m$ is embedded as a parameter of set $\mathcal{W}$, i.e., \resizebox{0.6\hsize}{!}{$\mathcal{W} = \{w_1,w_2,\cdots,w_L>0|\sum_{l=1}^Lw_l=m\}$}). Then, $P_{\mathrm{rbst\_rnd}}$ represents the expected value of $f(m), m\sim\mathrm{Binomial}(NKT,2p)$. As a result, it is sufficient to show that $f(m)$ is monotonically increasing with $m$. First, we observe  that $p_l$ ($p_l = \sum_{q=0}^{\floor{\frac{w_l}{2}}}{w_l \choose q}\gamma_{\mathrm{rnd}}^q(1-\gamma_{\mathrm{rnd}})^{w_l-q}$) is the cumulative distribution function  of a binomial distribution  function, which is monotonically increasing with $w_l$, thus the multiplication of all $p_l$'s, i.e., $\prod_{l\in\mathcal{L}_D}p_l$ is increasing with $m=\sum_lw_l$. Second, it is easy to check that the carnality of $\mathcal{W}$ is $m!S(w,L)$, where $S(w,L)$ represents Stirling number of the second kind (i.e., the number of ways to partition a set of $w$ objects into $L$ non-empty subsets) \cite{graham1989concrete}. Since $m!$ grows faster then $L^{w_l}$ as $m$ increases,\footnote{In real-life applications, we have $m\gg L\gg \ln N$.} we can conclude that $f(m)$ is monotonically increasing with $m$ (the number of fingerprinted bit positions). When $0<p<0.5$, $P_{\mathrm{rbst\_rnd}}$ can be characterized as the summation of monotonically increasing functions with respect to $m$ and $p$,  which suggests that the higher the value of $p$, the more robust of the proposed fingerprinting mechanism is against the random bit flipping attack.

\section{Analysis of $G$ in the Correlation Attack}\label{sec:G_vs_p}
As per proposition \ref{thm:joint}, $|\Pr(  \widetilde{ \R[t]}=\pi, \widetilde{\R[z]} = \omega  )-\Pr(\R[t] = \pi,\R[z] = \omega)|$ falls in the range of $[0,\max\{|A|,|B|\}]$, where $A$ and $B$, respectively,  are\\
$\resizebox{1\hsize}{!}{$A =\Pr\Big(\R[t] = \pi,\R[z] = \omega\Big)\Big((1-p)^{2K}-1\Big)+\Pr_{\min}(\R[t],\R[z])\Big(1-(1-p)^K    \Big)^2 $}$,   $\resizebox{1\hsize}{!}{$B = \Pr\Big(\R[t] = \pi,\R[z] = \omega\Big)\Big((1-p)^{2K}-1\Big)+\Pr_{\max}(\R[t],\R[z])\Big(1-(1-p)^K    \Big)^2$}$. Without any assumption on the database, we consider each of the point $\tau$ (threshold) in $[0,\max\{|A|,|B|\}]$ has equal probability density \cite{ji2021curse}. Thus,  
$G = \frac{1-\prod_{z\in[1,T],z\neq t}\prod_{\omega}\frac{\tau}{\max\{|A|,|B|\}}}{(1-(1-p)^K)\Pr(\R[t]=\pi)}$. Let $\lambda = 1-(1-p)^K\in[0,1-(\frac{1}{2})^K]$, then, we have
\\
\resizebox{1\hsize}{!}{$A = \Pr\Big(\R[t] = \pi,\R[z] = \omega\Big)\Big((1-p)^{2K}+1\Big)(-\lambda)+\Pr_{\min}(\R[t],\R[z])\lambda^2$},\\
\resizebox{1\hsize}{!}{$B = \Pr\Big(\R[t] = \pi,\R[z] = \omega\Big)\Big((1-p)^{2K}+1\Big)(-\lambda)+\Pr_{\max}(\R[t],\R[z])\lambda^2$}. Thus, 
$G = \frac{1-(\frac{\tau}{\mathcal{O}(\lambda)})^{\sum_{z\in[1,T],z\neq t}k_z}}{\mathcal{O}(\lambda)}$, and  $k_z$ is the number of possible instances of attribute $z$. Since both the numerator and denominator increases with $p$, but the denominator grows with a much higher rate, $G$ decreases as $p$ increases.




\section{Proof of Theorem \ref{thm:proof_qualified}}\label{sec:proof_qualified_section}
\begin{proof}[Proof]
Suppose that  Algorithm \ref{algo:DeterminingQualifiedSP}  terminates with $l$ outputs  (it takes $l$ tries to determine $ID_{\mathrm{internal}}^c$, leading to a ``TRUE'' condition for the noisy comparison),   i.e., $\mathbf{a} = [a_1,a_2,\cdots,a_l] = \{\bot\}^{l-1}\cup \{\top\}$. 
By defining   
\begin{equation*}
    \begin{aligned}
        &f_i(\R,z) = \Pr(||\mathcal{M}_i(\R)-\R||_{1,1} + \mu_i< \Gamma+z_i),\\
         &g_i(\R,z) = \Pr(||\mathcal{M}_i(\R)-\R||_{1,1} +\mu_i \geq \Gamma+z_i),
    \end{aligned}
\end{equation*}
where $z_i$ is an instance of $\rho_i$ generated at Line \ref{line:generate_rho} in Algorithm \ref{algo:DeterminingQualifiedSP}. Then, we can  have\\
\begin{gather*}
    \begin{aligned}
     &  \frac{\Pr\Big(\mathrm{DetermineTheInternalIDforOneSP}(\R) = \mathbf{a}\Big)}{\Pr\Big(\mathrm{DetermineTheInternalIDforOneSP}(\R') = \mathbf{a}\Big)}\\
  =     & \frac{\int_{-\infty}^{\infty}\Pr(\rho_i=z_i)\prod_{i=1}^{l-1}f_i(\R,z_i) g_l(\R,z_i)dz_i}{\int_{-\infty}^{\infty}\Pr(\rho_i=z_i)\prod_{i=1}^{l-1}f_i(\R',z_i) g_l(\R',z_i)dz_i}\\
 \stackrel{(*)}= & \frac{\int_{-\infty}^{\infty}\Pr(\rho_i=z_i-\Delta)\prod_{i=1}^{l-1}f_i(\R,z_i-\Delta) g_l(\R,z_i-\Delta)dz_i}{\int_{-\infty}^{\infty}\Pr(\rho_i=z_i)\prod_{i=1}^{l-1}f_i(\R',z_i) g_l(\R',z_i)dz_i}\\=&\clubsuit,
    \end{aligned}
\end{gather*}
where  $(*)$ is obtained by changing all the integration variables, i.e.,  $z_i$'s, to $(z_i-\Delta)$'s, $\forall i\in\{1,2,3,\cdots\}$. Next, we investigate the three parts of the integrand in the numerator of $\clubsuit$ separately. 

First, we have $\Pr(\rho_i = z_i-\Delta)\leq e^{\epsilon_3} \Pr(\rho_i=z_i)$, as $\rho_i$ is attributed to a Laplace distribution whose parameter is calibrated using $\Delta$. 

Second, suppose $\R$ and $\R'$ differs at $r_{ij}$ and $r_{ij}'$. Then, \resizebox{1\hsize}{!}{$||\mathcal{M}_i(\R')-\R'||_{1,1}-||\mathcal{M}_i(\R)-\R||_{1,1} = |\widetilde{r_{ij}'} - r_{ij}'|-|\widetilde{r_{ij}} - r_{ij}|\leq \Delta$}, where $\widetilde{r_{ij}}$ (or $\widetilde{r_{ij}'}$) is the fingerprinted version of $r_{ij}$ (or $r_{ij}'$). The equality follows from that for  any specific SP (which uniquely determines a  pseudorandom seed),  Algorithm \ref{algo:fingerprint_sp_n} will select exactly the same bit positions in both $\R$ and $\R'$ to insert the fingerprint, and  all selected bits except for the different entry between $\R$ and $\R'$ will also be replaced with the exact same bit values. The inequality is because both $|\widetilde{r_{ij}} - r_{ij}|$ and $|\widetilde{r_{ij}'} - r_{ij}'|$ are upper bounded by $\Delta$. 
As a result, for the second part  in $\clubsuit$, we   
obtain $f_i(\R,z_i-\Delta) =$ 
\resizebox{1\hsize}{!}{$\Pr\big(||\mathcal{M}_i(\R)-\R||_{1,1} + \mu_i < \Gamma+z_i-\Delta\big) \leq \Pr\big(||\mathcal{M}_i(\R')-\R'||_F +\mu_i-\Delta < \Gamma+z_i-\Delta\big)$} $=f_i(\R',z_i)$, and the inequality holds  since we replace $||\mathcal{M}_i(\R)-\R||_{1,1}$ by a smaller value, i.e., $||\mathcal{M}_i(\R')-\R'||_F-\Delta$, which decreases the probability. 

Third, since $\mu_i$ is a Laplace noise, which is also calibrated using $\Delta$, we have  \resizebox{0.7\hsize}{!}{$g_l(\R,z_i-\Delta) = \Pr\Big(||\mathcal{M}_i(\R)-\R||_{1,1}+\mu_i  \geq \Gamma+z_i-\Delta\Big)$} \resizebox{1\hsize}{!}{$\leq e^{\epsilon_2}\Pr\Big(||\mathcal{M}_i(\R')-\R'||_{1,1}+\mu_i -\Delta  \geq \Gamma+z_i-\Delta\Big)  = e^{\epsilon_2} g_l(\R',z_i)$.}  \resizebox{1\hsize}{!}{Hence, $\clubsuit\leq \frac{\int_{-\infty}^{\infty}e^{\epsilon_3} \Pr(\rho_i=z_i)\prod_{i=1}^{l-1}f_i(\R',z_i)  e^{\epsilon_2} g_l(\R',z_i)dz_i}{\int_{-\infty}^{\infty}\Pr(\rho_i=z_i)\prod_{i=1}^{l-1}f_i(\R',z_i) g_l(\R',z_i)dz_i}   = e^{\epsilon_2+\epsilon_3}$}.
\end{proof}

\section{Database Sharing with Multiple SPs}\label{sec:appendix_multiple_shares}

In Algorithm \ref{algo:MultipleFingerprintedDatabasesReleasing}, we summarizes the steps of sharing fingerprinted databases with $C$ SPs (discussed in Section \ref{sec:multiple_shares}).

\begin{algorithm}
\small
\SetKwInOut{Input}{Input}
\SetKwInOut{Output}{Output}
\caption{\resizebox{0.8\hsize}{!}{Share Fingerprinted Databases with $C$ SPs}}\label{algo:MultipleFingerprintedDatabasesReleasing}
\Input{Original database $\R$, fingerprinting scheme $\mathcal{M}$,   sequence number of   SPs, i.e., $\{1,2,\cdots,C\}$,   threshold $\Gamma$, and privacy budget $\epsilon$, $\epsilon_2$,  $\epsilon_3$ and $\delta'$.} 
\Output{$\{a_1,a_2,a_3,\cdots,a_C\}$.}

Set   $count=0$.


\ForAll{\fbox{$c$th SP $c\in\{1,2\cdots,C\}$}}{

\ForAll{$i\in\{1,2,3,\cdots\}$}{

Generate   an instance of internal ID for the $c$th SP via $ID_{\mathrm{internal}}^c = Hash(\mathcal{K}|c|i)$.

Generate   $\mathcal{M}_i^c(\R)$ by calling  Algorithm \ref{algo:fingerprint_sp_n} with $ID_{\mathrm{internal}}^c$ and 
privacy budget $\epsilon$. 

Sample $\mu_i\sim\mathrm{Lap}(\frac{\Delta}{\epsilon_2})$ and $\rho_i\sim\mathrm{Lap}(\frac{\Delta}{\epsilon_3})$.


\eIf{$||\mathcal{M}_i^c(\R)-\R||_{1,1} +\mu_i \geq \Gamma+ \rho_i$}{\fbox{Output $a_i = \mathcal{M}_i^c(\R)$}.\label{line:output_fpdp} 

}{Output $a_i = \bot$.
}

}

}

\end{algorithm}

\section{Proof of Theorem \ref{thm:multiple_share_is_dp}}\label{sec:multiple_share_is_dp_proof}
\begin{proof}[Proof]  
Algorithm \ref{algo:MultipleFingerprintedDatabasesReleasing} is the composition of $C$ rounds of Algorithm   \ref{algo:DeterminingQualifiedSP} together with $C$ rounds of Algorithm   \ref{algo:fingerprint_sp_n}. According to   Theorem \ref{thm:comp}, $C$ rounds of Algorithm   \ref{algo:DeterminingQualifiedSP} and $C$ rounds of Algorithm   \ref{algo:fingerprint_sp_n} are   $(\sqrt{2C\ln(\frac{1}{\delta'})}(\epsilon_2+\epsilon_3)+C(\epsilon_2+\epsilon_3)(e^{\epsilon_2+\epsilon_3}-1),\delta'$)-differentially private and $(\sqrt{2C\ln(\frac{1}{\delta'})}\epsilon +C\epsilon (e^{\epsilon }-1),\delta'$)-differentially private, respectively. Then, by simple composition of those two, we complete the proof.
\end{proof}

\section{Detailed Experiment Results: Tables and Plots}\label{sec:miss_exp}

\subsection{Pairwise Difference in Each Class}\label{sec:fraction_diff_result}


\begin{wraptable}{r}{.7\linewidth}
    \begin{minipage}{1\linewidth}
\begin{adjustbox}{width=1\textwidth}
\begin{tabular}{| *{6}{c|} }
    \hline
abs. diff.&   0   & 1 & 2 & 3 & 4\\
\hline
not\_recom & 40\% & 46.79\% & 7.08\% & 5.12\% & 1\%\\
recommend & 93.75\%  &6.25\% & 0 & 0 &0\\ 
very\_recom & 35.10\% &49.71\% & 10.66\%  & 4.39\% & 0.13\%\\
priority &36.04\%&44.65\%& 11.31\% &  5.07\%  & 2.93\%\\ 
spec\_prior&50.50\% & 37.19\%&  9.01\% &  3.29\% & 0\\
    \hline
\end{tabular}
  \end{adjustbox}
   \end{minipage}
       \captionsetup{font=footnotesize}
\caption{Fraction of pairwise absolute differences between instances of attributes.}
\label{table:pairwise_diff}
\end{wraptable} 
In Table \ref{table:pairwise_diff}, we summarize the results of  the  fraction  of  pairwise  absolute differences taking a specific value in each class. Note that this observation helps us to control the sensitivity as discussed in Section \ref{sec:setup}.

\subsection{Validating $P_{\mathrm{rbst_{rnd}}}$ Empirically}\label{sec:app_robust_rnd}
In Figure \ref{fig:robust_rnd}, we plot the numerical value of $P_{\mathrm{rbst\_rnd}}$ varies with different values of $p$ (reported in Table \ref{table:dpfp_vs_vannila}).

\begin{figure}[htb]
  \begin{center}
     \includegraphics[width= 0.8\columnwidth]{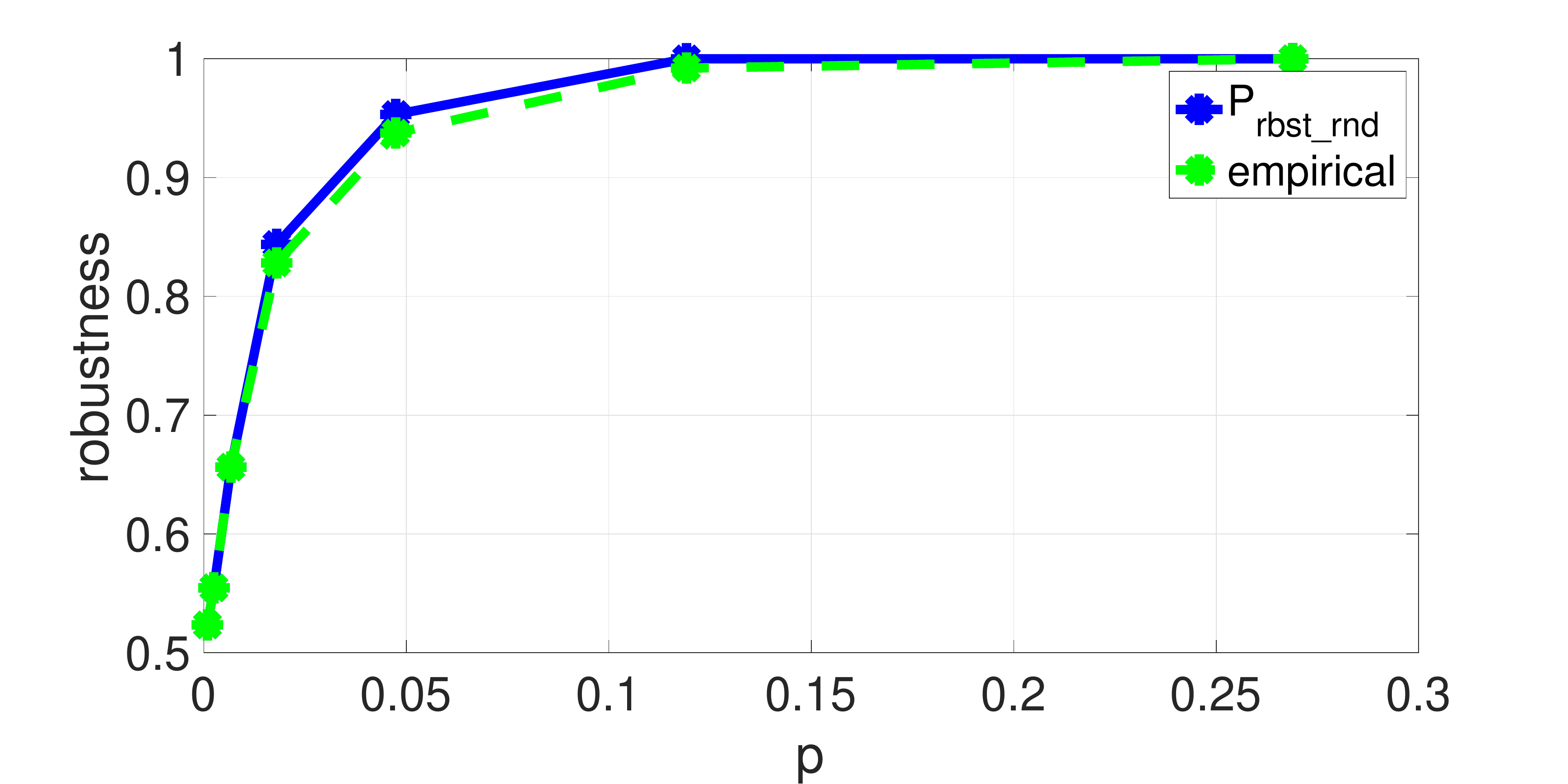}
      \end{center}
  \caption{\label{fig:robust_rnd} Theoretical and empirical values of  $P_{\mathrm{rbst\_rnd}}$   under different $p$.
}
\end{figure}

\subsection{Privacy Evaluation against Attribute Inference}\label{sec:app_att_inf_atk}

Here, we 
investigate the robustness of our proposed entry-level differentially-private fingerprinting scheme when it is subject to the attribute inference attack discussed in Section \ref{sec:privacy_against_att_inf}. Specifically, we aim to show that the attacker's inference capability for our considered database (in Section \ref{sec:setup}) is upper bounded by our theoretical findings   in  Proposition \ref{prop:att_inf_bound}. In particular, we consider the classical adversary in conventional differential privacy, which assumes data entries within a
dataset (database) are independent, i.e., data entries are mutually independent and identically distributed  (similar to    the ``independent tuple assumption'' adopted in   \cite{liu2016dependence}  - page 5 left column). Also, according to \cite{liu2016dependence}, under this adversary model, the auxiliary information $\R_{/\boldsymbol{r}_{i}[t]}$ (all the original entries in the database except for $\boldsymbol{r}_{i}[t]$) can be used to serve as sampling
values of $\boldsymbol{r}_{i}[t]$, which can be used to estimate the posterior 
probability $\Pr(\boldsymbol{r}_{i}[t] = \xi_1)$, where $\xi_1$ is any possible value that $\boldsymbol{r}_{i}[t]$ can take. 
Then, according to statistical inference,   the occurrence frequency of each value of $\xi_1$ is an unbiased estimator of $\Pr(\boldsymbol{r}_{i}[t] = \xi_1)$. Thus, the attacker's inference capability ($\mathrm{InfCap}$) defined in (\ref{eq:InfCap}) can be empirically  
computed from $\R_{/\boldsymbol{r}_{i}[t]}$ (note that \cite{liu2016dependence} applies the similar attack approach). 
In Figure \ref{fig:att_inf_atk_exp}, we compare the empirical results of $\mathrm{InfCap}$ 
for the inference of $\boldsymbol{r}_{i}[t]$ when $t$ (the attribute) is chosen as ``finance'' or ``housing'' with the theoretical upper bound. Clearly, the considered adversary cannot have higher inference capability than our derived upper bound. Note that we discuss how to deal with attribute inference attack that uses data correlation in Section \ref{sec:diss_att_corr} (we will further study it in future work). 
\begin{figure}[htb]
  \begin{center}
     \includegraphics[width= 0.8\columnwidth]{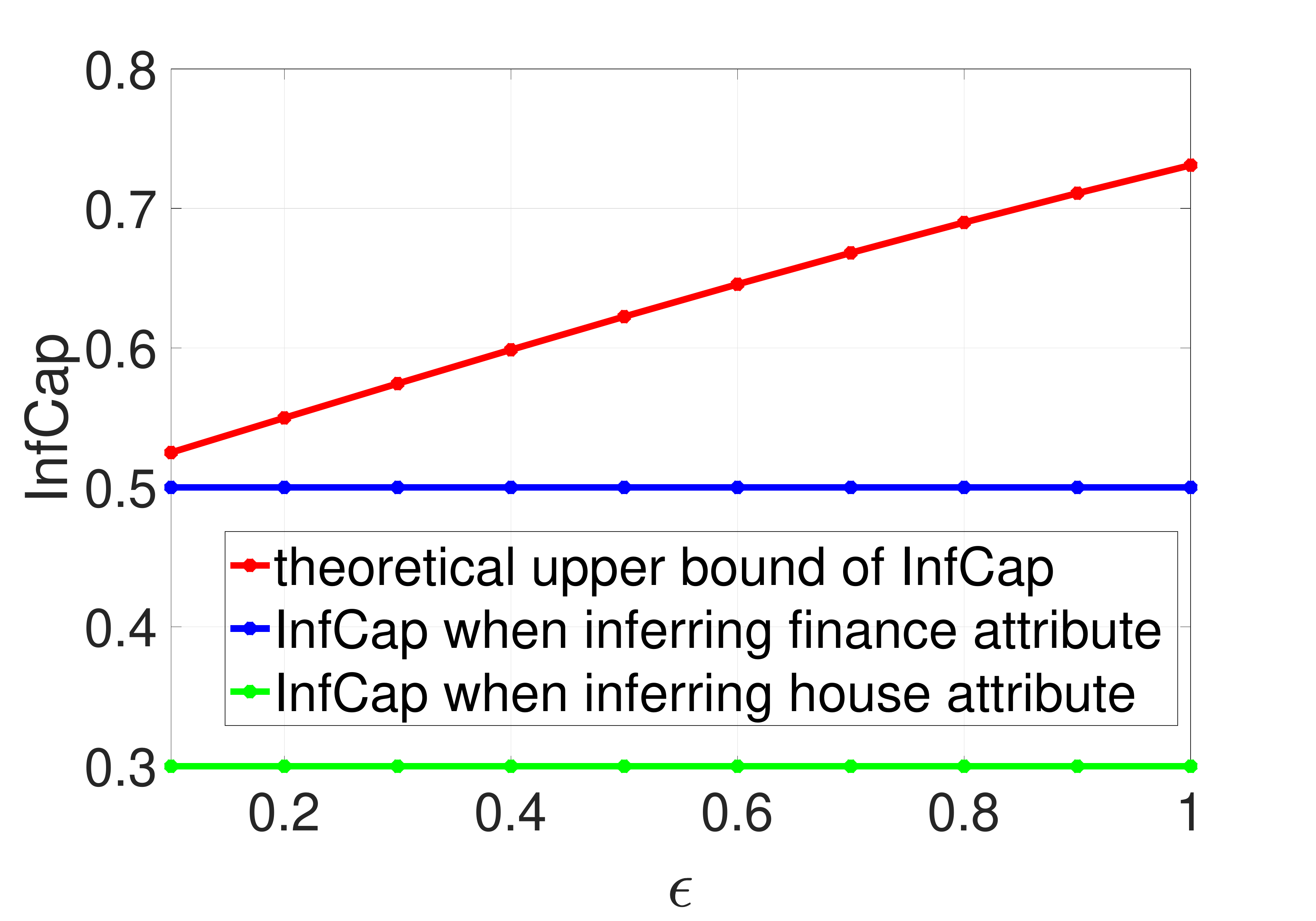}
      \end{center}
  \caption{\label{fig:att_inf_atk_exp} Empirical inference capability of the attacker v.s.   upper bound. 
}
\end{figure}

\subsection{Experiment Results on Task-independent Database Utility}\label{sec:exp_task_indpend}

In this experiment, we consider a lower privacy budget  (which provides a higher privacy guarantee and higher fingerprint robustness, as demonstrated in Figure \ref{fig:relationships}), and let  $\epsilon$ take values 0.25, 0.5, 0.75, and 1.  We do not consider moderate to high values of $\epsilon$ as in the previous section, because they return much higher utility values.  


\begin{table*}[htb]
\begin{center}
\resizebox{\columnwidth}{!}{
\begin{tabular}{| *{10}{c|} }
    \hline
 $\epsilon$&    $\infty$   & \multicolumn{2}{c|}{0.25} & \multicolumn{2}{c|}{0.5}
            & \multicolumn{2}{c|}{0.75}
                    & \multicolumn{2}{c|}{1} \\
    \hline
\% of changed entries &\cellcolor{Blue}  NA    &\cellcolor{Red}35.03\% &49.12\% & \cellcolor{Red}30.18\% &39.28\% &\cellcolor{Red}25.58\% &33.85\% &\cellcolor{Red}21.38\% &27.92\%   \\
Var. of 1st Att. &\cellcolor{Blue} 0.6667 &\cellcolor{Red}+0.0002 &+0.0392 & \cellcolor{Red}+0.0015 &+0.0268 &\cellcolor{Red}-0.0026 &+0.0152 &\cellcolor{Red}-0.0011 &+0.0122   \\ 
Var. of 2nd Att. &\cellcolor{Blue} 2.0002 &\cellcolor{Red}+0.0010 &+0.1644 & \cellcolor{Red}-0.0048 &+0.0424 &\cellcolor{Red}+0.0029 &+0.0178 &\cellcolor{Red}+0.0025 &+0.0231   \\ 
Var. of 3rd Att. &\cellcolor{Blue} 1.2501 &\cellcolor{Red}-0.0090 &+0.1023 & \cellcolor{Red}-0.0066 &+0.0632 &\cellcolor{Red}+0.0021 &+0.0432 &\cellcolor{Red}+0.0074 &+0.0154   \\ 
Var. of 4th Att. &\cellcolor{Blue} 1.2501 &\cellcolor{Red}-0.0128 &+0.1109 & \cellcolor{Red}-0.0167 &+0.0732 &\cellcolor{Red}+0.0009 &+0.0321 &\cellcolor{Red}-0.0003 &+0.0172   \\ 
Var. of 5th Att. &\cellcolor{Blue} 0.6667 &\cellcolor{Red}-0.0059 &+0.0412 & \cellcolor{Red}-0.0068 &+0.0321 &\cellcolor{Red}+0.0031 &+0.0178 &\cellcolor{Red}+0.0017 &+0.0092   \\ 
Var. of 6th Att. &\cellcolor{Blue} 0.2500 &\cellcolor{Red}-0.0000 &-0.0012 & \cellcolor{Red}-0.0000 &-0.0001 &\cellcolor{Red}-0.0000 &+0.0221 &\cellcolor{Red}-0.0000 &-0.0010   \\ 
Var. of 7th Att. &\cellcolor{Blue} 0.6667 &\cellcolor{Red}-0.0033 &+0.0432 & \cellcolor{Red}-0.0022 &0.0233 &\cellcolor{Red}-0.0005 &-0.0017 &\cellcolor{Red}-0.0007 &+0.0089   \\ 
Var. of 8th Att. &\cellcolor{Blue} 0.6667 &\cellcolor{Red}-0.0048 &+0.0321 & \cellcolor{Red}-0.0042 &0.0198 &\cellcolor{Red}-0.0039 &+0.0042 &\cellcolor{Red}-0.0011 &+0.0077   \\ 
    \hline
\end{tabular}
}
\end{center}
\caption{Comparison of database utility (measured in terms of the change of variance of each attribute in the database). Column colored in blue is the original variance of each attribute in $\R$. Columns colored in red are the change of variance caused by our proposed mechanism under various privacy budgets. Non-highlighted columns are the change of variance caused by applying local differentially-private perturbation  followed by fingerprinting proposed in \cite{li2005fingerprinting}.} 
\label{table:dpfp_vs_dpfollowedbyfp}
\end{table*}

We measure the general database utility as the change in the variance of each attribute caused by both mechanisms.   Table \ref{table:dpfp_vs_dpfollowedbyfp} 
summarizes the fraction of changed data entries and the change of variance under varying $\epsilon$. The column for $\epsilon = \infty$ (highlighted in blue) records the original variance of each attribute in $\R$, columns highlighted in red are the results achieved by our   mechanism, and the non-highlighted columns are the results obtained by applying  local  differentially-private perturbation  followed by fingerprinting (i.e., the two-stage approach we use for comparison). It is clear that given the same privacy guarantee and fingerprint robustness, our   mechanism is able to maintain higher database utility by making less changes in $\mathbf{R}$, i.e.,  the changes in variance caused by our mechanism are  generally 10 times smaller than that caused by the approach we use for comparison. This suggests that by merging differential privacy and fingerprinting as a unified mechanism, we can approximately  boost the statistical utility of the released nursery database by 10 times compared to the two-stage approach.

We also investigate the accuracy of the released database queries by considering the following customized queries: (i) $Q1$: the   data records which have more than 3 children (``more'') and the social conditions are  slightly  problematic (``slightly\_prob''), and (ii) $Q2$: the  data record whose parents' occupation are ``usual'' and the finance condition are inconvenient (``inconv''), i.e.,
\begin{equation*}
\begin{aligned}
    Q1: & \mathrm{SELECT\ PmyKey\ FROM\ Nursery\ WHERE } \\
    & \mathrm{children=more\ AND\ social=slightly\_prob}\\
        Q2: & \mathrm{SELECT\ PmyKey\ FROM\ Nursery\ WHERE } \\
    & \mathrm{parent=usual\ AND\ finance=incov}
\end{aligned}.
\end{equation*}
By varying $\epsilon$ from 0.25 to 2, we show the query accuracy (i.e., the fraction of matched data records with the query results obtained from the original database) achieved by our mechanism and that by the two-stage approach in Figure~\ref{fig:query_accuracy}. 
Again, we observe that our proposed mechanism can achieve more accurate database queries than the two-stage approach.

\begin{figure}[htb]
  \begin{center}
  \begin{tabular}{cc}
     \includegraphics[width= .45\columnwidth]{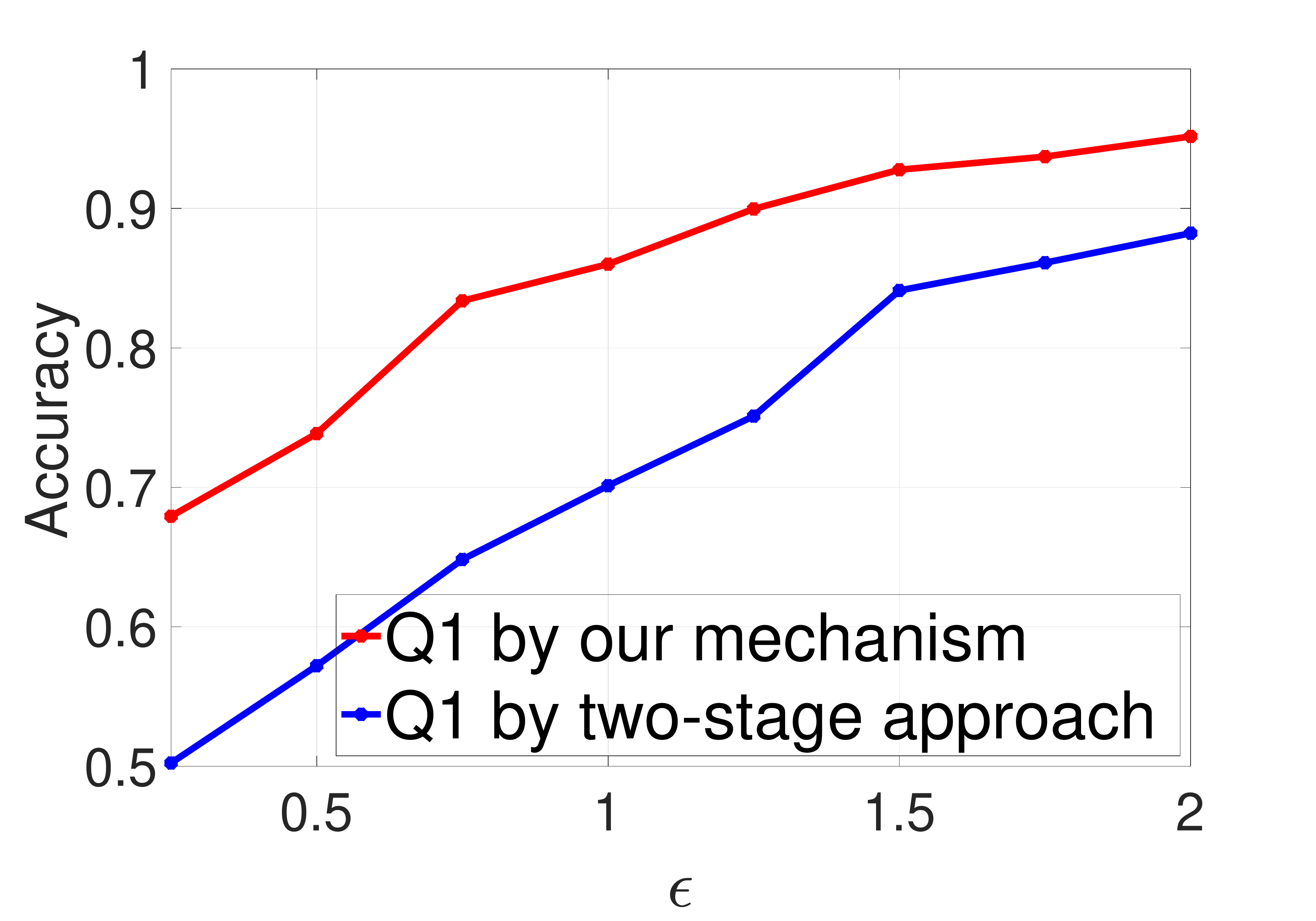}
		&
		     \includegraphics[width= .45\columnwidth]{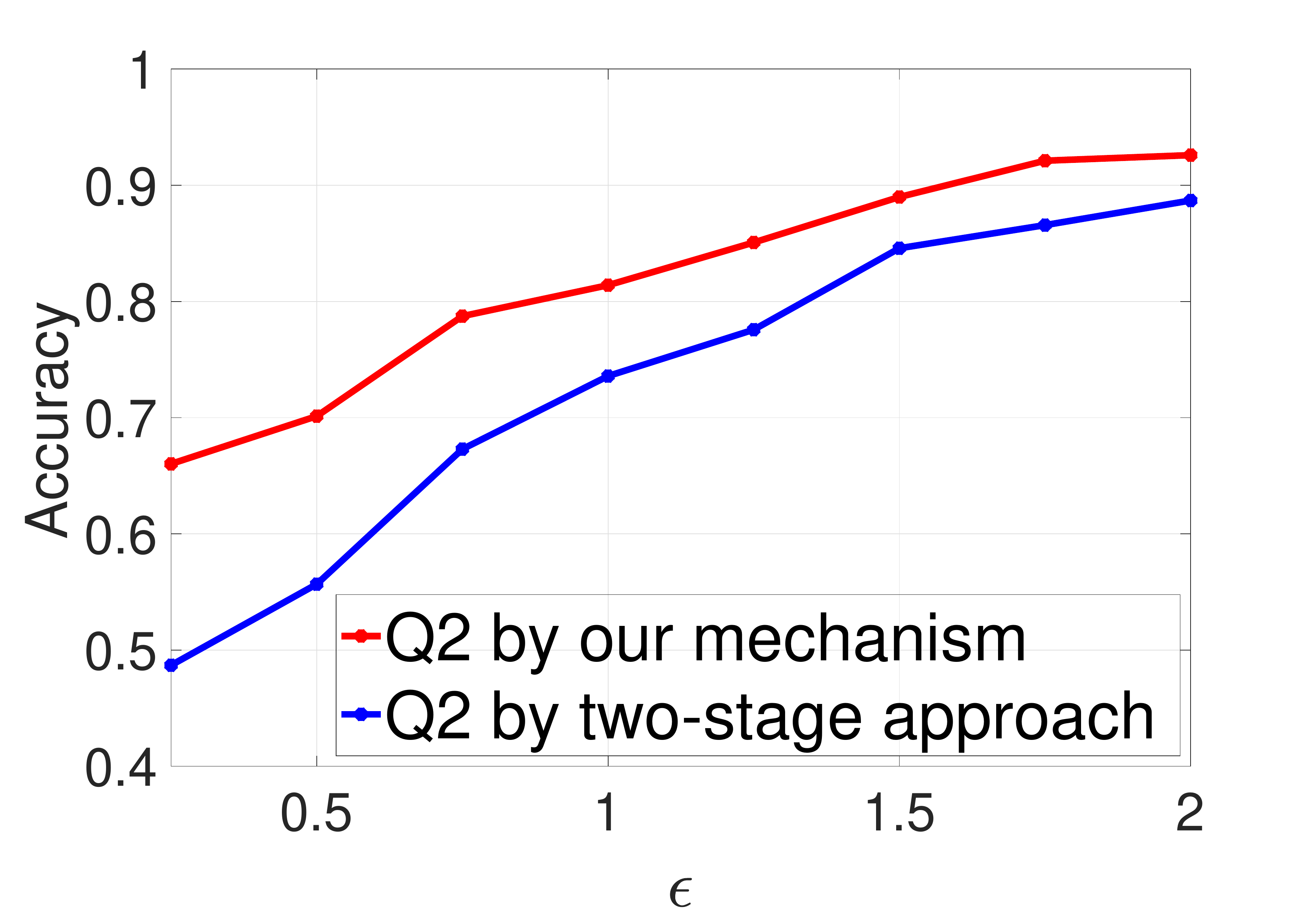}
 \\
    {\small  (a) Accuracy of $Q1$.}  &
    {\small  (b) Accuracy of $Q2$.} \\
        \end{tabular}
      \end{center}
  \caption{\label{fig:query_accuracy} Comparison of accuracy of the queries achieved by our mechanism and that by the two-stage approach.}
\end{figure}

\subsection{Experiment Results on Task-specific Database Utility}\label{sec:app_exp_task_specific}
In Figure \ref{fig:task-specific}, we show the fingerprinted database utility in the considered task-specific  applications. 
\begin{figure}[htb]
  \begin{center}
  \begin{tabular}{cc}
     \includegraphics[width= .35\columnwidth]{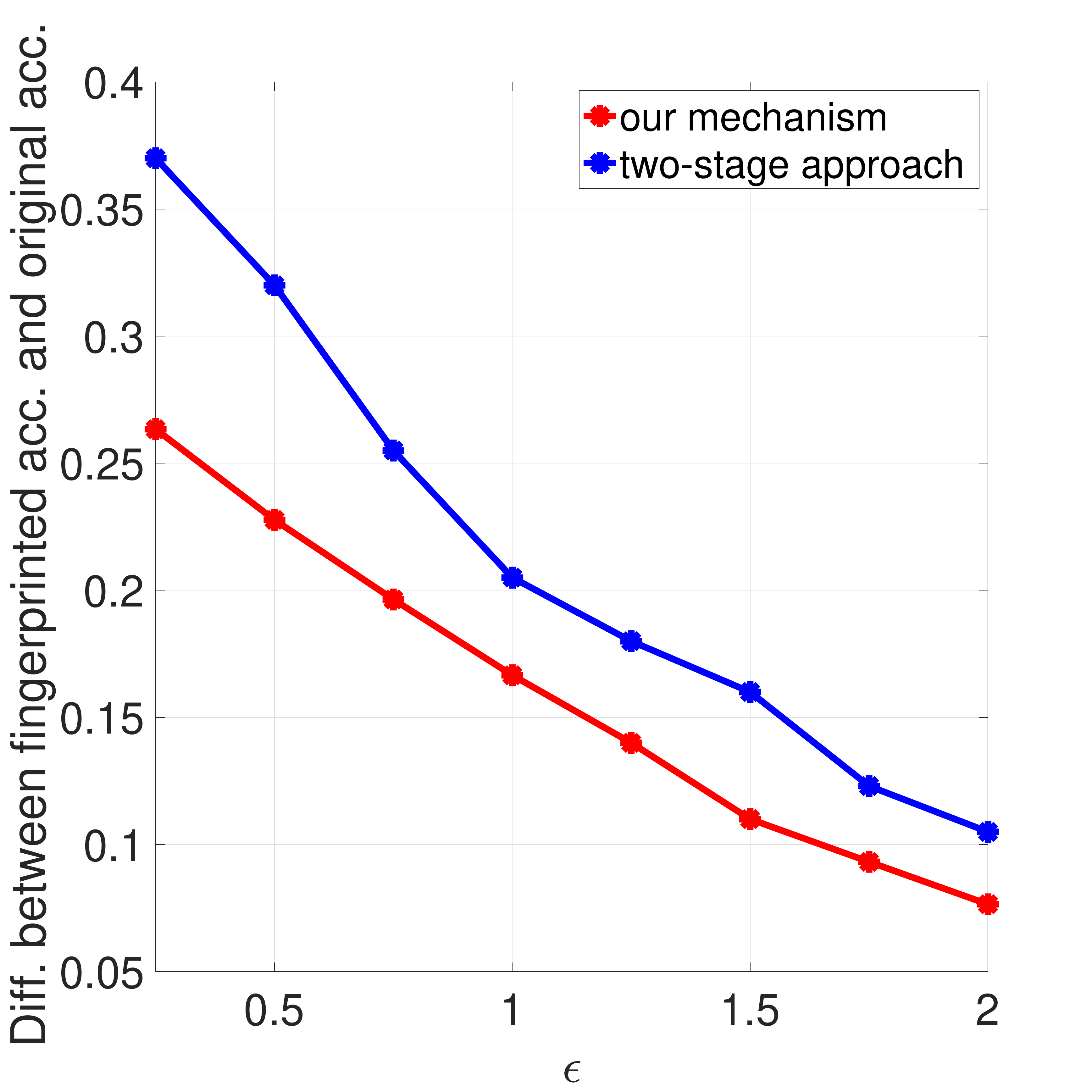}
&   \includegraphics[width= .35\columnwidth]{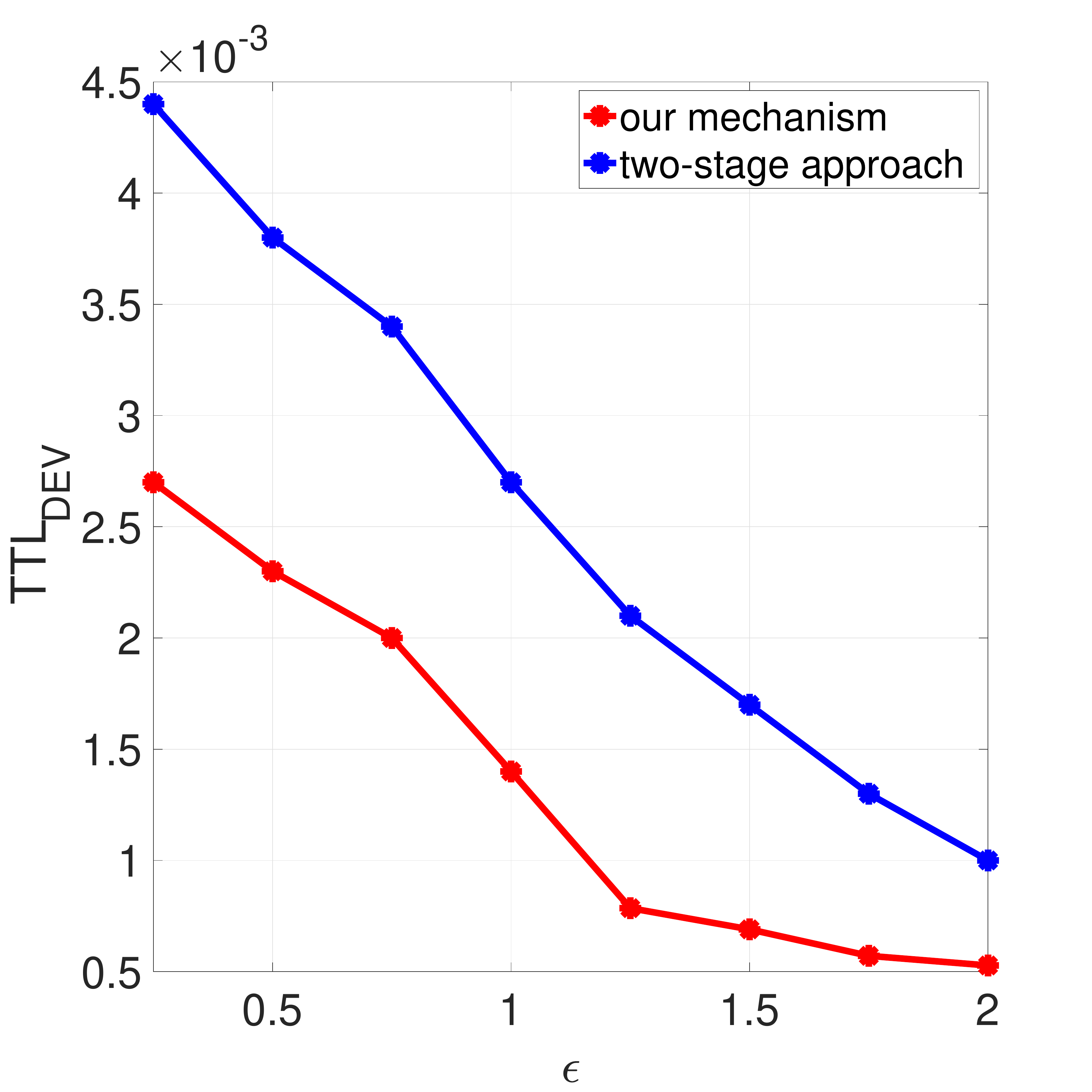}\\
	{\small  (a)   SVM Classification.}&
    {\small  (b)   PCA.} 
    \end{tabular}
      \end{center}
  \caption{\label{fig:task-specific}Database utility   in task-specific applications. (a) SVM: difference between fingerprinted testing accuracy and original testing accuracy versus $\epsilon$. (b) $\mathrm{TTL_{DEV}}$ versus $\epsilon$.}
\end{figure}

\subsection{Experiment Results on Fingerprint  Densities Comparison under Multiple Sharings}\label{sec:app_multiple_sahres}
In Figure \ref{fig:density_plot}, we plot the fingerprint densities of the shared databases when $\epsilon_2 = \epsilon_3$, i.e., $||\mathcal{M}(\R)-\R||_{1,1}$ as a result of sharing with 100 SPs.
\begin{figure}[htb]
  \begin{center}
     \includegraphics[width= 0.8\columnwidth]{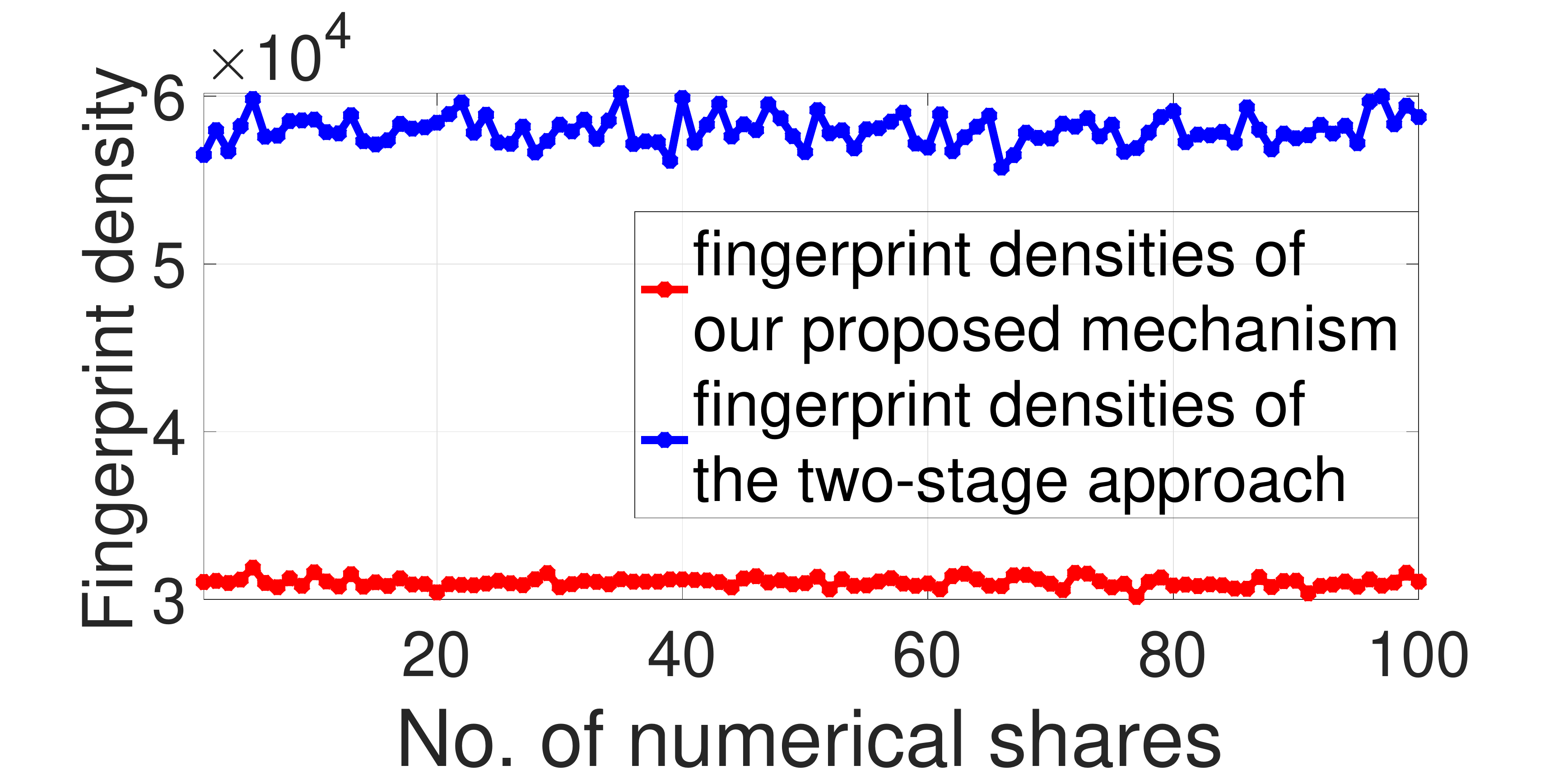}
      \end{center}
  \caption{\label{fig:density_plot}  Comparison of fingerprint densities of 100 fingerprinted databases achieved by the proposed mechanism and the two-stage approach.  
}
\end{figure}

\end{document}